\documentclass[12pt,reqno]{amsart}
\usepackage{amssymb, amsmath}
\usepackage{amsfonts}
\usepackage[margin=1.0in]{geometry}
\usepackage{esint}
\usepackage{enumitem}
\usepackage{latexsym}

\allowdisplaybreaks 
\theoremstyle{plain}

\newcommand{\C}{\mathbb C}

\newtheorem{thm}{Theorem}[section]
\newtheorem{prop}[thm]{Proposition}
\newtheorem{lem}[thm]{Lemma}
\newtheorem{cor}[thm]{Corollary}
\newtheorem{defn}[thm]{Definition}
\newtheorem{fact}[thm]{Fact}

\newtheorem{example}[thm]{Example}

\renewcommand{\proof}{{\em Proof\/}: }
\renewcommand{\qed}{\hfill $\fbox{}$}

\makeatletter\@addtoreset{equation}{section} \makeatother

\DeclareMathOperator{\Tr}{Tr}
\DeclareMathOperator{\vol}{vol}
\DeclareMathOperator{\ran}{ran}

\newcommand{\lb}{\label}
\newcommand{\be}{\begin{equation}}
\newcommand{\ee}{\end{equation}}
\newcommand{\ov}{\overline}
\newcommand{\dd}{\,d}
\newcommand{\mc}{\mathcal}
\newcommand{\A}{\mathcal A}
\newcommand{\B}{\mathcal B}
\newcommand{\K}{\mathcal K}

\begin{document}

\title{Pluriharmonic solutions to Yang-Mills equations: a $C^*$-algebras approach}
\author{Marius Beceanu, Sachin Munshi$^*$, and Rongwei Yang}
\thanks{$^*$ Corresponding author}
\address[Marius Beceanu]{New York State Department of Environmental Conservation, Albany, NY 12233, U.S.A.}
\email{mbeceanu@gmail.com}

\address[Sachin Munshi]{Department of Mathematics, Grand Canyon University (College of Humanities and Social Sciences), Phoenix, AZ 85017, U.S.A.}
\email{Sachin.Munshi@my.gcu.edu, sacmun86@gmail.com}

\address[Rongwei Yang]{Department of Mathematics and Statistics, University at Albany, the State University of New York, Albany, NY 12222, U.S.A.}
\email{ryang@albany.edu}

\subjclass[2010]{Primary 35-01, 35Q61; Secondary 47C15} 
\keywords{Yang-Mills equations, Maxwell's equations, differential form, $C^{*}$-algebra, skew-Hermitian form, normal operator}

\date{}

\maketitle

\begin{abstract}
This partially expository paper provides a view of Yang-Mills equations from the perspective of complex variables, operator theory, and $C^{*}$-algebras. Through operator-valued pluriharmonic and skew-Hermitian differential forms, it constructs a new class of instanton solutions. Furthermore, it provides a complex variable version of the Yang-Mills Lagrangian and the Belavin-Polyakov-Schwartz-Tyupkin instanton. 
\end{abstract}

\section{Introduction}

Yang-Mills equations, first put forth in \cite{YM54}, can be viewed as a non-abelian generalization of Maxwell's equations, where the gauge groups can be $\mathbf{SU}\left(2\right) \times \mathbf{U}\left(1\right)$, $\mathbf{SU}\left(3\right)$, and some other related nonabelian compact Lie groups. Solutions to Yang-Mills equations have been discovered in \cite{ablowitz1993self,ADHM,belavin1975pseudoparticle,belavin1979quantum,WY68} and several other papers. In physics, Yang-Mills theory unified electroweak theory with quantum chromodynamics (associated with the strong force), and it is the foundation for the Standard Model in particle physics \cite{CL,JW}.  Since the 1980s, mathematical physicists and mathematicians garnered deep interest in Yang-Mills theory and its applications to the study of $4$-dimensional manifolds and low dimensional topology \cite{baez1994gauge,Don83,Don90,FU84,Wi94}.

Classical Yang-Mills equations are typically defined in the following manner \cite{ablowitz1993self,munshi2020maxwell}. 
Consider a vector bundle $E$ over the Euclidean space $\mathcal{M}=\mathbb{R}^{4}$, with fibers isomorphic to
the Lie algebra $\mathfrak{g}$ with gauge group $G$. In standard coordinates $\mathbf{x}=(x^{\mu}), 0\leq \mu \leq 3$, consider the $\mathfrak{g}$-valued {\em connection}
\[
A=\sum_{\mu=0}^{3}A_{\mu}(\mathbf{x})dx^{\mu}.
\]
The associated {\em curvature field} $F= dA + A\wedge A$ can be written as
\[
F=\sum_{\mu,\nu=0}^{3}F_{\mu\nu}(\mathbf{x})dx^{\mu} \wedge dx^{\nu}.
\]
The metric tensor of $\mathbb{R}^{4}$ is the identity matrix $\text{diag}\left(1,1,1,1\right)=(\delta_{\mu\nu})=(\delta^{\mu\nu})$. Now, classical Yang-Mills equations are given by
 \begin{subequations}
\begin{eqnarray}
\partial_{\mu}A_{\nu}-\partial_{\nu}A_{\mu}-[A_{\mu},A_{\nu}] & = & F_{\mu\nu},\label{eq:1.1a}\\
\partial_{\mu}F^{\mu\nu}-[A_{\mu},F^{\mu\nu}] & = & J^{\nu},\label{eq:1.1b}
\end{eqnarray}
\end{subequations} 
where the $\mathfrak{g}$-valued coefficients $A_{\mu}$ are called {\em Yang-Mills potentials},  $F_{\mu\nu}$ are called {\em Yang-Mills fields} with $F^{\mu\nu}=\delta^{\mu\alpha}\delta^{\nu\beta}F_{\alpha\beta}$, and $J^{\nu}$ is referred to as a non-abelian {\em current}. This paper provides an alternative approach to Yang-Mills equations in terms of complex differential forms, operator theory, and $C^{*}$-algebras. Interestingly, this leads naturally to a new class of instanton solutions to Yang-Mills equations. To make this approach transparent, a sufficient amount of preparation is made before the discussion. The paper is organized as follows. 

\tableofcontents

\section{Complex Differential Forms and the Hodge Star Operator}

We first introduce some preliminaries on complex differential forms. Doing so will allow us to define Maxwell's equations and Yang-Mills equations in terms of these differential forms. Much of the information that is to follow can be found in \cite{baez1994gauge,darling1994differential,flanders1963differential,huybrechts2006complex,lindell2004differential,tu2017differential,westenholz1981differential}, as well as the references therein. In particular, similar preparation is made in \cite{munshi2020maxwell,munshi2022complex} for the discussion on Maxwell's equations.

Consider $\mathbb{C}^{n}$ with points given by coordinates $z=\left(z_{1},z_{2},\dots,z_{n}\right)$.
The {\em tangent space} of $\mathbb{C}^{n}$ is $$T\left(\mathbb{C}^{n}\right)=\text{span}\left\{ \frac{\partial}{\partial z_{k}},\frac{\partial}{\partial\bar{z}_{k}}:1\leq k\leq n\right\}, $$ and 
the {\em cotangent space} is given by $$T^{\ast}\left(\mathbb{C}^{n}\right)=\text{span}\left\{ dz_{k},d\bar{z}_{k}:1\leq k\leq n\right\}. $$ 
\begin{defn}
For any smooth function $f$ on a domain $\mathcal{M}\subset\mathbb{C}^{n}$, consider the linear operators $\partial,\bar{\partial},d$ defined
to act on $f$ as follows:
\begin{align*}
\partial f  =\sum_{k=1}^{n}\frac{\partial f}{\partial z_{k}}dz_{k},\hspace{0.5cm}
\bar{\partial}f =\sum_{k=1}^{n}\frac{\partial f}{\partial\bar{z}_{k}}d\bar{z}_{k},\hspace{0.5cm}
df =\left(\partial+\bar{\partial}\right)f.
\end{align*}
Here, $d$ is called the complex exterior differential operator,
while $\partial,\bar{\partial}$ are called the Dolbeault operators. 
\end{defn}

A smooth function $f$ on a domain $\mathcal{M}$ is said to be {\em holomorphic} if $\bar{\partial} f=0$ everywhere on $\mathcal{M}$. In fact, this implies that $f$ is analytic in each variable.
Note that if $f$ is holomorphic, then $df=\partial f$. The following fact is well-known.
\begin{fact}
$\partial^{2}=\bar{\partial}^{2}=d^{2}=0$.
\end{fact}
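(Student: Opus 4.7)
The plan is to first extend the operators $\partial$, $\bar\partial$, and $d$ from smooth functions to differential forms of all degrees. I would do this by declaring that $\partial$ and $\bar\partial$ annihilate each basic one-form $dz_k$ and $d\bar z_k$, and then imposing the graded Leibniz rule $\partial(\alpha\wedge\beta) = \partial\alpha\wedge\beta + (-1)^{\deg\alpha}\alpha\wedge\partial\beta$ (and likewise for $\bar\partial$ and $d$). Once this extension is in place, a straightforward induction on form-degree using the Leibniz rule reduces each of the three identities $\partial^2 = \bar\partial^2 = d^2 = 0$ to the corresponding statement applied to $0$-forms, i.e.\ to smooth functions.

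For a smooth function $f$, direct computation gives
\[
\partial^2 f \;=\; \partial\!\left(\sum_{k=1}^n \frac{\partial f}{\partial z_k}\,dz_k\right) \;=\; \sum_{j,k=1}^n \frac{\partial^2 f}{\partial z_j\,\partial z_k}\,dz_j\wedge dz_k.
\]
The coefficients $\partial^2 f / \partial z_j\,\partial z_k$ are symmetric in $(j,k)$ by Clairaut's theorem on equality of mixed partials, while $dz_j\wedge dz_k$ is antisymmetric in $(j,k)$. The contraction of a symmetric tensor with an antisymmetric one forces the sum to vanish, so $\partial^2 f = 0$. Replacing $z$ with $\bar z$ throughout gives $\bar\partial^2 f = 0$ by an identical argument.

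For $d^2$, since $d = \partial + \bar\partial$, I expand
\[
d^2 f \;=\; \partial^2 f \;+\; (\partial\bar\partial + \bar\partial\partial)f \;+\; \bar\partial^2 f,
\]
reducing the task to the cross term. A direct computation yields $(\partial\bar\partial + \bar\partial\partial)f = \sum_{j,k} (\partial^2 f / \partial z_j\,\partial\bar z_k)\,(dz_j\wedge d\bar z_k + d\bar z_k\wedge dz_j)$, which vanishes because $dz_j\wedge d\bar z_k = -d\bar z_k\wedge dz_j$. Hence $d^2 f = 0$.

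The only real subtlety is the extension step at the outset: the paper states the definitions of $\partial$, $\bar\partial$, and $d$ only on functions, whereas the assertion $\partial^2 = 0$ already requires knowing how $\partial$ acts on one-forms. Once this extension is pinned down by the graded Leibniz rule and the convention that $\partial(dz_k) = \bar\partial(dz_k) = \partial(d\bar z_k) = \bar\partial(d\bar z_k) = 0$, every remaining step is a one-line computation relying on Clairaut's theorem and the anticommutativity of the wedge product.
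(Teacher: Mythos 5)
Your proof is correct. Note that the paper itself offers no argument here --- it simply records the identity as a well-known fact --- so there is no "paper proof" to compare against; what you have written is the standard textbook justification. Your two key ingredients are exactly the right ones: Clairaut's theorem on equality of mixed partials paired with the antisymmetry of the wedge product kills $\partial^2 f$, $\bar\partial^2 f$, and the cross term $(\partial\bar\partial+\bar\partial\partial)f$, and the graded Leibniz rule together with $\partial(dz_k)=\bar\partial(dz_k)=\partial(d\bar z_k)=\bar\partial(d\bar z_k)=0$ reduces the general case to the case of functions, since for $\omega=\sum f_{I,J}\,dz_I\wedge d\bar z_J$ one gets $\partial^2\omega=\sum\partial^2 f_{I,J}\wedge dz_I\wedge d\bar z_J$. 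You are also right to flag that the paper defines $\partial$, $\bar\partial$, $d$ only on functions, so the statement $\partial^2=0$ on forms implicitly presupposes the extension you spell out; making that convention explicit is a genuine (if minor) improvement on the paper's presentation.
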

For a multi-index $I=(i_1,i_2\cdots, i_p)$, we assume $i_1<i_2<\cdots <i_p$ and define its length $|I|=p$.
\begin{defn}
The space of complex $\left(p,q\right)$-forms on $\mathcal{M}$ is defined as
$$\Omega^{p,q}\left(\mathcal{M}\right):=\bigg\{ \sum_{|I|=p,|J|=q}f_{I,J}dz_{I}\wedge d\bar{z}_{J}:\ f_{I,J}\in C^{\infty}\left(\mathcal{M}\right)\bigg\},\ \ \ 0\leq p, q \leq n.$$
\end{defn}
We may drop the ``$\mathcal{M}$'' from the definition for convenience. Clearly, $\Omega^{1,0}$ is the space of complex differential forms containing only the $dz_{k}$ terms, and $\Omega^{0,1}$
is that containing only the $d\bar{z}_{k}$ terms.
Then in terms of the exterior product on differential forms we have 
\begin{equation*}
\Omega^{p,q}= \underbrace{\Omega^{1,0} \wedge \cdots \wedge \Omega^{1,0}}_{p} \wedge
\underbrace{\Omega^{0,1} \wedge \cdots \wedge \Omega^{0,1}}_{q}.
\end{equation*}
Slightly abusing notation, we shall adhere to the following definition throughout this paper.
\begin{defn}
$\Omega^{k}:=\bigoplus_{p+q=k}\Omega^{p,q}$ is the space of
all complex differential forms of total degree $k=p+q$.
\end{defn}

\begin{defn}
For each $p>0$, the forms given by $\eta=\sum_{|I|=p}f_{I}dz_{I}$, where each
 $f_{I}$ is holomorphic, are called holomorphic p-forms,
and they form a holomorphic section of $\Omega^{p,0}$.
\end{defn}

Note that if $\eta$ is holomorphic then $\bar{\partial}\eta=0$. 

\subsection{Hodge Star Operator $\star$}

We briefly go over some basics of Hodge theory withholding any discussion
on topology or manifold theory. We identify $\C^n$ with ${\mathbb R}^{2n}$ with the representation
$z_k=x_{2k-2}+ix_{2k-1},\ k=1, 2, ..., n$. Given a nondegenerate self-adjoint $2n\times 2n$ matrix $g=\left(g_{st}\right)$, it induces a sesquilinear form $\langle \cdot, \cdot \rangle$ on the tangent space $T\left(\mathbb{R}^{2n}\right)$ with complex coefficients by the evaluations
\begin{equation}\label{tangents}
\left\langle \alpha \frac{\partial}{\partial x_{s}}, \beta \frac{\partial}{\partial x_{t}}\right\rangle_g:=\alpha \overline{\beta}g_{st},\ \ \ 0\leq s, t\leq 2n-1,
\end{equation}
where $\alpha$ and $\beta$ are complex numbers. Then on the cotangent space $T^*\left(\mathbb{R}^{2n}\right)$ one has the corresponding sesquilinear form given by
\begin{equation}\label{cotang}
\langle \alpha dx_s, \beta dx_t\rangle_g:=\alpha \overline{\beta}g^{st},\end{equation}
where $(g^{st})=g^{-1}$. Using the fact
\[\frac{\partial}{\partial z_{k}}=\frac{1}{2}\left(\frac{\partial}{\partial x_{2k-2}}-i\frac{\partial}{\partial x_{2k-1}}\right), \ \ \ k= 1, ..., n,\]
one may regard the aforementioned sesquilinear forms in (\ref{tangents}) and (\ref{cotang}) as sesquilinear forms on $T\left(\mathbb{C}^{n}\right)$ and $T^*\left(\mathbb{C}^{n}\right)$, respectively. Further, (\ref{cotang}) can be extended to a sesquilinear form $\langle \cdot, \cdot \rangle_g$ on $\Omega^{p}$ such that for $\eta=\eta_{1}\wedge\cdots \wedge \eta_{p}$ and $\xi=\xi_{1}\wedge\cdots \wedge \xi_{p}$, one has  \[
\langle \eta,\xi \rangle_g= \det \left(\langle \eta_{s},\xi_{t}\rangle_g\right)_{s,t=1}^{p},\ \ \ \eta, \xi \in \Omega^{p}.
\]
One observes that if the matrix $g$ is positive definite, then $\langle \cdot, \cdot \rangle_g$ is an inner product on the set of constant $p$-forms for each $1\leq p\leq 2n$.

\begin{defn}\label{de:hodge}
The Hodge star operator $\star:\Omega^{p}\left(\mathbb{C}^{n}\right)\rightarrow\Omega^{2n-p}\left(\mathbb{C}^{n}\right)$ is an isomorphism with respect to the bilinear form $\langle \cdot, \cdot \rangle_g$ such that for $\eta, \xi \in \Omega^{p}\left(\C^n\right)$ one has
\[\eta\wedge \star\ \bar{\xi}=\langle \eta, \xi \rangle_g \textup{vol}_g,\]
where $\textup{vol}_g=\left(\frac{i}{2}\right)^n\sqrt{|\det g|}dz_1\wedge d\bar{z}_{1}\wedge \cdots \wedge dz_n\wedge d\bar{z}_{n}$ is the volume form.
\end{defn}

For more details about the Hodge star operator in complex variables, we refer the reader to \cite{huybrechts2006complex}. The following examples are needed for subsequent study.

\subsection{Self-dual and Anti-self-dual Forms}

We now focus on the Hodge star operator on the complex differential forms on $\C^2$ with respect to the Euclidean metric and the Minkowski metric.

{\bf 1}. Under the Euclidean metric, we have 
$\star^2\omega=\omega$ for every $\omega\in \Omega^{2}(\mathbb{C}^{2})$.
Self-dual and anti-self-dual forms are the eigenvectors corresponding to the eigenvalues $1$ and $-1$, respectively, of the Hodge star operator on $\C^2$.
\begin{defn}
A differential form $\omega$ is said to be self-dual if
it is equal to its Hodge dual, i.e. $\star\ \omega=\omega$. If $\star\ \omega=-\omega$,
then $\omega$ is said to be anti-self-dual.
\end{defn}

Out of the complex differential forms we considered in the previous subsection, six of them correspond
to the pair $\left(p,q\right)$ such that $p+q=2$. Let $\Omega_{+}^{2},\Omega_{-}^{2}$ denote
the bases of self-dual and anti-self-dual forms, respectively, in
$\Omega^{2}$. Then it can be verified easily that 
\begin{align}
\Omega_{+}^{2} & =\text{span}\left\{ dz_{1}\wedge dz_{2},d\bar{z}_{1}\wedge d\bar{z}_{2},dz_{1}\wedge d\bar{z}_{1}+dz_{2}\wedge d\bar{z}_{2}\right\} ,\label{eq:2.3}\\
\Omega_{-}^{2} & =\text{span}\left\{ dz_{1}\wedge d\bar{z}_{2},dz_{2}\wedge d\bar{z}_{1},dz_{1}\wedge d\bar{z}_{1}-dz_{2}\wedge d\bar{z}_{2}\right\} .\nonumber 
\end{align}

{\bf 2}. Under the Minkowski metric, we have $\star^{2}\omega= -\omega$ for any $\omega\in \Omega^2(\mathbb{C}^{2})$.
Hence $\star$ has eigenvalues $i, -i$. With a bit of abuse of terminology, the eigenspaces corresponding to them are often also called self-dual and anti-self-dual forms, respectively. For consistency we shall also denote them by $\Omega^2_+$ and $\Omega^2_-$, respectively.
Then we have 
\begin{align}
\Omega_{+}^{2} & =\text{span}\left\{ dz_{1}\wedge dz_{2}+idz_2\wedge d\bar{z}_{1}, d{z}_{1}\wedge d\bar{z}_{1}+idz_{2}\wedge d\bar{z}_{2}, d{z}_{1}\wedge d\bar{z}_{2}+id\bar{z}_{1}\wedge d\bar{z}_{2}\right\} ,\label{eq:2.4}\\
\Omega_{-}^{2} & =\text{span}\left\{ dz_{1}\wedge dz_{2}-idz_2\wedge d\bar{z}_{1}, d{z}_{1}\wedge d\bar{z}_{1}-idz_{2}\wedge d\bar{z}_{2}, d{z}_{1}\wedge d\bar{z}_{2}-id\bar{z}_{1}\wedge d\bar{z}_{2}\right\} .\nonumber 
\end{align}
\noindent For details of the computation, we refer the readers to \cite{munshi2022complex}.

Henceforth, throughout this paper, we shall use the abbreviations SD and ASD to refer to differential forms that are self-dual and anti-self-dual, respectively.

\section{$C^*$-algebras, Covariant Derivative, and Covariant Co-derivative}

Here we go over a few basics regarding $C^{*}$-algebras, a subject extensively studied in the areas of operator theory and functional analysis, among others (see, for instance \cite{davidson1996calgebras} and \cite{murphy2014calgebras}). This will be the setting for two subsequent propositions \cite{munshi2020maxwell}, leading to the definitions of covariant derivative and covariant co-derivative. We begin with the following definition.
\begin{defn}
A $C^{*}$-algebra $\mathcal{A}$ is a Banach algebra with a map $a \rightarrow a^{*}$, $\forall a\in \mathcal{A}$ that has the following properties:
\end{defn}
\begin{itemize}
\item It is an {\em involution}, meaning $a^{**} =\left(a^{*}\right)^{*}=a$,
\item $\left(a+b\right)^{*}=a^{*}+b^{*}, \left(ab\right)^{*}=b^{*}a^{*},\ b\in \mathcal{A}$,
\item $\left(\lambda a\right)^{*}=\bar{\lambda}a^{*},\ \lambda\in \mathbb{C}$,
\item $\|a^{*}a\|= \|a\|^2$.
\end{itemize}
Some important examples of $C^{*}$-algebras include the finite-dimensional matrix algebra $\mathbf{M}_{n}\left(\mathbb{C}\right)$, the algebra $B\left(\mathcal{H}\right)$ of bounded linear operators on a complex Hilbert space $\mathcal{H}$, the algebra $K\left(\mathcal{H}\right)$ of compact operators on a separable infinite-dimensional Hilbert space $\mathcal{H}$, and the space $C\left(X\right)$ of complex-valued continuous functions on a locally compact Hausdorff space $X$.

\subsection{$C^{*}$-algebras, Trace, and Inner Product}

In the sequel, we shall always assume that $\mathcal{A}$ is a unital $C^{*}$-algebra.
\begin{defn}
An element $h\in\mathcal{A}$ is said to be positive if $h=a^{*}a \geq 0$ for some $a\in\mathcal{A}$. bounded linear functional $\phi$ on $\mathcal{A}$ is positive if $\phi\left(h\right)\geq 0$ for all $h\geq 0$.
\end{defn}
	
\begin{defn}
A positive linear functional $\phi$ on $\mathcal{A}$ is said to be a state if $\phi\left(I\right)= 1$.
\end{defn}
\begin{defn}
A tracial state $\textup{Tr}$ on $\mathcal{A}$ is a state on $\mathcal{A}$ such that
\begin{equation}
\textup{Tr}\left(ab\right)=\textup{Tr}\left(ba\right),\ \ \ a, b\in \mathcal{A}.
\label{eq:3.1}
\end{equation}
\end{defn}
In particular, the tracial state $\textup{Tr}$ is said to be {\em faithful} if $\textup{Tr}\left(a^{*}a\right)=0$ only if $a=0$. In this case, we can define an inner product on $\mathcal{A}$ by
\begin{equation}
\langle a,b \rangle = \textup{Tr}\left(ab^{*}\right),\ \ \ a,b \in \mathcal{A}.
\label{eq:3.2}
\end{equation}
If $\mu$ and $\eta$ are two $\mathcal{A}$-valued $p$-forms, then a natural extension of (\ref{eq:3.2}) is given by
\begin{equation}
\langle \mu,\eta \rangle \text{vol}=\text{Tr}(\mu \wedge \star(\eta^{*})). \label{eq:3.3}
\end{equation}
Further, if $\mathcal{M}$ is a complex $n$-dimensional manifold, and $\mu$ and $\eta$ are ${\mathcal A}$-valued differential $p$-forms, then an inner product can be defined by
\begin{equation}\label{iproduct}
(\mu, \eta)=\int_M \langle \mu,\eta \rangle \text{vol}=\int_M \text{Tr}(\mu \wedge \star(\eta^{*}))
\end{equation}
whenever the integral is convergent. The $L^2$-space of $\mathcal{A}$-valued differential $p$-forms, denoted by $L^2(\mathcal{M})\otimes \Omega^p(\mathcal{A})$, is the collection of such forms $\mu$ satisfying $\|\mu\|^2=(\mu, \mu)<\infty$. See \cite{douglas2018hermitian} for more details. Now we have the following proposition. 
\begin{prop}\label{prop3.5}
Let $\mathcal{A}$ be a $C^{*}$-algebra with a faithful tracial state $\textup{Tr}$ and the induced inner product in \textup{(\ref{eq:3.2})}. If $\eta\in \Omega^p(\mathcal{A})$, then 
$(\star \eta)^{*}= \star(\eta^{*})$.
\end{prop}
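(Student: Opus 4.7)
The plan is to reduce the claim to a scalar-form identity, namely that complex conjugation commutes with the Hodge star. First I would expand $\eta = \sum_I a_I \omega_I$ with $a_I \in \mathcal{A}$ and $\omega_I$ ranging over the standard scalar basis $p$-forms built from $dz_j$ and $d\bar{z}_k$. The natural extension of the $C^*$-involution to $\mathcal{A}$-valued forms --- the one compatible with the scalar case $\eta^* = \bar\eta$ and needed for (\ref{eq:3.3}) to define a sesquilinear pairing --- is $\eta^* = \sum_I a_I^* \bar\omega_I$.

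Since $\star$ acts only on the form part (commuting with left multiplication by elements of $\mathcal{A}$), both sides of the claim expand as
\begin{equation*}
(\star \eta)^* \;=\; \sum_I a_I^* \, \overline{\star \omega_I}, \qquad \star(\eta^*) \;=\; \sum_I a_I^* \, \star \bar\omega_I.
\end{equation*}
Thus the proposition reduces to the scalar identity $\overline{\star \omega} = \star \bar\omega$ for each basis form $\omega$.

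To establish that identity I would first verify that $\star$ is $\mathbb{C}$-linear: in Definition \ref{de:hodge} the right-hand side $\langle \eta, \xi \rangle_g \textup{vol}_g$ is antilinear in $\xi$ while $\xi \mapsto \bar\xi$ is antilinear, so $\star$ itself must be $\mathbb{C}$-linear. Next I would observe that $\star$ coincides on real-valued forms with the classical real Hodge star induced by $g$, hence sends real forms to real forms; the link between the complex and real bases is $dz_k = dx_{2k-2} + i\,dx_{2k-1}$ and $d\bar{z}_k = dx_{2k-2} - i\,dx_{2k-1}$. Writing any complex form as $\omega = \alpha + i\beta$ with $\alpha,\beta$ real, one then gets $\star\bar\omega = \star\alpha - i\,\star\beta = \overline{\star\alpha + i\,\star\beta} = \overline{\star\omega}$, which is the desired identity.

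The principal obstacle is conventional rather than computational: one must pin down exactly what $\eta^*$ means for an $\mathcal{A}$-valued form --- namely that it conjugates both the $\mathcal{A}$-coefficient via the $C^*$-involution and the scalar form part via the swap $dz_j \leftrightarrow d\bar{z}_j$. Once this convention is fixed and the $\mathbb{C}$-linearity of $\star$ is confirmed, the proof collapses to the elementary observation that an $\mathbb{R}$-linear operator extended $\mathbb{C}$-linearly commutes with complex conjugation.
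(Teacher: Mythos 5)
Your argument is correct, but it takes a genuinely different route from the paper's. The paper proves the identity by duality: it shows $\Tr(\mu\wedge(\star\eta)^{*})=\Tr(\mu\wedge\star(\eta^{*}))$ for every test form $\mu$, using $\star\star\mu=(-1)^{p(2n-p)}\mu$, the relation $\langle A,B\rangle=\overline{\langle A^{*},B^{*}\rangle}$, and the isometry property of $\star$, and then cancels $\mu$ via the nondegeneracy of the trace pairing --- so the faithfulness of $\Tr$ is actually used. You instead work pointwise and coefficient-wise: since $\star$ touches only the scalar form part, the claim reduces to the scalar identity $\overline{\star\omega}=\star\bar{\omega}$, which you get from the $\mathbb{C}$-linearity of $\star$ plus the fact that $\star$ restricts to the real Hodge star on real forms and hence commutes with conjugation. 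This is sound: your convention $(a_I\omega_I)^{*}=a_I^{*}\bar{\omega}_I$ is exactly the one the paper uses implicitly (e.g.\ in writing $-A^{*}=-A_1^{*}d\bar z_1-\cdots$), and your reality argument is valid because $g$ is a real symmetric matrix in both the Euclidean and Minkowski cases. What your approach buys is generality and economy: it needs no tracial state at all, and the statement becomes a fact about the scalar Hodge star rather than about the $C^{*}$-algebraic pairing. What the paper's approach buys is that it never unwinds Definition~\ref{de:hodge} on a basis and stays entirely inside the inner-product formalism it has just built. The one step you should spell out is why $\star$ sends real forms to real forms: writing $\star\beta=\rho+i\sigma$ with $\rho,\sigma$ real, the defining relation forces $\alpha\wedge\sigma=0$ for all real $\alpha$, and the nondegeneracy of the real wedge pairing $\Omega^{p}_{\mathbb R}\times\Omega^{2n-p}_{\mathbb R}\to\Omega^{2n}_{\mathbb R}$ gives $\sigma=0$; this is routine but worth a line.
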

\proof
For all $\mu,\eta \in \Omega^{p}(\mathcal{M}, \mathcal{A})$, equation (\ref{eq:3.1}) gives
\begin{align}
\text{Tr}(\mu \wedge (\star \eta)^{*}) & = \text{Tr}((\star \eta)^{*} \wedge \mu)(-1)^{p(2n-p)} \nonumber\\
 & =\text{Tr}((\star \eta)^{*} \wedge \star\star \mu) =\langle (\star \eta)^{*},(\star \mu)^{*} \rangle \text{vol}, \label{eq:3.4}
\end{align}
where the second line follows from the fact that $\star\star \mu=(-1)^{p(2n-p)}\mu$.
For any $A,B\in\mathcal{A}$, since $\text{Tr}(A^{*})= \overline{\text{Tr}(A)}$, it follows that
\begin{align*}
\langle A,B \rangle & =\text{Tr}(AB^{*}) =\overline{\text{Tr}((AB^{*})^{*})} \\
 & = \overline{\text{Tr}(B^{**}A^{*})} = \overline{\text{Tr}(A^ {*}B^{**})} = \overline{\langle A^{*}, B^{*} \rangle}.
\end{align*}
In light of (\ref{eq:3.3}) and the fact that $\star$ is an isometry, (\ref{eq:3.4}) implies that
\begin{align*}
\langle (\star \eta)^{*},(\star \mu)^{*} \rangle \text{vol} & = \overline{\langle \star\eta,\star\mu \rangle}\text{vol}\\
 & =\langle \star\mu,\star\eta \rangle \text{vol} =\langle\mu,\eta \rangle \text{vol}.
\end{align*}
Therefore, we have $\text{Tr}(\mu\wedge\star(\eta^{*}))=\text{Tr}(\mu\wedge(\star\eta)^{*})$ for all $\mu$, giving $(\star \eta)^{*}= \star(\eta^{*})$.\qed

\subsection{Covariant Derivative}

For convenience, we drop ``$\mathcal{M}$'' from the notation ``$\Omega^{p}(\mathcal{M}, \mathcal{A})$''. Thus, for a $C^{*}$-algebra $\mathcal{A}$ and an $\mathcal{A}$-module $\mathcal{H}$,
we denote by $\Omega^{p}\left(\mathcal{A}\right)$ (resp. $\Omega^{p}\left(\mathcal{H}\right)$)
the space of $\mathcal{A}$-valued (resp. $\mathcal{H}$-valued) $p$-forms
on $\mathcal{M}$. 
\begin{defn}
Assuming the Einstein summation convention and given $A=A_jdx_j \in\Omega^{1}\left(\mathcal{A}\right)$, the covariant derivative $D_{A}:\Omega^{p}\left(\mathcal{H}\right)\to\Omega^{p+1}\left(\mathcal{H}\right), 0\leq p\leq 2n$ is defined as
\[D_{A}\left(\omega\right)=d\omega+A\wedge\omega,\]
 where $\omega=\omega_{J}dx_{J}\in\Omega^{p}\left(\mathcal{H}\right), \left|J\right|=p$, and $A\wedge\omega=A_{j}\omega_{J}dx_{j}\wedge dx_{J}$.
\end{defn}
\noindent $D_A$ is often referred to as a {\em connection} on the trivial bundle ${\mathcal M}\times {\mathcal H}$ and $A$ as the associated {\em connection form}.
Note that for any scalar function $\phi$, we have the Leibniz rule:
\[
D_{A}\left(\phi\omega\right)=d\phi\wedge\omega+\phi D_{A}\omega.
\]
 In particular, for $\omega\in\Omega^{p}\left(\mathcal{H}\right)$
one has $D_{A}^{2}\omega=\left(dA+A\wedge A\right)\wedge\omega$.
\begin{defn}
The $2$-form $F_{A}:=dA+A\wedge A$ is said to be the curvature field associated with the connection $D_A$.
\end{defn}

 For $T\in\Omega^{p}\left(\mathcal{A}\right)$ and $\omega\in\Omega^{p}\left(\mathcal{H}\right)$, the product rule implies that
\begin{align}
D_{A}\left(T\wedge\omega\right) & =d\left(T\wedge\omega\right)+A\wedge\left(T\wedge\omega\right) \nonumber\\
 & =dT\wedge\omega+\left(-1\right)^{p}T\wedge d\omega+ A\wedge T\wedge\omega. \label{eq:3.5}
\end{align}
On the other hand, we have
\begin{align}
D_{A}\left(T\wedge\omega\right) &= \left(D_{A}T\right)\wedge\omega +\left(-1\right)^{p}T\wedge D_{A}\omega\nonumber\\
 &= \left(D_{A}T\right)\wedge\omega +\left(-1\right)^{p}T\wedge \left(d\omega+ A\wedge\omega\right). \label{eq:3.6}
\end{align}
Equating (\ref{eq:3.5}) and (\ref{eq:3.6}), one has
\begin{equation}
D_{A}T=dT+A\wedge T-\left(-1\right)^{p}T\wedge A,\ \ \ T\in\Omega^{p}\left(\mathcal{A}\right). \label{eq:3.7}
\end{equation}

\begin{prop}
The map $D_A: \Omega^{p}\left(\mathcal{A}\right)\to \Omega^{p+1}\left(\mathcal{A}\right), 0\leq p\leq 2n$, defined in \textup{(\ref{eq:3.7})} is a connection on the trivial bundle $\C^2\times {\mathcal A}$.
\end{prop}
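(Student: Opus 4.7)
The plan is to verify directly that the formula $D_A T = dT + A \wedge T - (-1)^p T \wedge A$ from (\ref{eq:3.7}) satisfies the two defining properties of a connection on the trivial bundle $\C^2 \times \mathcal{A}$: namely $\mathbb{C}$-linearity, and the Leibniz rule $D_A(\phi T) = d\phi \wedge T + \phi\, D_A T$ for every smooth scalar-valued function $\phi$ and every $T \in \Omega^p(\mathcal{A})$. The $\mathbb{C}$-linearity of $D_A$ is immediate from the linearity of $d$ and of the wedge product.

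For the Leibniz rule I would exploit the key fact that a scalar-valued function $\phi$ commutes, under the wedge product, with every $\mathcal{A}$-valued form. Writing
\[
D_A(\phi T) = d(\phi T) + A \wedge (\phi T) - (-1)^p (\phi T) \wedge A
\]
and applying the ordinary Leibniz rule $d(\phi T) = d\phi \wedge T + \phi\, dT$, one can pull the scalar $\phi$ through the remaining two terms to obtain
\[
d\phi \wedge T + \phi\bigl[\, dT + A \wedge T - (-1)^p T \wedge A \,\bigr] = d\phi \wedge T + \phi\, D_A T,
\]
as desired. As a consistency check I would also record that, together with the original definition on $\Omega^q(\mathcal{H})$, the graded product rule
\[
D_A(T \wedge \omega) = (D_A T) \wedge \omega + (-1)^p\, T \wedge D_A \omega, \qquad T \in \Omega^p(\mathcal{A}),\ \omega \in \Omega^q(\mathcal{H}),
\]
continues to hold; but this is essentially tautological, because (\ref{eq:3.7}) was obtained precisely by equating (\ref{eq:3.5}) and (\ref{eq:3.6}), which express exactly this rule.

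There is no genuine obstacle, since the formula was engineered to satisfy the connection axioms. The one subtlety worth flagging is that $\Omega^{*}(\mathcal{A})$ is not graded-commutative, because the coefficient algebra $\mathcal{A}$ is noncommutative: the order of $A$ and $T$ in the summands $A \wedge T$ and $T \wedge A$ cannot be swapped at the price of a mere sign, and this is exactly why both terms must appear in (\ref{eq:3.7}). The scalar $\phi$, by contrast, commutes freely with elements of $\mathcal{A}$, which is what makes the Leibniz verification go through cleanly and simultaneously guarantees that the map is well-defined as a map $\Omega^p(\mathcal{A}) \to \Omega^{p+1}(\mathcal{A})$.
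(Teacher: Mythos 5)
Your proof is correct, but it verifies a different axiom than the paper does, so the two arguments are genuinely distinct. You check the $C^{\infty}$-Leibniz rule $D_A(\phi T)=d\phi\wedge T+\phi\,D_A T$ for scalar functions $\phi$, which is precisely the property the paper itself invoked earlier to call $D_A$ a connection on $\mathcal{M}\times\mathcal{H}$, and your verification (pulling the scalar through the two wedge terms $A\wedge T$ and $T\wedge A$) is sound. The paper instead proves the stronger statement that $D_A$ is a graded derivation of the wedge product on $\Omega^{*}(\mathcal{A})$: for $T_1\in\Omega^{p}(\mathcal{A})$ and $T_2\in\Omega^{q}(\mathcal{A})$,
\[
D_{A}(T_1\wedge T_2)=(D_AT_1)\wedge T_2+(-1)^{p}\,T_1\wedge(D_AT_2),
\]
obtained by expanding $d(T_1\wedge T_2)$ and inserting the cancelling pair $\pm(-1)^{p}T_1\wedge A\wedge T_2$. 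Note that this is not tautological in the way your consistency check is: formula (\ref{eq:3.7}) was engineered from the mixed rule with $\omega\in\Omega^{q}(\mathcal{H})$, and the fact that the same formula is also a derivation for products of two $\mathcal{A}$-valued forms is a separate (if routine) computation that uses the algebra structure of $\mathcal{A}$. The graded rule implies yours: taking $T_1=\phi\cdot I\in\Omega^{0}(\mathcal{A})$, one has $D_A(\phi I)=d\phi\,I+\phi(A-A)=d\phi\,I$, and the derivation property then yields $D_A(\phi T)=d\phi\wedge T+\phi\,D_AT$. So your argument establishes the proposition under the paper's stated definition of a connection and is the more elementary route, while the paper's computation buys the multiplicative property that is actually used downstream, for instance in the Bianchi-identity argument that $D_AF_A=0$.
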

\proof The linearity of $D_A$ is easy to see. It only remains to check that $D_A$ obeys the Leibniz rule. Indeed, for $T_1\in \Omega^{p}(\mathcal{A})$ and $T_2\in \Omega^{q}(\mathcal{A})$, we have 
\begin{align*}
D_{A}(T_1\wedge T_2) &= d(T_1\wedge T_2)+A\wedge (T_1\wedge T_2)-\left(-1\right)^{p+q}(T_1\wedge T_2)\wedge A\\
&= dT_1\wedge T_2+(-1)^pT_1\wedge dT_2+A\wedge (T_1\wedge T_2)+(-1)^pT_1\wedge A\wedge T_2\\
& \indent -(-1)^pT_1\wedge A\wedge T_2-\left(-1\right)^{p+q}(T_1\wedge T_2)\wedge A\\
&= \big(dT_1+A\wedge T_1-\left(-1\right)^{p}T_1\wedge A\big)\wedge T_2\\
& \indent +(-1)^pT_1\wedge \big(dT_2+A\wedge T_2-\left(-1\right)^{q}T_2\wedge A\big)\\
&= (D_AT_1)\wedge T_2+(-1)^pT_1\wedge (D_AT_2).
\end{align*}

\subsection{Covariant Co-derivative}

In the remaining part of this paper, we shall focus on the special case ${\mathcal M}=\C^2$. It is well-known that, under the Euclidean metric, the co-differential operator $d^*=-\star d\ \star$ \cite{baez1994gauge,huybrechts2006complex}, in the sense that, given two smooth $L^2$ forms $\alpha\in\Omega^{p}\left(\mathbb{C}\right)$ and $\beta\in\Omega^{p+1}\left(\mathbb{C}\right)$ over $\mathbb{C}^2$, we have $(\alpha, d^*\beta)=(d\alpha, \beta)$. On the other hand, the co-differential operator under the Minkowski metric is known to be $d^*=\star\ d\ \star$. In this subsection, we compute $D^*_A$ for ${\mathcal A}$-valued differential forms with respect to the inner product (\ref{iproduct}). As far as the authors are concerned, the following fact is not found in the literature.

\begin{prop}\label{D*}
Given a $C^*$-algebra ${\mathcal A}$ with a tracial state $\Tr$ and $A\in\Omega^{1}\left(\mathcal{A}\right)$, under the Euclidean metric on $\mathbb{C}^n$ the covariant co-derivative $D_{A}^{*}:=-\star D_{-A^{*}}\star$.
\end{prop}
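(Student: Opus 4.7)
The plan is to verify the adjoint relation $(D_A T, S) = (T, -\star D_{-A^*} \star S)$ for smooth, compactly supported $T \in \Omega^p(\mathcal{A})$ and $S \in \Omega^{p+1}(\mathcal{A})$ with respect to the inner product (\ref{iproduct}), the general $L^2$ case then following by density. Expanding $D_A T = dT + A \wedge T - (-1)^p T \wedge A$ via (\ref{eq:3.7}) splits the left-hand side into three integrals that I would treat separately.

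For the $dT$ piece, I would apply Stokes' theorem to the exact form $d\,\text{Tr}(T \wedge \star S^*) = \text{Tr}(dT \wedge \star S^*) + (-1)^p \text{Tr}(T \wedge d\star S^*)$ to convert $\int \text{Tr}(dT \wedge \star S^*)$ into $-(-1)^p \int \text{Tr}(T \wedge d\star S^*)$. Rewriting the resulting integrand in the shape $\text{Tr}(T \wedge \star Y^*)$ using the identity $\star\star = (-1)^{p(2n-p)}$, Proposition \ref{prop3.5} (so that $\star$ commutes with $*$), and the fact that $d$ also commutes with $*$ (both act coefficient-wise), one identifies $Y = -\star d\star S$, the standard Euclidean co-differential.

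For the two algebraic pieces, I would use two graded identities: the cyclicity of the trace for $\mathcal{A}$-valued forms, $\text{Tr}(\alpha \wedge \beta) = (-1)^{(\deg \alpha)(\deg \beta)} \text{Tr}(\beta \wedge \alpha)$, and the Koszul involution rule $(\alpha \wedge \beta)^* = (-1)^{(\deg \alpha)(\deg \beta)} \beta^* \wedge \alpha^*$. Applied to $\int \text{Tr}(A \wedge T \wedge \star S^*)$, cyclicity moves $T$ to the leftmost position, after which the same manipulations as above identify the adjoint contribution as $\star(A^* \wedge \star S)$. A parallel computation on $\int \text{Tr}(T \wedge A \wedge \star S^*)$, combined with the external sign $-(-1)^p$ from (\ref{eq:3.7}), yields the adjoint contribution $(-1)^p \star(\star S \wedge A^*)$. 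Summing the three contributions gives
\begin{equation*}
D_A^* S = -\star d\star S + \star(A^* \wedge \star S) + (-1)^p \star(\star S \wedge A^*).
\end{equation*}

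To conclude, I would apply (\ref{eq:3.7}) to the $(2n-p-1)$-form $\star S$ with connection $-A^*$ to obtain
\begin{equation*}
D_{-A^*}(\star S) = d\star S - A^* \wedge \star S + (-1)^{p+1} \star S \wedge A^*,
\end{equation*}
premultiply by $-\star$, and verify term-by-term that the result coincides with the expression above (using $(-1)^{2n-p-1} = (-1)^{p+1}$). The main obstacle is the delicate sign bookkeeping: three independent sources of signs — the Koszul sign from swapping graded objects, the factor $(-1)^{p(2n-p)}$ from $\star^2$, and the $(-1)^p$ distinguishing left from right multiplication by $A$ in (\ref{eq:3.7}) — must conspire so that the final connection form is $-A^*$ rather than $+A^*$, and a single miscounted parity would flip this sign.
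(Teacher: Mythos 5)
Your proposal follows essentially the same route as the paper: expand $D_A$ via (\ref{eq:3.7}) into the $d$-term and the two algebraic multiplication terms, treat the latter two using trace cyclicity, the graded involution rule, the identity $\star\star=(-1)^{p(2n-p)}$, and Proposition \ref{prop3.5}, and then reassemble the three adjoint contributions into $-\star D_{-A^*}\star$ with the same sign bookkeeping. The only (minor, and arguably more careful) difference is that you justify the $dT$ term by an explicit Stokes/integration-by-parts argument on compactly supported forms and a density argument, whereas the paper simply invokes the known fact $d^*=-\star d\,\star$ and phrases the verification as a pointwise identity.
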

\proof
Let $\alpha\in\Omega^{p}\left(\mathcal{A}\right)$ and $\beta\in\Omega^{p+1}\left(\mathcal{A}\right)$ be $L^2$-forms over $\C^2$, where $0\leq p\leq 3$. We need to show $(D_A\alpha, \beta)=(\alpha, -\star D_{-A^{*}}\star \beta)$.
First of all, $\star\beta\in\Omega^{3-p}\left(\mathcal{A}\right)$.
Note that $d\star\beta$, $\left(-A^{*}\right)\wedge\star\beta$,
$\star\beta\wedge\left(-A^{*}\right)\in\Omega^{4-p}\left(\mathcal{A}\right)$.
This implies $D_{-A^{*}}\star\beta\in\Omega^{4-p}\left(\mathcal{A}\right)$.
Therefore, we do have that $-\star D_{-A^{*}}\star\beta\in\Omega^{p}\left(\mathcal{A}\right)$. In view of (\ref{iproduct}), it is sufficient to check that 
\begin{equation}
\left\langle D_{A}\alpha,\beta\right\rangle =\left\langle \alpha,-\star D_{-A^{*}}\star\beta\right\rangle \label{eq:3.8}
\end{equation}
on $\C^2$. By (\ref{eq:3.7}), we have 
\begin{equation}
\langle D_{A}\alpha , \beta \rangle= \underbrace{\langle d\alpha, \beta \rangle}_{I_{1}}+ \underbrace{\langle A \wedge \alpha, \beta \rangle}_{I_{2}}- (-1)^{p} \underbrace{\langle \alpha \wedge A, \beta \rangle}_{I_{3}}.
\label{eq:3.9}
\end{equation}We now compute $I_{1},I_{2},I_{3}$ in (\ref{eq:3.9}). Clearly,
$I_{1}=\left\langle \alpha,d^{*}\beta\right\rangle =\left\langle \alpha,-\star d\star\beta\right\rangle $.
Computing $I_{2}$ and $I_{3}$ is not as trivial. First, we make the observation that $\star \star \eta= (-1)^{p(4-p)}\eta= (-1)^{p^{2}}\eta$ for 
$\eta \in\Omega^{p}\left(\mathcal{A}\right)$, and $(\star\beta)^{*}= \star(\beta^{*})$ by Proposition \ref{prop3.5}. Then using the properties of the trace, we have 
\begin{align*}
I_{2}\text{ vol} & =\text{Tr}\left(A\wedge\alpha\wedge\star\beta^{*}\right)\\
 & =-\text{Tr}\left(\alpha\wedge\left(\star\beta\right)^{*}\wedge A\right)\\
 & =-\text{Tr}\left(\alpha\wedge\left(A^{*}\wedge\star\beta\right)^{*}\right)\left(-1\right)^{3-p}\\
 & =\left(-1\right)^{p}\text{Tr}\left(\alpha\wedge\left(A^{*}\wedge\star\beta\right)^{*}\right)\\
 & =\left(-1\right)^{p-p^{2}}\text{Tr}\left(\alpha\wedge\star\star\left(A^{*}\wedge\star\beta\right)^{*}\right)\\
 & =\text{Tr}\left(\alpha\wedge\star\left(\star\left(A^{*}\wedge\star\beta\right)^{*}\right)\right)=\left\langle \alpha,\star\left(A^{*}\wedge\star\beta\right)\right\rangle \text{vol}.
\end{align*}
Similarly, we have 
\begin{align*}
I_{3}\text{ vol} & =\text{Tr}\left(\alpha\wedge A\wedge\left(\star\beta\right)^{*}\right)\\
 & =\text{Tr}\left(\alpha\wedge\left(\star\beta\wedge A^{*}\right)^{*}\right)\left(-1\right)^{3-p}\\
 & =\left(-1\right)^{p+1}\text{Tr}\left(\alpha\wedge\left(\star\beta\wedge A^{*}\right)^{*}\right)\\
 & =\left(-1\right)^{p-p^{2}+1}\text{Tr}\left(\alpha\wedge\star\star\left(\star\beta\wedge A^{*}\right)^{*}\right) =-\left\langle \alpha,\star\left(\star\beta\wedge A^{*}\right)\right\rangle \text{vol}.
\end{align*}
It follows that 
\begin{align*}
\left\langle D_{A}\alpha,\beta\right\rangle  & =\left\langle \alpha,-\star d\star\beta\right\rangle +\left\langle \alpha,\star\left(A^{*}\wedge\star\beta\right)\right\rangle +\left(-1\right)^{p}\left\langle \alpha,\star\left(\star\beta\wedge A^{*}\right)\right\rangle \\
 & =\left\langle \alpha,-\star\left(d\star\beta-\left(A^{*}\wedge\star\beta\right)+\left(-1\right)^{p+1}\left(\star\beta\wedge A^{*}\right)\right)\right\rangle \\
 & =\left\langle \alpha,-\star\left(d\star\beta+\left(\left(-A^{*}\right)\wedge\star\beta\right)-\left(-1\right)^{2n-p-1}\left(\star\beta\wedge\left(-A^{*}\right)\right)\right)\right\rangle \\
 & =\left\langle \alpha,-\star D_{-A^{*}}\star\beta\right\rangle. 
\end{align*}\qed

Recall that under the Minkowski metric, we have $d^*=\star\ d\ \star$ and for a $p$-form $\eta$ we have $\star\star\ \eta=(-1)^{p(4-p)+1}\eta$. Going through the proof above with these adjustments, we have the following analogous fact. 
\begin{cor}\label{D*M}
With respect to the Minkowski metric on $\C^2$, we have $D_A^*=\star\ D_{-A^*} \star.$
\end{cor}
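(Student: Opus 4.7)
The plan is to mirror the proof of Proposition \ref{D*} line by line, tracking only the places where the Euclidean identities $d^{*}=-\star d\star$ and $\star\star\eta=(-1)^{p^{2}}\eta$ are invoked, since these are the sole inputs that change in passing from the Euclidean to the Minkowski signature. As before, I would start from the decomposition
\[
\langle D_{A}\alpha,\beta\rangle=\underbrace{\langle d\alpha,\beta\rangle}_{I_{1}}+\underbrace{\langle A\wedge\alpha,\beta\rangle}_{I_{2}}-(-1)^{p}\underbrace{\langle\alpha\wedge A,\beta\rangle}_{I_{3}}
\]
and treat each $I_{k}$ separately, the goal being to verify the pointwise identity $\langle D_{A}\alpha,\beta\rangle=\langle\alpha,\star D_{-A^{*}}\star\beta\rangle$ in place of the Euclidean one.

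For $I_{1}$, the difference is trivial: since under the Minkowski metric $d^{*}=\star d\star$ (not $-\star d\star$), we immediately get $I_{1}=\langle\alpha,\star d\star\beta\rangle$, i.e.\ a sign flip relative to the Euclidean case. For $I_{2}$ and $I_{3}$, the manipulations using the trace identity and Proposition \ref{prop3.5} go through verbatim up to the step in which one inserts $\star\star$ inside the trace. At that step, the Euclidean proof uses $\star\star\eta=(-1)^{p^{2}}\eta$ on the $(4-p)$-form $(A^{*}\wedge\star\beta)^{*}$ (respectively $(\star\beta\wedge A^{*})^{*}$); under Minkowski the identity becomes $\star\star\eta=(-1)^{p(4-p)+1}\eta=-(-1)^{p^{2}}\eta$, so each of $I_{2}$ and $I_{3}$ picks up exactly one extra minus sign. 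The outcome is
\[
I_{2}=-\langle\alpha,\star(A^{*}\wedge\star\beta)\rangle,\qquad I_{3}=\langle\alpha,\star(\star\beta\wedge A^{*})\rangle.
\]

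Assembling the three terms and using $(-1)^{3-p}=(-1)^{p+1}$, one then needs to check that
\[
\star d\star\beta-\star(A^{*}\wedge\star\beta)-(-1)^{p}\star(\star\beta\wedge A^{*})=\star D_{-A^{*}}\star\beta,
\]
which is a direct rewrite using the definition (\ref{eq:3.7}) of $D_{-A^{*}}$ applied to $\star\beta\in\Omega^{3-p}(\mathcal{A})$: the wedge-right term in $D_{-A^{*}}\star\beta$ contributes $-(-1)^{3-p}\star\beta\wedge(-A^{*})=(-1)^{p+1}\star\beta\wedge A^{*}$, and the identity $-(-1)^{p}=(-1)^{p+1}$ makes the signs match. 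This yields $D_{A}^{*}=\star D_{-A^{*}}\star$, as desired.

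I do not expect any real obstacle here; the content is essentially bookkeeping. The only mildly delicate point is getting the two sign flips to combine correctly: the one from $d^{*}$ and the one from $\star\star$ act on different terms, so naively one might fear a partial cancellation, but in fact the $d\star\beta$ term is shifted by the $d^{*}$ change while the $A^{*}$-terms are shifted by the $\star\star$ change, and all three shifts conspire to produce the uniform removal of the overall $-$ sign from $D_{A}^{*}=-\star D_{-A^{*}}\star$. I would make sure to state explicitly that $(-1)^{p-p^{2}}=1$ and $(-1)^{p+1}=-(-1)^{p}$ at the points they are used, to make the sign-tracking transparent.
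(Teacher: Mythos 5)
Your proposal is correct and matches the paper's own justification: the paper proves Corollary \ref{D*M} precisely by rerunning the proof of Proposition \ref{D*} with the two Minkowski substitutions $d^{*}=\star\, d\,\star$ and $\star\star\,\eta=(-1)^{p(4-p)+1}\eta$, which is exactly the sign-tracking you carry out (and your bookkeeping, including the extra minus in each of $I_{2}$ and $I_{3}$ and the identity $(-1)^{3-p}=(-1)^{p+1}$, checks out). You simply make explicit what the paper leaves to the reader.
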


\subsection{The Lorenz Gauge Condition}
In studying Maxwell's equations, one often considers scalar-valued $1$-forms $\omega(z)=f_1dz_1+f_2dz_2+f_{\bar{1}}dz_{\bar{1}}+f_{\bar{2}}dz_{\bar{2}}$ that satisfy the {\em Lorenz gauge condition} $d^{*}\omega=0$ \cite{munshi2022complex}. Under the Euclidean metric, we have 
\[d^{*}\omega=-2\big(\bar{\partial}_{1}f_{1}+\bar{\partial}_{2}f_{2}+\partial_{1}f_{\bar{1}}+\partial_{2}f_{\bar {2}}\big),\] and under the Minkowski metric, 
\[d^{*}\omega= 2\left(\partial_{1}f_{1}-\bar{\partial}_{2}f_{2}+\bar{\partial}_{1}f_{\bar{1}}-\partial_{2}f_{\bar{2}}\right).\]
A quantized version of the Lorenz gauge condition is defined as follows.
\begin{defn}
A $1$-form $A\in \Omega^1({\mathcal A})$ satisfies the Lorenz gauge condition if $D_A^*A=0$.
\end{defn}

By formula (\ref{eq:3.7}), the covariant co-derivative $D_{A}^{*}$ of $A\in\Omega^{1}\left(\mathcal{A}\right)$ can be computed as
\begin{align}
D_{A}^{*}A &= -\star \left(D_{-A^{*}}\left(\star A\right)\right) \nonumber \\
&= -\star \left(d\left(\star A\right) +\left(-A^{*}\right)\wedge \left(\star A\right)+ \left(\star A\right)\wedge \left(-A^{*}\right)\right). \label{eq:5.18}
\end{align}
Working in the Euclidean metric, we have
\begin{equation*}
\star A= \frac{1}{2}\left(A_{1}dz_{1}\wedge dz_{2}\wedge d\bar{z}_{2}- A_{2}dz_{1}\wedge dz_{2}\wedge d\bar{z}_{1}+ A_{\overline{1}}dz_{2}\wedge d\bar{z}_{1}\wedge d\bar{z}_{2}- A_{\overline{2}}dz_{1}\wedge d\bar{z}_{1}\wedge d\bar{z}_{2}\right).
\end{equation*}
Applying the exterior derivative $d$ to the above, it follows that
\begin{equation}
d\left(\star A\right)= \frac{1}{2}\sum_{j=1, 2}\left(\bar{\partial}_{j} A_{j}+ \partial_{j} A_{\overline{j}}\right)dz_{1}\wedge dz_{2}\wedge d\bar{z}_{1}\wedge d\bar{z}_{2}.
\label{eq:5.19}
\end{equation}
Observing that $-A^{*}= -A_{1}^{*}d\bar{z}_{1}-A_{2}^{*}d\bar{z}_{2}-A_{\overline{1}}^{*}dz_{1}-A_{\overline{2}}^{*}dz_{2}$, we also have
\begin{align}
-A^{*}\wedge \left(\star A\right) &= -\frac{1}{2}\left(A_{1}^{*}A_{1}+ A_{2}^{*}A_{2}+ A_{\overline{1}}^{*}A_{\overline{1}}+ A_{\overline{2}}^{*}A_{\overline{2}}\right)dz_{1}\wedge dz_{2}\wedge d\bar{z}_{1}\wedge d\bar{z}_{2}, \label{eq:5.20} \\
\star A \wedge \left(-A^{*}\right) &= \frac{1}{2}\left(A_{1}A_{1}^{*}+ A_{2}A_{2}^{*}+ A_{\overline{1}}A_{\overline{1}}^{*}+ A_{\overline{2}}A_{\overline{2}}^{*}\right)dz_{1}\wedge dz_{2}\wedge d\bar{z}_{1}\wedge d\bar{z}_{2}. \label{eq:5.21}
\end{align}
Denoting the sum of (\ref{eq:5.19}), (\ref{eq:5.20}), and (\ref{eq:5.21}), which is also the expression in parentheses from the third line of (\ref{eq:5.18}), by $W$, we have
\begin{equation}
W= \frac{1}{2}\bigg(\sum_{j=1, 2}(\bar{\partial}_{j} A_{j}+\partial_{j} A_{\overline{j}})+ \sum_{j=1, 2}([A_{j}, A_{j}^{*}]+[A_{\overline{j}}, A_{\overline{j}}^{*}])\bigg)dz_{1}\wedge dz_{2}\wedge d\bar{z}_{1}\wedge d\bar{z}_{2}. 
\label{eq:5.22}
\end{equation}
Then it follows that, under the Euclidean metric, we have
\begin{equation}
D_{A}^{*}A= -\star W= -2\bigg(\sum_{j=1, 2}(\bar{\partial}_{j} A_{j}+\partial_{j} A_{\overline{j}})+ \sum_{j=1, 2}([A_{j}, A_{j}^{*}]+[A_{\overline{j}}, A_{\overline{j}}^{*}])\bigg).
\label{eq:5.23}
\end{equation}

Now under the Minkowski metric, we have the following:
\begin{equation*}
\star A=  \frac{1}{2}\left(A_{1}dz_{2}\wedge d\bar{z}_{1}\wedge d\bar{z}_{2}+ A_{2}dz_{1}\wedge dz_{2}\wedge d\bar{z}_{1}+ A_{\overline{1}}dz_{1}\wedge dz_{2}\wedge d\bar{z}_{2}+ A_{\overline{2}}dz_{1}\wedge d\bar{z}_{1}\wedge d\bar{z}_{2} \right).
\end{equation*}
In addition, we have
\begin{equation}
d\left(\star A\right)= \frac{1}{2}\left(\partial_{1} A_{1}- \bar{\partial}_{2} A_{2}+ \bar{\partial}_{1} A_{\overline{1}}- \partial_{2} A_{\overline{2}}\right)dz_{1}\wedge dz_{2}\wedge d\bar{z}_{1}\wedge d\bar{z}_{2},
\label{eq:5.24}
\end{equation}
and also
\begin{align}
-A^{*}\wedge \left(\star A\right) &= -\frac{1}{2}\left(A_{\overline{1}}^{*}A_{1}- A_{2}^{*}A_{2}+ A_{1}^{*}A_{\overline{1}}- A_{\overline{2}}^{*}A_{\overline{2}}\right)dz_{1}\wedge dz_{2}\wedge d\bar{z}_{1}\wedge d\bar{z}_{2}, \label{eq:5.25} \\
\star A \wedge \left(-A^{*}\right) &= -\frac{1}{2}\left(-A_{1}A_{\overline{1}}^{*}+ A_{2}A_{2}^{*}- A_{\overline{1}}A_{1}^{*}+ A_{\overline{2}}A_{\overline{2}}^{*}\right)dz_{1}\wedge dz_{2}\wedge d\bar{z}_{1}\wedge d\bar{z}_{2}. \label{eq:5.26}
\end{align}
Denoting the sum of (\ref{eq:5.24}), (\ref{eq:5.25}), and (\ref{eq:5.26}) by $W'$, we have
\begin{align}
W'= \frac{1}{2}&\big(\partial_{1} A_{1}- \partial_{2} A_{\overline{2}}+ \bar{\partial}_{1} A_{\overline{1}}- \bar{\partial}_{2} A_{2}+ [A_{1},A_{\overline{1}}^{*}]- [A_{2}, A_{2}^{*}]\nonumber \\
&+ [A_{\overline{1}},A_{1}^{*}]- [A_{\overline{2}}, A_{\overline{2}}^{*}]\big)dz_{1}\wedge dz_{2}\wedge d\bar{z}_{1}\wedge d\bar{z}_{2}. \label{eq:5.27}
\end{align}
Therefore, 
\begin{align}
D_{A}^{*}A=-\star W'= 2\big(&\partial_{1} A_{1}- \partial_{2} A_{\overline{2}}+ \bar{\partial}_{1} A_{\overline{1}}- \bar{\partial}_{2} A_{2}+ \left[A_{1},A_{\overline{1}}^{*}\right]- \left[A_{2}, A_{2}^{*}\right]\nonumber \\
&+ \left[A_{\overline{1}},A_{1}^{*}\right]- \left[A_{\overline{2}}, A_{\overline{2}}^{*}\right]\big),
\label{eq:5.28}
\end{align}
since $\star\left(dz_{1}\wedge dz_{2}\wedge d\bar{z}_{1}\wedge d\bar{z}_{2}\right)= -4$ in the Minkowski metric.

An element $a\in \mathcal{A}$ is said to be {\em normal} if $a^*a=aa^*$. The Fuglede theorem \cite{Fu} asserts that in this case an element $b\in \mathcal{A}$ commutes with $a$ if and only if it commutes with $a^*$.
\begin{defn}
A connection form $A(z)=\sum_{j=1}^2\left(A_j(z)dz_j+A_{\bar{j}}(z)d\bar{z}_j\right) \in \Omega^1(\mathcal{A})$ is said to be normal if $A_j(z)$ and $A_{\bar{j}}(z)$ are normal for each $z\in \C^2$ and $j=1, 2$.
\end{defn}

\begin{lem}\label{lem:comm}
If $f:\C^2\to {\mathcal A}$ is holomorphic and its range $\ran (f)$ consists of normal elements, then $\ran (f)$ is a commuting set.
\end{lem}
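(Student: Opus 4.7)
The plan is to fix an arbitrary $z_0 \in \C^2$ and show that the entire map $w \mapsto [f(z_0), f(w)]$ vanishes identically; since $z_0$ is arbitrary this forces $\ran(f)$ to be a commuting set. The argument combines three ingredients: the fact that $f^*$ is anti-holomorphic, the Fuglede theorem cited just above the lemma, and the identity theorem for holomorphic functions on $\C^2$.

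First, I would observe that $\bar{\partial} f = 0$ implies $\partial(f^*) = 0$; this follows from the real-variable formulas for $\partial_{z_j}$ and $\partial_{\bar z_j}$, which give $(\partial f/\partial \bar z_j)^{*} = \partial f^{*}/\partial z_j$. Hence every holomorphic partial derivative annihilates $f^{*}$. Starting from the pointwise normality identity $f(z)^{*} f(z) = f(z) f(z)^{*}$ and applying an arbitrary holomorphic multi-derivative $\partial^{\alpha} := \partial_{z_1}^{\alpha_1}\partial_{z_2}^{\alpha_2}$, every Leibniz term in which some derivative lands on $f^{*}$ drops out, leaving
\[
f(z)^{*}\, \partial^{\alpha} f(z) \;=\; \partial^{\alpha} f(z)\, f(z)^{*}
\]
for every multi-index $\alpha$ and every $z \in \C^2$.

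Next I would invoke Fuglede's theorem pointwise in $z$: since $f(z)^{*}$ is normal (because $f(z)$ is), any element commuting with $f(z)^{*}$ must also commute with $(f(z)^{*})^{*} = f(z)$. Thus
\[
f(z)\, \partial^{\alpha} f(z) \;=\; \partial^{\alpha} f(z)\, f(z)
\]
for every $\alpha$ and every $z$. Specializing to $z = z_0$ shows that $f(z_0)$ commutes with every Taylor coefficient of $f$ at $z_0$.

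Finally I would use the local Taylor expansion $f(w) = \sum_{\alpha} \tfrac{1}{\alpha!}(w-z_0)^{\alpha}\,\partial^{\alpha} f(z_0)$, which converges on some polydisk $U \ni z_0$. On $U$, $f(z_0)$ commutes termwise with the series, and by continuity of multiplication in the Banach algebra $\A$ it commutes with the limit $f(w)$. Hence the holomorphic commutator $w \mapsto [f(z_0), f(w)]$ vanishes on the nonempty open set $U \subset \C^2$, and the identity theorem for holomorphic functions on the connected domain $\C^2$ forces it to vanish everywhere. The main conceptual obstacle is recognizing that one must control \emph{all} orders of holomorphic derivatives before invoking Fuglede: differentiating only once yields commutativity of $f(z_0)$ with its first-order derivatives at $z_0$, which is insufficient to reconstruct $f(w)$ from its power series and hence insufficient to propagate commutativity off the single point $z_0$.
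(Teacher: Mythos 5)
Your proof is correct and follows essentially the same route as the paper's: differentiate the pointwise normality identity (using that $f^*$ is anti-holomorphic, so every holomorphic derivative lands on $f$), pass to the Taylor expansion of $f$ about a point, and invoke the Fuglede theorem. The only difference is cosmetic --- you apply Fuglede to each Taylor coefficient at $z_0$ and then sum (adding an identity-theorem step to globalize), whereas the paper sums the series first to get $[f(z'),f^*(z)]=0$ and applies Fuglede once at the end.
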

\proof Since $f$ is normal, we have $f(z)f^*(z)=f^*(z)f(z)$ on $\C^2$. It follows that
\[\frac{\partial^mf(z)}{\partial z_1^j\partial z_2^k}f^*(z)=f^*(z)\frac{\partial^mf(z)}{\partial z_1^j\partial z_2^k},\ \ \ j, k\geq 0, m=j+k.\]Fixing any $z, z'\in \C^2$ and writing $f(z')$ as a Taylor series in $(z'_i-z_i), i=1, 2$ based at $z$, we obtain $f(z')f^*(z)=f^*(z)f(z')$. Since $f(z)$ is normal, the Fuglede theorem gives $f(z')f(z)=f(z)f(z')$.\qed

When the connection form $A$ is normal, all the commutators in (\ref{eq:5.23}) vanish, and the quantized Lorenz gauge condition $D_{A}^{*}A=0$ looks the same as that for Maxwell's equations. Furthermore, the following corollary is immediate.
\begin{cor}
If a normal connection form $A$ is such that $A_1, A_2$ are holomorphic and $A_{\bar{1}}, A_{\bar{2}}$ are conjugate holomorphic, then it satisfies the Lorenz gauge condition under the Euclidean metric.
\end{cor}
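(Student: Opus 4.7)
The plan is to apply formula (\ref{eq:5.23}) directly and show that each of the two sums on the right-hand side vanishes as a consequence of one of the two hypotheses. Recall that (\ref{eq:5.23}) reads
\[
D_{A}^{*}A= -2\bigg(\sum_{j=1, 2}(\bar{\partial}_{j} A_{j}+\partial_{j} A_{\overline{j}})+ \sum_{j=1, 2}([A_{j}, A_{j}^{*}]+[A_{\overline{j}}, A_{\overline{j}}^{*}])\bigg),
\]
so it suffices to show that both bracketed sums vanish identically on $\C^2$.

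First I would handle the derivative sum. Holomorphicity of $A_j$ means $\bar{\partial} A_j = 0$ as an operator-valued function on $\C^2$, hence in particular $\bar{\partial}_j A_j = 0$ for $j=1,2$. Likewise, conjugate holomorphicity of $A_{\bar{j}}$ gives $\partial A_{\bar{j}} = 0$ and thus $\partial_j A_{\bar{j}} = 0$. Summing these four identities, the first bracket in (\ref{eq:5.23}) vanishes.

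Next I would handle the commutator sum. The assumption that $A$ is normal means, pointwise in $z\in\C^2$, that each $A_j(z)$ and each $A_{\bar{j}}(z)$ is a normal element of $\mathcal{A}$. By the very definition of normality, $[A_j(z),A_j^*(z)]=0$ and $[A_{\bar{j}}(z),A_{\bar{j}}^*(z)]=0$ for each $z$ and each $j\in\{1,2\}$. Hence the second bracket in (\ref{eq:5.23}) also vanishes identically.

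Combining these two observations yields $D_A^* A = 0$, i.e.\ the Lorenz gauge condition holds. There is no real obstacle here; the content of the corollary is simply that the two structural hypotheses (holomorphicity/anti-holomorphicity of the coefficients and pointwise normality) exactly match the two types of terms that appear in the Euclidean formula (\ref{eq:5.23}). I note that Lemma \ref{lem:comm} is not needed for this corollary itself, since only the pointwise commutators $[A_j(z),A_j^*(z)]$ and $[A_{\bar{j}}(z),A_{\bar{j}}^*(z)]$ appear, which are handled by normality alone.
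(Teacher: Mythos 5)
Your proof is correct and follows exactly the paper's (implicit) argument: the paper observes that normality kills the commutator terms in (\ref{eq:5.23}) and then declares the corollary immediate, the remaining point being precisely your observation that holomorphicity of $A_j$ and conjugate holomorphicity of $A_{\bar{j}}$ annihilate the derivative terms $\bar{\partial}_j A_j$ and $\partial_j A_{\overline{j}}$. Your remark that Lemma \ref{lem:comm} is not needed here is also accurate, since only pointwise self-commutators appear.
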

\noindent The following is a direct consequence of (\ref{eq:5.23}).

\begin{cor}\label{LorenzM}
If $A$ is a connection form such that $A_{\overline{j}}=A_{j}^{*}, j= 1, 2$, then in the Minkowski metric we have
$D_{A}^{*}A= 2\left(\partial_{1} A_{1}- \partial_{2} A_{2}^{*}+ \bar{\partial}_{1} A_{1}^{*}- \bar{\partial}_{2} A_{2}\right)$. In particular, connection $A$ satisfies the Lorenz gauge condition if $\partial_{1} A_{1}- \partial_{2} A_{2}^{*}=0$.
\end{cor}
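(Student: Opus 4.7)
The plan is essentially to substitute $A_{\overline{j}}=A_j^*$ into the explicit formula (\ref{eq:5.28}) for the Minkowski covariant co-derivative and check that the substitution simplifies in the way claimed. First I would track the four derivative-type terms: replacing $A_{\overline{1}}$ by $A_1^*$ and $A_{\overline{2}}$ by $A_2^*$ turns $\partial_1 A_1-\partial_2 A_{\overline{2}}+\bar\partial_1 A_{\overline{1}}-\bar\partial_2 A_2$ directly into $\partial_1 A_1-\partial_2 A_2^{*}+\bar\partial_1 A_1^{*}-\bar\partial_2 A_2$, which is the expression stated in the corollary.

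Next I would verify that the four commutator terms in (\ref{eq:5.28}) cancel in pairs. Using the involution identity $(A_j^*)^*=A_j$, the first pair becomes $[A_1,A_{\overline{1}}^{*}]+[A_{\overline{1}},A_1^{*}]=[A_1,A_1]+[A_1^{*},A_1^{*}]=0$. For the second pair, the substitution gives $-[A_2,A_2^{*}]-[A_{\overline{2}},A_{\overline{2}}^{*}]=-[A_2,A_2^{*}]-[A_2^{*},A_2]=0$ by antisymmetry of the commutator. This yields the stated formula for $D_A^*A$.

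For the ``in particular'' clause, I would use the observation that for any smooth $\mathcal A$-valued function $a$ on $\C^2$, one has $\bar\partial_j a^*=(\partial_j a)^*$, which follows from $\partial_j=\tfrac12(\partial/\partial x_{2j-2}-i\,\partial/\partial x_{2j-1})$ together with the conjugate-linearity of the involution on $\mathcal A$. Applying this twice gives $\bar\partial_1 A_1^{*}-\bar\partial_2 A_2=(\partial_1 A_1-\partial_2 A_2^{*})^{*}$, so the full expression inside the parentheses equals $T+T^{*}$ with $T:=\partial_1 A_1-\partial_2 A_2^{*}$. Hence $T=0$ forces $D_A^*A=0$, giving the Lorenz gauge condition.

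The argument is essentially a bookkeeping exercise rather than a conceptual one; there is no serious obstacle. The only minor care needed is in (i) getting the signs right when collapsing the two commutator pairs — in particular recognizing that the apparent asymmetry between the ``$A_1$-type'' and ``$A_2$-type'' commutators in (\ref{eq:5.28}) still leads to full cancellation under the hypothesis $A_{\overline{j}}=A_j^{*}$ — and (ii) being explicit about the identity $\bar\partial_j a^*=(\partial_j a)^*$, which is what makes the remaining linear-in-derivatives expression self-adjoint and therefore vanish as soon as $T$ does.
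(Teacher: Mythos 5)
Your proof is correct and follows essentially the same route as the paper: substitute $A_{\overline{j}}=A_j^{*}$ into the Minkowski formula (\ref{eq:5.28}), observe that the four commutators cancel, and deduce the second claim from the identity $\bar{\partial}_{1}A_{1}^{*}-\bar{\partial}_{2}A_{2}=\left(\partial_{1}A_{1}-\partial_{2}A_{2}^{*}\right)^{*}$, which is exactly the paper's stated justification. (Note that the paper's text cites (\ref{eq:5.23}), the Euclidean formula, as the source of the corollary; this is evidently a typo for (\ref{eq:5.28}), and you have correctly used the latter.)
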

\noindent The second claim in Corollary \ref{LorenzM} holds because $\bar{\partial}_{1} A_{1}^{*}- \bar{\partial}_{2} A_{2}=\big(\partial_{1} A_{1}- \partial_{2} A_{2}^{*}\big)^*$. 

The two corollaries above make it easy to construct connection forms that satisfy the Lorenz gauge condition. More importantly, they point us to a promising direction to look for new solutions to Yang-Mills equations-an exploration undertaken in Section 5. 

\section{Yang-Mills Equations}

Maxwell's equations provide a description of electromagnetic dynamics. They are widely regarded as a foundation of modern physics. Yang-Mills equations are a quantization of Maxwell's equations, and the quantization process is done elegantly in the differential form formulation introduced in \cite{flanders1963differential}. In \cite{munshi2022complex}, the authors made an exploration of Maxwell's equations based on complex differential forms, which led to the discovery of a class of harmonic and pluriharmonic solutions. It is thus a tempting question whether a similar exploration may be fruitful for Yang-Mills equations.
Chapters 4 and 5 aim to address this question.

\subsection{A Quantization of Maxwell's Equations} 

Given a real-valued $1$-forms $\omega$ over ${\mathbb R}^4$, we set $F_\omega=d\omega$. In the language of differential forms, Maxwell's equations are formulated as follows:
\begin{eqnarray}
dF_\omega = 0, \hspace{1cm} \star \text{ } d \star F_\omega  =  J,\label{eq:4.1}
\end{eqnarray}
where \begin{align}
F_\omega=d\omega & =-dx_{0}\wedge\left(E_{1}dx_{1}+E_{2}dx_{2}+E_{3}dx_{3}\right)\label{eq:4.2}\\
 & -B_{1}dx_{2}\wedge dx_{3}+B_{2}dx_{1}\wedge dx_{3}-B_{3}dx_{1}\wedge dx_{2}\nonumber 
\end{align}
is called the {\em Faraday} $2$-{\em form}. Here $\omega$ and $J$ are the {\em potential} $1$-{\em form} and resp. {\em current} $1$-{\em form}, $d$ is the exterior differential operator, and $\star$ is the Hodge star operator with respect to the Minkowski metric. The vectors ${\bf E}=(E_1, E_2, E_3)$ and ${\bf B}=(B_1, B_2, B_3)$ are the corresponding {\em electric} and {\em magnetic fields}. The {\em Lagrangian} for Maxwell's equations is given by 
\begin{align}\label{eq:4.3lag}
L(\omega)=-\omega\wedge \star J+\frac{1}{2}F_\omega\wedge \star F_\omega.
\end{align}
It is a well-known fact that $F_\omega$ is a solution to (\ref{eq:4.1}) if and only if $\omega$ gives an extremal value to the functional $h(\omega)=\int_{{\mathbb R}^4}L(\omega).$ 

Now we do the quantization of (\ref{eq:4.1}) and (\ref{eq:4.3lag}) in the setting where $\mathcal{A}$ is a unital $C^{*}$-algebra with a tracial state $\Tr$. Consider $\mathcal{A}$-valued smooth functions $A_{j}(z), A_{\overline{j}}(z), j=1, 2, z\in \mathbb{C}^2$, and let $\frac{\partial}{\partial z_{j}},\frac{\partial}{\partial \bar{z}_{j}}$ be denoted by $\partial_{j},\bar{\partial}_{j}$, respectively. A connection form $A\in \Omega^{1} (\mathcal{A})$ can be expressed as follows:
\begin{equation}
A= A_{1}dz_{1}+A_{2}dz_{2}+A_{\overline{1}}d\bar{z}_{1}+A_{\overline{2}}d\bar{z}_{2}.
\label{eq:4.4}
\end{equation}

\begin{defn}
Yang-Mills equations are given by 
\begin{equation}
D_{A}F_{A}=0,\text{ \text{ \text{ }}}D_{A}^{*}F_{A}=J,\label{eq:4.5}
\end{equation}
 where $J\in\Omega^{1}\left(\mathcal{A}\right)$ is referred to as a current potential form, and $D_{A}^{*}$ is as defined in \textup{Proposition \ref{D*}} or \textup{Corollary \ref{D*M}} depending on the metric. 
\end{defn}
\noindent Note that in vacuum $J=0$, meaning that the equations are source-free.
Yang-Mills equations can be viewed as a generalization (or a quantization) of Maxwell's
equations, where the scalar field $\C$ is replaced by a possibly non-commutative $C^*$-algebra $\mathcal{A}$. 

For simplicity, we write the curvature field $F_A=dA+A\wedge A$ as
\begin{equation}
F_A=F_{12}dz_{1}\wedge dz_{2}+ F_{\overline{1}\overline{2}}d\bar{z}_{1}\wedge d\bar{z}_{2}+ \sum_{1 \leq j, k \leq 2}F_{j\overline{k}}dz_{j}\wedge d\bar{z}_{k},
\label{eq:4.6}
\end{equation}
where
\begin{align}
F_{12} &= \partial_{1} A_{2}- \partial_{2} A_{1} +\left[A_{1}, A_{2}\right], \nonumber \\
F_{\overline{1}\overline{2}} &= \bar{\partial}_{1} A_{\overline{2}} -\bar{\partial}_{2} A_{\overline{1}} +\left[A_{\overline{1}}, A_{\overline{2}}\right], \label{eq:4.7} \\
F_{j\overline{k}} &= \partial_{j} A_{\overline{k}}- \bar{\partial}_{k} A_{j}+ \left[A_{j}, A_{\overline{k}}\right], \nonumber
\end{align}
for $1 \leq j, k \leq 2$. Since $F_{A}$ is a $2$-form, from (\ref{eq:3.7}) we have that
\begin{equation}
D_{A}F_{A}= dF_{A}+ A\wedge F_{A} - F_{A} \wedge A,
\label{eq:4.8}
\end{equation}
and it follows from the Bianchi identity that $D_{A}F_{A}=0$. The Hodge dual of $F_{A}$ is also a $2$-form, and under the Euclidean metric, it is given by
\begin{align}
\star F_{A} &= F_{12}dz_{1} \wedge dz_{2}+ F_{\overline{1}\overline{2}}d\bar{z}_{1}\wedge d\bar{z}_{2} + F_{2\overline{2}}dz_{1}\wedge d\bar{z}_{1} \label{eq:4.9} \\
&+ F_{1\overline{1}}dz_{2}\wedge d\bar{z}_{2} - F_{1\overline{2}}dz_{1} \wedge d\bar{z}_{2}- F_{2\overline{1}}dz_{2}\wedge d\bar{z}_{1}. \nonumber
\end{align}
Then from Theorem 3.9, we have that
\begin{equation}
D_{A}^{*}F_{A}= -\star D_{-A^{*}}\star F_{A}= -\star \left(d\star F_{A} +\left(-A^{*}\right) \wedge \star F_{A}- \star F_{A} \wedge \left(-A^{*}\right)\right).
\label{eq:4.10}
\end{equation}
Using our knowledge of the exterior differential operator $d$, the wedge product, and the Hodge star operator in the Euclidean metric, after some algebra, simplification, and rearrangement of terms, we have that
\begin{equation}
D_{A}^{*}F_{A}= \underbrace{J_{1}dz_{1} +J_{2}dz_{2}+ J_{\overline{1}}d\bar{z}_{1}+ J_{\overline{2}}d\bar{z}_{2}}_{J},
\label{eq:4.11}
\end{equation}
where the $C^{*}$-algebraic coefficients $J_{j}, J_{\overline{j}}, j= 1, 2$ of the above Yang-Mills non-abelian current $J$ are given by
\begin{align}
J_{1} &= 2\left(\partial_{1}F_{1\overline{1}}+ \partial_{2}F_{1\overline{2}}+ \bar{\partial}_{2}F_{12}- \left[A_{2}^{*},F_{12}\right]- \left[A_{\overline{1}}^{*},F_{1\overline{1}}\right]- \left[A_{\overline{2}}^{*},F_{1\overline{2}}\right]\right), \nonumber \\
J_{2} &= -2\left(-\partial_{1}F_{2\overline{1}}- \partial_{2}F_{2\overline{2}}+ \bar{\partial}_{1}F_{12}- \left[A_{1}^{*},F_{12}\right]+ \left[A_{\overline{1}}^{*},F_{2\overline{1}}\right]+ \left[A_{\overline{2}}^{*},F_{2\overline{2}}\right]\right), \label{eq:4.12}  \\
J_{\overline{1}} &= 2\left(\partial_{2}F_{\overline{1}\overline{2}}- \bar{\partial}_{1}F_{1\overline{1}}- \bar{\partial}_{2}F_{2\overline{1}}+ \left[A_{1}^{*},F_{1\overline{1}}\right]+ \left[A_{2}^{*},F_{2\overline{1}}\right]- \left[A_{\overline{2}}^{*},F_{\overline{1}\overline{2}}\right]\right), \nonumber \\
J_{\overline{2}} &= -2\left(\partial_{1}F_{\overline{1}\overline{2}}+ \bar{\partial}_{1}F_{1\overline{2}}+ \bar{\partial}_{2}F_{2\overline{2}}- \left[A_{1}^{*},F_{1\overline{2}}\right]- \left[A_{2}^{*},F_{2\overline{2}}\right]- \left[A_{\overline{1}}^{*},F_{\overline{1}\overline{2}}\right]\right). \nonumber
\end{align}
On the other hand, under the Minkowski metric, the Hodge dual of $F_{A}$ is given by the $2$-form
\begin{align}
\star F_{A} &=  F_{2\overline{1}}dz_{1} \wedge dz_{2}- F_{1\overline{2}}d\bar{z}_{1}\wedge d\bar{z}_{2} + F_{2\overline{2}}dz_{1}\wedge d\bar{z}_{1} \label{eq:4.13} \\
&- F_{1\overline{1}}dz_{2}\wedge d\bar{z}_{2} +  F_{\overline{1}\overline{2}}dz_{1} \wedge d\bar{z}_{2}- F_{12}dz_{2}\wedge d\bar{z}_{1}. \nonumber
\end{align}
Using (\ref{eq:4.10}), it follows that
\begin{equation}
D_{A}^{*}F_{A}= \underbrace{J_{1}^{'}dz_{1} +J_{2}^{'}dz_{2}+ J_{\overline{1}}^{'}d\bar{z}_{1}+ J_{\overline{2}}^{'}d\bar{z}_{2}}_{J^{'}},
\label{eq:4.14}
\end{equation}
where the $C^{*}$-algebraic coefficients $J_{j}^{'}, J_{\overline{j}}^{'}, j= 1, 2$ of the above Yang-Mills non-abelian current $J^{'}$ are given by
\begin{align}
J_{1}^{'} &= -2\left(-\partial_{2}F_{1\overline{2}}+ \bar{\partial}_{1}F_{1\overline{1}}- \bar{\partial}_{2}F_{12}- \left[A_{1}^{*},F_{1\overline{1}}\right]+ \left[A_{2}^{*},F_{12}\right]+ \left[A_{\overline{2}}^{*},F_{1\overline{2}}\right]\right), \nonumber \\
J_{2}^{'} &= -2\left(-\partial_{1}F_{12}- \partial_{2}F_{2\overline{2}}+ \bar{\partial}_{1}F_{2\overline{1}}- \left[A_{1}^{*},F_{2\overline{1}}\right]+ \left[A_{\overline{1}}^{*},F_{12}\right]+ \left[A_{\overline{2}}^{*},F_{2\overline{2}}\right]\right), \label{eq:4.15}  \\
J_{\overline{1}}^{'} &= -2\left(-\partial_{1}F_{1\overline{2}}- \bar{\partial}_{1}F_{\overline{1}\overline{2}}+ \bar{\partial}_{2}F_{2\overline{2}}+ \left[A_{1}^{*},F_{\overline{1}\overline{2}}\right]- \left[A_{2}^{*},F_{2\overline{2}}\right]+ \left[A_{\overline{1}}^{*},F_{1\overline{2}}\right]\right), \nonumber \\
J_{\overline{2}}^{'} &= -2\left(-\partial_{1}F_{1\overline{2}}- \bar{\partial}_{1}F_{\overline{1}\overline{2}}+ \bar{\partial}_{2}F_{2\overline{2}}+ \left[A_{1}^{*},F_{\overline{1}\overline{2}}\right]- \left[A_{2}^{*},F_{2\overline{2}}\right]+ \left[A_{\overline{1}}^{*},F_{1\overline{2}}\right]\right). \nonumber
\end{align}

Given a connection form $A\in \Omega^1(\mathcal{A})$, we write $F_A$ as in (\ref{eq:4.6}). The following are direct consequences of (\ref{eq:2.3}) and (\ref{eq:2.4}), respectively.
\begin{prop}\label{esdmsd}
With respect to the Euclidean metric,
\begin{align}
F_{A} \textup{ SD} & \Leftrightarrow F_{1\overline{2}}= F_{2\overline{1}}=0, F_{1\overline{1}}=F_{2\overline{2}}, \label{eq:4.18} \\
F_{A} \textup{ ASD} & \Leftrightarrow F_{12}= F_{\overline{1}\overline{2}}=0, F_{1\overline{1}}=-F_{2\overline{2}}. \label{eq:4.19} 
\end{align}

And with respect to the Minkowski metric, we have
\begin{align}
F_{A} \textup{ SD} & \Leftrightarrow F_{2\overline{1}}=iF_{12}, F_{\overline{1}\overline{2}}=iF_{1\overline{2}}, F_{2\overline{2}}=iF_{1\overline{1}}, \label{eq:4.20} \\
F_{A} \textup{ ASD} & \Leftrightarrow F_{2\overline{1}}=-iF_{12}, F_{\overline{1}\overline{2}}=-iF_{1\overline{2}}, F_{2\overline{2}}=-iF_{1\overline{1}}. \label{eq:4.21} 
\end{align}
\end{prop}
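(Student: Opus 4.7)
The approach is pure linear algebra: the six wedge products
$\{dz_1\wedge dz_2,\ d\bar z_1\wedge d\bar z_2,\ dz_1\wedge d\bar z_1,\ dz_2\wedge d\bar z_2,\ dz_1\wedge d\bar z_2,\ dz_2\wedge d\bar z_1\}$
form a basis of $\Omega^2(\C^2)$, and the curvature decomposition (4.6)--(4.7) already displays $F_A$ in this basis with coefficients $F_{12}, F_{\bar 1\bar 2}, F_{1\bar 1}, F_{2\bar 2}, F_{1\bar 2}, F_{2\bar 1}$. The plan is to equate this expansion to the general form of an element of $\Omega_+^2$ (resp.\ $\Omega_-^2$) as given by (2.3) in the Euclidean case and (2.4) in the Minkowski case, and read off the conditions on the coefficients.

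For the Euclidean metric, a generic element of $\Omega_+^2$ has the form
\[
\alpha\, dz_1\wedge dz_2 + \beta\, d\bar z_1\wedge d\bar z_2 + \gamma\,(dz_1\wedge d\bar z_1 + dz_2\wedge d\bar z_2),
\]
so setting this equal to (4.6) and matching the six basis coefficients forces $F_{12}=\alpha$, $F_{\bar 1\bar 2}=\beta$, $F_{1\bar 1}=F_{2\bar 2}=\gamma$, and $F_{1\bar 2}=F_{2\bar 1}=0$, which is exactly (4.18). The ASD case (4.19) follows by the identical argument using the basis of $\Omega_-^2$, which forces $F_{12}=F_{\bar 1\bar 2}=0$ (no $dz_1\wedge dz_2$ or $d\bar z_1\wedge d\bar z_2$ appears) and $F_{1\bar 1}=-F_{2\bar 2}$.

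For the Minkowski metric the bookkeeping is slightly more delicate because each basis element in (2.4) is itself a linear combination of two of the six standard wedge products. Writing a generic SD form as
\[
a\,(dz_1\wedge dz_2 + i\,dz_2\wedge d\bar z_1) + b\,(dz_1\wedge d\bar z_1 + i\,dz_2\wedge d\bar z_2) + c\,(dz_1\wedge d\bar z_2 + i\,d\bar z_1\wedge d\bar z_2)
\]
and equating to (4.6), matching the six coefficients yields $F_{12}=a$, $F_{2\bar 1}=ia$, $F_{1\bar 1}=b$, $F_{2\bar 2}=ib$, $F_{1\bar 2}=c$, $F_{\bar 1\bar 2}=ic$, which collapses to (4.20). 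The ASD case (4.21) is obtained by reversing the sign of $i$ throughout.

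There is no real obstacle beyond carefully pairing off which standard basis vectors appear in each generator of $\Omega_\pm^2$; the only thing to watch is the sign convention for $d\bar z_1\wedge d\bar z_2$ versus any reordering, and the fact that in the Minkowski case the basis vectors of $\Omega_\pm^2$ are \emph{linear combinations} rather than single monomials, so one must make sure the system of six scalar equations in three unknowns is consistent, which it is precisely under the three stated conditions.
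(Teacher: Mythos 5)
Your proposal is correct and is essentially the argument the paper intends: the paper states the proposition as a ``direct consequence'' of the basis descriptions (\ref{eq:2.3}) and (\ref{eq:2.4}), and your coefficient-matching of the expansion (\ref{eq:4.6}) against a generic element of $\Omega^2_{\pm}$ is exactly that computation, carried out explicitly (the only unstated point, which is immediate, is that the eigenspace decomposition of $\star$ extends coefficient-wise to $\mathcal{A}$-valued forms).
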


Two observations regarding Proposition \ref{esdmsd} and the preceding computation are worth mentioning.

{\bf 1}. We let $GL(\mathcal{A})$ denote the set of invertible elements in ${\mathcal A}$. Any smooth function $g: {\mathcal M}\to GL(\mathcal{A})$ gives rise to a so-called {\em gauge transform} $A\to A_g:= g^{-1}Ag+ g^{-1}dg$. Direct computation gives $F_{A_g}= g^{-1}F_{A}g$, namely,
\begin{equation}\label{gaugeinv}
(F_{A_g})_{i\bar{j}}=g^{-1}(F_{A})_{i\bar{j}}g, \ i, j=1, 2,\ \ \ (F_{A_g})_{12}=g^{-1}(F_{A})_{12}g,\ \ \ (F_{A_g})_{\bar{12}}=g^{-1}(F_{A})_{\bar{12}}g.
\end{equation}

It thus follows directly from Proposition \ref{esdmsd} that, under both the Euclidean and Minkowski metrics, the curvature form $F_{A_g}$ is SD (resp. ASD) if and only if $F_{A}$ is. Two connection forms $A$ and $A'$ are said to be {\em gauge equivalent} if there exists such a function $g$ so that $A_g=A'$. If the values of $g$ are unitary operators, then we say that $A_g$ is a {\em unitary gauge transformation} of $A$. In this case, it can be verified that Yang-Mills equations (\ref{eq:4.5}) are transformed into:
$D_{A_g}F_{A_g}=0,\ D^*_{A_g}F_{A_g}=g^{-1}Jg$.

If the values of $g$ and $A$ commute, we say that $A_g$ is a {\em commuting gauge transformation} of $A$, in which case we have $A_g = A + g^{-1}dg$ and $F_{A_g}= F_{A}$. In addition, if all values of $g$ commute with each other, then $g^{-1}dg = d(\log g)$ is an exact differential (on a simply-connected domain). In particular, $A$ and $A'$ are gauge-equivalent if their difference is an exact form whose values commute with each other and commute point-wise with $A$. We will see such an occurrence in Ch. 6.


{\bf 2}. The algebra $\mathcal{A}$ has a natural Lie algebra structure with the bracket defined by the commutator. Given two subsets $S_1, S_2\subseteq \mathcal{A}$, we set $[S_1, S_2]:=\text{span}\{[s_1, s_2]\mid s_i\in S_i, i=1, 2\}$. In the case where $A$ in (\ref{eq:4.4}) is a constant connection form, i.e., $A_i, A_{\bar{i}}, i=1, 2$, are fixed elements in $\mathcal{A}$, then the partial derivatives in (\ref{eq:4.12}) and (\ref{eq:4.15}) vanish, and hence
the components of $J$ and $J'$ belong to $[\mathcal{A}, [\mathcal{A}, \mathcal{A}]]$. A Lie algebra $\mathcal{A}$ is said to be {\em 3-nilpotent} if $[\mathcal{A}, [\mathcal{A}, \mathcal{A}]]=0$. Thus (\ref{eq:4.12}) and (\ref{eq:4.15}) lead to the following observation made in \cite{Sch88}.

\begin{cor}
If $\mathcal{A}$ is a $3$-nilpotent Lie algebra, then every $\mathcal{A}$-valued constant connection form $A$ gives rise to a solution $F_A$ to the source-free Yang-Mills equations with respect to the Euclidean and Minkowski metrics.
\end{cor}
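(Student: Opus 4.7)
The plan is to prove the corollary by unwinding the two Yang-Mills equations in turn and showing that both reduce to identities once the hypotheses are imposed. Write $A = A_1 dz_1 + A_2 dz_2 + A_{\bar 1}d\bar z_1 + A_{\bar 2}d\bar z_2$ with all coefficients constant elements of $\mathcal{A}$.

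First I would dispense with the equation $D_A F_A = 0$. As remarked in the text immediately after \textup{(\ref{eq:4.8})}, this is precisely the Bianchi identity, which holds for any smooth connection form on $\C^2$ and requires no further hypothesis on $A$ or $\mathcal{A}$. So the only real content is the second Yang-Mills equation $D_A^* F_A = 0$.

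Next, I would unwind the curvature components \textup{(\ref{eq:4.7})} under the constancy assumption. Since every $A_j$ and $A_{\bar k}$ is independent of $z$, the partial derivatives $\partial_j A_k$ and $\bar\partial_j A_{\bar k}$ all vanish, leaving
\[
F_{12}=[A_1,A_2],\qquad F_{\bar 1\bar 2}=[A_{\bar 1},A_{\bar 2}],\qquad F_{j\bar k}=[A_j,A_{\bar k}].
\]
In particular, every component of $F_A$ lies in $[\mathcal{A},\mathcal{A}]$, and each such component is itself constant on $\C^2$.

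Now I would plug this into the formulas \textup{(\ref{eq:4.12})} for the Euclidean current $J$ and \textup{(\ref{eq:4.15})} for the Minkowski current $J'$. The first observation is that since every $F_{jk}, F_{\bar j \bar k}, F_{j\bar k}$ is constant, all the explicit differential terms $\partial_\bullet F_\bullet$ and $\bar\partial_\bullet F_\bullet$ drop out. What survives in each coefficient of $J$ and $J'$ is a sum of expressions of the form $[A_\bullet^*, F_{\bullet\bullet}]$, that is, commutators of elements of $\mathcal{A}$ with elements of $[\mathcal{A},\mathcal{A}]$. Hence every coefficient of $J$ and $J'$ lies in $[\mathcal{A},[\mathcal{A},\mathcal{A}]]$. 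Invoking the hypothesis that $\mathcal{A}$ is $3$-nilpotent, so that $[\mathcal{A},[\mathcal{A},\mathcal{A}]] = 0$, forces $J = 0$ under the Euclidean metric and $J' = 0$ under the Minkowski metric, and $F_A$ is indeed a source-free solution in both cases.

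The calculation really is essentially bookkeeping: there is no serious analytic or algebraic obstacle, since the Bianchi identity kills the first equation automatically and constancy of $A$ collapses \textup{(\ref{eq:4.12})} and \textup{(\ref{eq:4.15})} into purely Lie-algebraic expressions. The only point requiring a moment of care is the observation that $A^*$ still lies in $\mathcal{A}$, so that the bracket $[A_\bullet^*, F_{\bullet\bullet}]$ is a legitimate element of $[\mathcal{A},[\mathcal{A},\mathcal{A}]]$; this is immediate from the $C^*$-algebra structure.
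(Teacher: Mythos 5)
Your proposal is correct and follows essentially the same route as the paper: the text preceding the corollary observes exactly that constancy of $A$ kills the derivative terms in (\ref{eq:4.12}) and (\ref{eq:4.15}), so the components of $J$ and $J'$ are commutators of elements of $\mathcal{A}$ with elements of $[\mathcal{A},\mathcal{A}]$, hence vanish under $3$-nilpotency, while $D_AF_A=0$ is the Bianchi identity. Your added remark that $A_\bullet^*$ still lies in $\mathcal{A}$ is a sensible (if minor) point of care that the paper leaves implicit.
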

\noindent Furthermore, in view of (\ref{eq:4.7}), if $[\mathcal{A}, \mathcal{A}]\neq 0$ then such a solution $F_A$ can be nontrivial. 

\subsection{The Yang-Mills Lagrangian}

In the $C^*$-algebra setting here, the adjoint operation on ${\mathcal A}$ makes it possible to write down the Lagrangian for Yang-Mills equations in complex variables. Let $A$ and $J$ in $\Omega^{1} (\mathcal{A})$ be the connection form and resp. the current form.
\begin{defn}
The Yang-Mills Lagrangian is defined as 
\[L(A)=-A\wedge \star J^*+\frac{1}{2}F_A\wedge \star F_A^*,\]
and the Yang-Mills functional is defined as 
\[H(A)=\int_{{\mathbb C}^2}\Tr L(A)=(A, -J)+\frac{1}{2} (F_A, F_A)\]
for all $A$ such that the integral is convergent.
\end{defn}

\begin{defn}\label{critical}
A connection form $A\in \Omega^{1} (\mathcal{A})$ is said to be critical if for every $B\in \Omega^{1} (\mathcal{A})$ we have $\frac{d H(A+tB)}{dt}\mid_{t=0}=0$.
\end{defn}
Observe that if $A$ and $J$ are real $1$-forms then $H(A)$ is real, and thus $A$ is critical when the value of $H(A)$ is extremal. When $A$ and $J$ are complex, the functional $H(A)$ can take on complex values, and hence the concept of extremal value is not well-defined in this case. However, Definition \ref{critical} remains valid.

\begin{prop}
A connection form $A\in \Omega^{1} (\mathcal{A})$ satisfies Yang-Mills equations if and only if it is critical.
\end{prop}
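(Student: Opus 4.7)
The plan is a standard first-variation calculation for the functional $H$. Note first that $D_A F_A = 0$ holds automatically by the Bianchi identity (already invoked in (\ref{eq:4.8})), so the substantive content of the proposition is the equivalence between criticality and the inhomogeneous equation $D_A^* F_A = J$.

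First I would compute the infinitesimal change of the curvature in the direction of $B$. Expanding
\[
F_{A+tB}=d(A+tB)+(A+tB)\wedge(A+tB)=F_A+t\bigl(dB+A\wedge B+B\wedge A\bigr)+t^{2}B\wedge B,
\]
and invoking formula (\ref{eq:3.7}) applied to the $1$-form $B$, the linear term equals $D_A B$, so $\tfrac{d}{dt}F_{A+tB}\bigl|_{t=0}=D_A B$.

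Next I would substitute into $H(A+tB)=-(A+tB,J)+\tfrac12(F_{A+tB},F_{A+tB})$ and differentiate at $t=0$. The current term contributes $-(B,J)$, while the product rule applied to the pairing in the curvature term contributes $\tfrac12\bigl[(D_A B,F_A)+(F_A,D_A B)\bigr]$. Then, using the adjoint identity $(D_A\alpha,\beta)=(\alpha,D_A^*\beta)$ from Proposition \ref{D*} (under the Euclidean metric) or Corollary \ref{D*M} (under the Minkowski metric), each cross term is rewritten so that $D_A$ is transferred off $B$, producing an expression of the form $(B,\,D_A^*F_A-J)$.

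Finally, criticality requires this expression to vanish for every $B\in\Omega^{1}(\mathcal{A})$. By non-degeneracy of the pairing $(\cdot,\cdot)$ on $\Omega^{1}(\mathcal{A})$, this is equivalent to $D_A^*F_A-J=0$, and combined with the Bianchi identity this is precisely the Yang-Mills system (\ref{eq:4.5}).

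The main obstacle will be the bookkeeping between the sesquilinearity of the pairing and the real derivative $\tfrac{d}{dt}$: the combination $(D_A B,F_A)+(F_A,D_A B)$ naturally produces $2\,\mathrm{Re}(B,D_A^*F_A)$, so a priori the $B$-variation alone recovers only the real part of $(B,D_A^*F_A-J)$. Exploiting that $\Omega^{1}(\mathcal{A})$ is a complex vector space and also varying along $iB$ is needed to separate the real and imaginary contributions and extract the full complex identity $D_A^*F_A=J$.
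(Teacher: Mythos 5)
Your proposal follows essentially the same route as the paper's own proof: expand $F_{A+tB}=F_A+tD_AB+t^2B\wedge B$, differentiate $H(A+tB)$ at $t=0$, transfer $D_A$ off $B$ using Proposition \ref{D*} (or Corollary \ref{D*M}), and conclude $D_A^*F_A=J$ from non-degeneracy of the pairing. The only difference is how the sesquilinearity wrinkle is handled --- the curvature cross-terms yield $\mathrm{Re}\,(B,D_A^*F_A)$ rather than $(B,D_A^*F_A)$, which the paper addresses by normalizing $B$ with a unimodular scalar while you propose varying along $iB$ as well; these are two versions of the same normalization and do not change the substance of the argument.
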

\proof Since the equation $D_AF_A=0$ always holds due to the Bianchi identity, we only need to consider the inhomogeneous equation in (\ref{eq:4.5}).
 Let $B\in \Omega^{1} (\mathcal{A})$ be arbitrary and set $A_t=A+tB$, where $t\in \mathbb R$.
Then by direct computation, we have
\begin{align*}
F_{A_t}&=dA_t+A_t\wedge A_t\\
&=F_A+t(dB+A\wedge B+B\wedge A)+t^2B\wedge B\\
&=F_A+tD_AB+t^2B\wedge B.
\end{align*}
Therefore,
\begin{align*}
H(A_t)&=\int_{{\mathbb R}^4}-\text{Tr}A_t\wedge \star J^*+\frac{1}{2}\text{Tr}\left(F_{A_t}\wedge \star F_{A_t}^*\right)\\
&=\int_{{\mathbb C}^2}-\langle A, J\rangle+\frac{1}{2}\langle F_A, F_A\rangle+
     t\left(-\langle B, J\rangle+\frac{1}{2}\langle F_A, D_A B\rangle+\frac{1}{2}\langle D_A B, F_A \rangle+O(t)\right)\vol,
     \end{align*}
where $O(t)$ stands for the remainder term with factor $t$. Since $B$ is arbitrary, up to multiplication by a unimodular scalar, we may assume the integral of $\langle F_A, D_A B\rangle$ is real. Thus 
\[\int_{{\mathbb C}^2}\langle F_A, D_A B\rangle \vol=\int_{{\mathbb C}^2} \langle D_A B, F_A\rangle \vol,\] 
and it follows that
\begin{align*}
\frac{d H(A+tB)}{dt}\mid_{t=0}&=\int_{{\mathbb C}^2}\langle B, -J\rangle+\langle D_AB, F_A\rangle \vol\\
&=\int_{{\mathbb C}^2}\langle B, D_A^*F_A-J\rangle \vol=(B, D_A^*F_A-J).
\end{align*}
It follows that $A$ is critical if and only if $( B, D_A^*F_A-J)=0$ for every $B\in \Omega^{1} (\mathcal{A})$, which is so if and only if $D_A^*F_A=J$.\qed

Observe that in vacuum we have $J=0$, and hence $H(A)\geq 0$ for any $A\in \Omega^{1}\left(\mathcal{A}\right)$. Thus, if $H(A)$ is extremal, then $A$ is a solution to the homogeneous Yang-Mills equations.

\section{Special Types of Connection Forms}

This section aims to look for new solutions to Yang-Mills equations. The discussion in Sections 3.3 and 3.4 lights the path for our approach.

\subsection{Maurer-Cartan Forms and Skew-Hermitian Forms}

 In this paper, a connection form $A\in \Omega^{1}\left(\mathcal{A}\right)$ is said to be a {\em Maurer-Cartan form} if it satisfies the {\em Maurer-Cartan equation}
\begin{equation}\label{MCartan}
F_A=dA+A\wedge A=0,
\end{equation}
\noindent i.e., it gives a flat curvature form. For example, if there exists a smooth function $g:\C^2\to GL({\mathcal A})$ such that $A(z)=g^{-1}(z)dg(z)$, then differentiating
the identity $g^{-1}\left(z\right)g\left(z\right)=I$ yields $d(g^{-1})=-g^{-1}(dg)g^{-1}$, and it follows that $A$ is a Maurer-Cartan form. For a general Maurer-Cartan form $A$, taking the adjoint of (\ref{MCartan}), we obtain\[0=(dA+A\wedge A)^*=dA^*-A^*\wedge A^*.\]Here, the negative sign appears because \[(A_jdz_j\wedge A_kdz_k)^*=A_k^*A_j^*d\bar{z}_j\wedge d\bar{z}_k=-A_k^*d\bar{z}_k\wedge A_j^*d\bar{z}_j.\]
 
The $C^*$-algebra setting also makes the following definition possible.
\begin{defn}
A connection form $A\in\Omega^{1}\left(\mathcal{A}\right)$ is said to
be 
\begin{enumerate}[label=\textup{(\alph*)}, leftmargin=*]
\item Hermitian if $A^{*}=A$, and
\item skew-Hermitian if $A^{*}=-A$.
\end{enumerate}
\end{defn}
The following is easy to check.
\begin{lem}\label{skewH}
Let $A\in\Omega^{1}\left(\mathcal{A}\right)$ be a connection form. Then
\begin{enumerate}[label=\textup{(\alph*)}, leftmargin=*]
\item $A$ is Hermitian if and only if $A=\eta+\eta^*$;
\item $A$ is skew-Hermitian if and only if $A=\eta-\eta^*$, 
\end{enumerate}
where $\eta=A_1dz_1+A_2dz_2\in \Omega^{1}\left(\mathcal{A}\right)$ is uniquely determined by $A$.
\end{lem}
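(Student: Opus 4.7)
The plan is to unpack both sides of each equivalence in coordinates and then compare coefficients. Writing out a generic connection form
\[
A = A_1\,dz_1 + A_2\,dz_2 + A_{\bar 1}\,d\bar z_1 + A_{\bar 2}\,d\bar z_2,
\]
the key computation is to determine $A^{*}$. Since the involution on $\mathcal{A}$-valued forms satisfies $(a\,dz_j)^{*} = a^{*}\,d\bar z_j$ and $(a\,d\bar z_j)^{*} = a^{*}\,dz_j$, one gets
\[
A^{*} = A_{\bar 1}^{*}\,dz_1 + A_{\bar 2}^{*}\,dz_2 + A_1^{*}\,d\bar z_1 + A_2^{*}\,d\bar z_2.
\]
This is the one small bookkeeping step; after that the rest of the proof is just matching coefficients.

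For (a), first I would take $\eta = A_1\,dz_1 + A_2\,dz_2$ (the $(1,0)$-part of $A$), so that $\eta^{*} = A_1^{*}\,d\bar z_1 + A_2^{*}\,d\bar z_2$. Then $\eta + \eta^{*}$ is manifestly Hermitian, proving the ``if'' direction. For ``only if,'' I would set $A^{*} = A$ and compare coefficients basis-form by basis-form: this forces $A_{\bar j} = A_j^{*}$ for $j = 1,2$, whence $A = \eta + \eta^{*}$ with this same $\eta$. The uniqueness of $\eta$ is automatic since $\eta$ is pinned down as the part of $A$ that involves only $dz_1$ and $dz_2$, and the decomposition $\Omega^1 = \Omega^{1,0}\oplus \Omega^{0,1}$ is a direct sum.

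For (b), the argument is identical modulo a sign: $\eta - \eta^{*}$ is skew-Hermitian by the same formula for the involution, and conversely $A^{*} = -A$ forces $A_{\bar j} = -A_j^{*}$, giving $A = \eta - \eta^{*}$ with $\eta = A_1\,dz_1 + A_2\,dz_2$ again uniquely determined.

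There is no real obstacle here; the only thing to be careful about is the formula for the adjoint on $1$-forms, in particular that the involution swaps the $\Omega^{1,0}$ and $\Omega^{0,1}$ summands rather than preserving them. Once that is in hand, the lemma reduces to a direct coefficient comparison.
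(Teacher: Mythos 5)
Your proof is correct and is exactly the direct verification the paper has in mind (the paper omits the proof as ``easy to check''): your formula $A^{*}=A_{\bar 1}^{*}dz_1+A_{\bar 2}^{*}dz_2+A_1^{*}d\bar z_1+A_2^{*}d\bar z_2$ matches the convention the paper uses explicitly in Section 3.4, and the coefficient comparison plus the directness of $\Omega^1=\Omega^{1,0}\oplus\Omega^{0,1}$ settles both parts and the uniqueness of $\eta$. Nothing is missing.
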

The next three lemmas show some connections among Maurer-Cartan forms, holomorphic forms, and skew-Hermitian forms.
\begin{lem}
A form $\eta=A_1dz_1+A_2dz_2\in  \Omega^{1}\left(\mathcal{A}\right)$ is Maurer-Cartan if and only if it is holomorphic and 
\begin{equation}\label{FF12}
\partial_1A_2-\partial_2A_1+[A_1, A_2]=0.
\end{equation}
\end{lem}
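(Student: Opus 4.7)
The plan is to prove this lemma by direct expansion of the Maurer-Cartan equation $d\eta + \eta\wedge\eta = 0$ for the form $\eta = A_1\,dz_1 + A_2\,dz_2$, and then read off the vanishing conditions by comparing coefficients across the different $(p,q)$-bidegrees.

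First I would compute $d\eta$. Since $d = \partial + \bar\partial$, applying $d$ to each coefficient $A_j$ produces terms of the form $\partial_k A_j\, dz_k \wedge dz_j$ and $\bar\partial_k A_j\, d\bar z_k \wedge dz_j$. After collecting, $d\eta$ splits naturally into a $(2,0)$-part $(\partial_1 A_2 - \partial_2 A_1)\,dz_1\wedge dz_2$ and a $(1,1)$-part consisting of the four terms $-\bar\partial_k A_j\, dz_j\wedge d\bar z_k$ for $j,k\in\{1,2\}$.

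Next I would compute $\eta\wedge\eta$. Because $\eta$ contains only $dz_1$ and $dz_2$ terms, the wedge collapses to a single $(2,0)$-term $[A_1,A_2]\,dz_1\wedge dz_2$, using $dz_1\wedge dz_1 = dz_2\wedge dz_2 = 0$ and $dz_2\wedge dz_1 = -dz_1\wedge dz_2$. Adding the two contributions gives
\begin{equation*}
d\eta + \eta\wedge\eta \;=\; \bigl(\partial_1 A_2 - \partial_2 A_1 + [A_1,A_2]\bigr)\,dz_1\wedge dz_2 \;-\; \sum_{j,k=1}^{2} \bar\partial_k A_j\, dz_j \wedge d\bar z_k.
\end{equation*}

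Finally I would observe that the $(2,0)$-form $dz_1\wedge dz_2$ and the four $(1,1)$-forms $dz_j\wedge d\bar z_k$ are linearly independent in $\Omega^2(\mathcal{A})$. Hence $F_\eta = 0$ forces both blocks to vanish separately: the $(1,1)$-block gives $\bar\partial_k A_j = 0$ for all $j,k$, which is precisely the holomorphy of $A_1$ and $A_2$, while the $(2,0)$-block gives the stated identity (\ref{FF12}). The converse is immediate from the same formula. There is no genuine obstacle here; the only mild care needed is getting the signs right when moving $d\bar z_k$ past $dz_j$, and noting that the wedge on $\mathcal{A}$-valued forms preserves the order of coefficients so that $\eta\wedge\eta$ produces a commutator rather than zero.
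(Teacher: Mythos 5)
Your proof is correct and follows essentially the same route as the paper: decompose $d\eta+\eta\wedge\eta$ into its $(2,0)$-part $\partial\eta+\eta\wedge\eta$ and its $(1,1)$-part $\bar\partial\eta$, and note that the Maurer--Cartan equation forces each bidegree component to vanish separately. Your version merely writes out the coefficients more explicitly; the content is identical.
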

\proof Observes that
\[d\eta+\eta\wedge \eta=\partial \eta+\eta\wedge \eta+\bar{\partial}\eta,\]
in which the first two terms on the right are $(2, 0)$-forms and the third term is a $(1, 1)$-form. Thus, the Maurer-Cartan equation $d\eta+\eta\wedge \eta=0$ holds if and only if $\bar{\partial}\eta=0$ and 
\[\partial \eta+\eta\wedge \eta=(\partial_1A_2-\partial_2A_1+[A_1, A_2])dz_1\wedge dz_2=0,\] or in other words, the form $\eta$ is holomorphic and satisfies (\ref{FF12}).\qed

\noindent Note that the left-hand side of (\ref{FF12}) is $F_{12}$ in (\ref{eq:4.7}). 

\begin{lem}\label{skew}
If $A(z)=f(z)u^{-1}\left(z\right)du\left(z\right), z\in \C^2$, where $f$ is a real-valued scalar function and $u: \C^2\to GL({\mathcal A})$ is differentiable and unitary-valued, then $A$ is skew-Hermitian.
\end{lem}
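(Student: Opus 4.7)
The plan is to reduce the identity $A^{*}=-A$ to the defining relation $u^{*}u=I$ and the reality of $f$. First I would use the unitarity of $u$ to rewrite $A=f\cdot u^{*}\,du$, so that its components read $A_{j}=f\,u^{*}\partial_{j}u$ and $A_{\bar j}=f\,u^{*}\bar\partial_{j}u$ for $j=1,2$.

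Next, I would compute $A^{*}$ using the convention for the adjoint of a connection form that is implicit in the paper (e.g.\ in the formula $-A^{*}=-A_{1}^{*}d\bar z_{1}-A_{2}^{*}d\bar z_{2}-A_{\bar 1}^{*}dz_{1}-A_{\bar 2}^{*}dz_{2}$ displayed in Section 3.4): namely, that the $*$-adjoint swaps the coefficients of $dz_{j}$ with those of $d\bar z_{j}$ and takes the $C^{*}$-adjoints of these coefficients. Combining this with the Wirtinger identity $(\partial_{j}u)^{*}=\bar\partial_{j}(u^{*})$ (and symmetrically $(\bar\partial_{j}u)^{*}=\partial_{j}(u^{*})$), which follows at once from $\partial_{j}=\tfrac12(\partial_{x_{2j-2}}-i\partial_{x_{2j-1}})$ together with the conjugate-linearity of the $C^{*}$-adjoint, this rearranges to
\[
A^{*}\;=\;f\,(du^{*})\,u,
\]
where the scalar $f$ survives unchanged because it is real-valued and therefore commutes with everything.

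Finally, I would differentiate the identity $u^{*}u=I$ to obtain $(du^{*})u+u^{*}(du)=0$, so $(du^{*})u=-u^{*}(du)$, and substitute to conclude
\[
A^{*}\;=\;f\,(du^{*})\,u\;=\;-f\,u^{*}\,du\;=\;-A.
\]
The only real obstacle is notational bookkeeping: being careful that the $*$-adjoint on a $1$-form acts coefficient-by-coefficient \emph{and} swaps the $dz_{j}$/$d\bar z_{j}$ slots, and that $\partial_{j}$ and $\bar\partial_{j}$ interchange under the $C^{*}$-adjoint. Once these two conventions are pinned down, the algebra collapses in one line via the differentiated unitarity relation.
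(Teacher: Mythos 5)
Your proposal is correct and follows essentially the same route as the paper: reduce to the unitarity relation by writing $A^{*}=f\,(du^{*})\,u$ (using $u^{-1}=u^{*}$ and the fact that the adjoint of a $1$-form swaps $dz_{j}$ with $d\bar z_{j}$ while conjugating coefficients, so $(du)^{*}=d(u^{*})$), then differentiate $u^{*}u=I$ to conclude $A^{*}=-A$. Your treatment is slightly more explicit about the componentwise adjoint convention and carries the real factor $f$ throughout rather than reducing to the case $f=1$, but the argument is the same.
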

\proof
It suffices to prove the case $A=u^{-1}\left(z\right)du\left(z\right)$.
Then its adjoint is given by $A^{*}=du^{*}\left(z\right)\left(u^{-1}\left(z\right)\right)^{*}$.
Since $u$ is unitary-valued, $u^{-1}\left(z\right)=u^{*}\left(z\right)$ and thus $A^{*}=du^{-1}\left(z\right)u\left(z\right)$. Differentiating
the identity $u^{-1}\left(z\right)u\left(z\right)=I,$ we have the equivalence
\[
du^{-1}\left(z\right)u\left(z\right)+u^{-1}\left(z\right)du\left(z\right)=0\iff du^{-1}\left(z\right)u\left(z\right)=-u^{-1}\left(z\right)du\left(z\right).
\]
Hence, $A^{*}=-A$ as needed.\qed

\begin{lem} The skew-Hermiticity is unitary gauge invariant. \end{lem}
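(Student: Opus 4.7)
The plan is to compute the adjoint of $A_g := g^{-1}Ag + g^{-1}dg$ directly and show it equals $-A_g$ whenever $A^* = -A$ and $g$ is unitary-valued. The computation splits naturally into handling the conjugation term $g^{-1}Ag$ and the affine term $g^{-1}dg$ separately.

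First I would dispose of the conjugation piece. Writing $A = A_j dz_j + A_{\bar k} d\bar z_k$ and using that adjoint of an $\mathcal{A}$-valued form acts by taking the $*$ on coefficients (with reversal of scalar-multiplication order) and by complex-conjugating the differentials, one obtains $(g^{-1} A_j g)^* = g^* A_j^* (g^{-1})^*$ at the level of coefficients. Since $g$ is unitary, $g^* = g^{-1}$ and $(g^{-1})^* = g$, so this collapses to $g^{-1}A_j^* g$. Reassembling, $(g^{-1}Ag)^* = g^{-1}A^* g$. If $A^* = -A$, this piece already contributes $-g^{-1}Ag$, which is the expected negative of the corresponding piece in $A_g$.

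The heart of the proof is showing $(g^{-1}dg)^* = -g^{-1}dg$. The same adjoint rule gives $(g^{-1}\partial_j g)^* = (\partial_j g)^* (g^{-1})^* = (\bar\partial_j g^*)\, g$, and similarly for the $d\bar z$ components, so $(g^{-1}dg)^* = (dg^*)\,g = (dg^{-1})\,g$ by unitarity. Now differentiating the identity $g^{-1}g = I$ yields $(dg^{-1})\,g + g^{-1}(dg) = 0$, exactly as in the proof of Lemma 5.5 above, hence $(dg^{-1})\,g = -g^{-1}dg$. Combining, $(g^{-1}dg)^* = -g^{-1}dg$, i.e.\ $g^{-1}dg$ is itself skew-Hermitian.

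Adding the two pieces, $(A_g)^* = g^{-1}A^*g - g^{-1}dg = -g^{-1}Ag - g^{-1}dg = -A_g$, completing the proof. The only delicate point is the bookkeeping for the $*$-operation on $\mathcal{A}$-valued $1$-forms — it reverses the order of the scalar coefficients (so that $(fh)^* = h^* f^*$) but acts pointwise on the real basis $d\bar z_j$ — and the sign in $(dg^{-1})g = -g^{-1}dg$ from differentiating the inverse. Neither is really an obstacle; the key leverage in both steps comes from the unitarity condition $g^* = g^{-1}$, without which the conjugation term would produce $(g^{-1}Ag)^* = g^* A^* g^{-1*}$, which need not equal a gauge conjugate of $A^*$.
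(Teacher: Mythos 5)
Your proposal is correct and follows essentially the same route as the paper's own proof: compute $(A_g)^*$ termwise, use unitarity to get $(g^{-1}Ag)^* = g^{-1}A^*g = -g^{-1}Ag$, and differentiate $g^*g=I$ to obtain $(dg^*)g = -g^{-1}dg$, hence $(g^{-1}dg)^* = -g^{-1}dg$. The only cosmetic difference is that the paper runs the two computations together in a single display rather than isolating the skew-Hermiticity of $g^{-1}dg$ as a separate observation.
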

\proof
Let $A\in\Omega^{1}\left(\mathcal{A}\right)$ be skew-Hermitian and consider $A_u= u^{-1}Au +u^{-1}du$, where $u\in \Omega^0(\mathcal{A})$ is unitary-valued. We need to show that $A_u$ is skew-Hermitian. Indeed,
\begin{align}
\big(A_u\big)^{*} &= \left(u^{-1}Au +u^{-1}du\right)^{*} \nonumber \\
&= u^{*}A^{*}\left(u^{-1}\right)^{*} + \left(du^{*}\right)\left(u^{-1}\right)^{*}\nonumber \\
&= -u^{-1}Au +\left(du^{*}\right)u.  \label{eq:4.16}
\end{align}
Applying $d$ to both sides of the equation $u^{*}u = I$, we have $\left(du^{*}\right)u + u^{*}du =0$, and this implies that 
\begin{equation}
\left(du^{*}\right)u = -u^{*}du = -u^{-1}du. \label{eq:4.17}
\end{equation}
Substituting (\ref{eq:4.17}) into the last line of (\ref{eq:4.16}), we obtain
\begin{equation*}
\big(A_u\big)^{*} = -u^{-1}Au -u^{-1}du = -A_u,
\end{equation*}
which proves the lemma.\qed

The following fact highlights the importance of skew-Hermitian forms.
\begin{prop}\label{SHsolution}
Suppose $A\in \Omega^1(\mathcal{A})$ is skew-Hermitian. 
\begin{enumerate}[label=\textup{(\alph*)}, leftmargin=*]
\item The curvature $F_{A}$ is skew-Hermitian.
\item If $F_{A}$ is SD or ASD, then it is a solution to the source-free Yang-Mills equations.
\end{enumerate}
\end{prop}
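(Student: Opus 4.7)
The plan is to handle (a) by a direct computation on $F_A = dA + A \wedge A$, and then feed the result into the formula for $D_A^*$ together with the Bianchi identity to get (b).

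For (a), I would take the adjoint term by term. Two facts already noted in the paper do all the work: first, the involution commutes with the exterior derivative, so $(dA)^* = d(A^*)$ (this is essentially the same computation as $\overline{df} = d\bar f$ on scalar functions, extended to $\mathcal{A}$-valued forms); second, the graded commutation rule $(\omega \wedge \eta)^* = (-1)^{pq}\eta^* \wedge \omega^*$ specializes for a 1-form $A$ to $(A \wedge A)^* = -A^* \wedge A^*$, which is exactly the sign spelled out in the Maurer-Cartan discussion preceding the statement. Substituting $A^* = -A$ then yields
\[
F_A^* \;=\; d(A^*) + (A \wedge A)^* \;=\; -dA \;-\; (-A)\wedge(-A) \;=\; -(dA + A\wedge A) \;=\; -F_A.
\]

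For (b), the homogeneous equation $D_A F_A = 0$ is automatic by the Bianchi identity noted just after (\ref{eq:4.8}), so only $D_A^* F_A = 0$ needs attention. By Proposition \ref{D*} in the Euclidean case and Corollary \ref{D*M} in the Minkowski case, one has $D_A^* = \varepsilon\,\star D_{-A^*}\star$ with $\varepsilon = -1$ or $\varepsilon = 1$ respectively. Skew-Hermiticity gives $-A^* = A$, hence $D_{-A^*} = D_A$, and therefore
\[
D_A^* F_A \;=\; \varepsilon\,\star D_A \star F_A.
\]
Now if $F_A$ is SD or ASD, then $\star F_A = \lambda F_A$ for a scalar $\lambda \in \{\pm 1\}$ in the Euclidean case by (\ref{eq:2.3}) and $\lambda \in \{\pm i\}$ in the Minkowski case by (\ref{eq:2.4}). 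Since $D_A$ is $\mathbb{C}$-linear, $D_A \star F_A = \lambda\, D_A F_A = 0$, and applying $\varepsilon \star$ gives $D_A^* F_A = 0$.

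The argument is essentially bookkeeping, and I do not anticipate any genuine obstacle. The only point requiring care is the sign $(A \wedge A)^* = -A^* \wedge A^*$ in part (a), because the opposite convention would flip the conclusion from skew-Hermitian to Hermitian; this sign is, however, exactly the one the paper has already committed to in the Maurer-Cartan paragraph, so the computation goes through cleanly.
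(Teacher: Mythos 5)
Your proof is correct and follows essentially the same route as the paper: part (b) is verbatim the paper's argument (Bianchi identity, $D_A^*=\varepsilon\star D_{-A^*}\star$ with $-A^*=A$, and the eigenvalue of $\star$ on SD/ASD forms), while part (a) rests on the same two sign identities $(dA)^*=dA^*$ and $(A\wedge A)^*=-A^*\wedge A^*$ that the paper records in the Maurer--Cartan paragraph. The only cosmetic difference is that the paper proves (a) by writing $A=\eta-\eta^*$ via Lemma \ref{skewH} and exhibiting $F_A$ in the form $\omega-\omega^*$, whereas you substitute $A^*=-A$ directly into $F_A=dA+A\wedge A$; your version is, if anything, slightly more streamlined.
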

\proof
In light of Lemma \ref{skewH} (b), we can write $A=\eta-\eta^*$. It follows that
\begin{align*}
F_A=d\eta -d\eta^* +(\eta-\eta^*)\wedge (\eta-\eta^*)=:\omega-\omega^*,
\end{align*}
where $\omega=(d\eta+\eta\wedge \eta-\eta\wedge \eta^*)$. This proves part (a).

For part (b), under the Euclidean metric, if $F_A$ is SD or ASD then $\star F_{A}=\pm F_{A}$. By Proposition \ref{D*} one has, 
\[D_{A}^{*} F_A= -\star D_{-A^{*}}\star\ F_A = -\star D_{A}\star F_{A}= \pm \star D_{A}F_{A}= 0.\]
The Minkowski metric case is similar.
\qed

SD and ASD solutions to Yang-Mills equations play important roles in Yang-Mills theory. Thus they warrant a dedicated name.
\begin{defn}
With $A$ being an $\mathcal{A}$-valued connection, if the associated curvature form $F_{A}$ is a nontrivial SD or ASD solution to Yang-Mills equations in vacuum, then it is said to be a Yang-Mills instanton solution.
\end{defn}

\begin{prop}
Suppose $\eta=A_1dz_1+A_2dz_2\in \Omega^1(\mathcal{A})$ is a normal Maurer-Cartan form and $A=\eta-\eta^*$. Then $F_A$ is an instanton solution to Yang-Mills equations under the Euclidean metric.
\end{prop}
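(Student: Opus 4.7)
The plan is to reduce the claim to Proposition \ref{SHsolution}(b). Since $A=\eta-\eta^*$, Lemma \ref{skewH}(b) shows $A$ is skew-Hermitian, so it suffices to verify that $F_A$ is ASD with respect to the Euclidean metric; the Yang-Mills solution property then follows automatically. By the characterization (\ref{eq:4.19}), ASD amounts to $F_{12}=F_{\bar 1\bar 2}=0$ together with $F_{1\bar 1}+F_{2\bar 2}=0$.

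With $A_{\bar j}=-A_j^*$ substituted into the curvature formulas (\ref{eq:4.7}), the vanishing $F_{12}=0$ is immediate from the Maurer-Cartan relation, via the lemma identifying that relation with holomorphicity of $A_1,A_2$ together with $\partial_1 A_2-\partial_2 A_1+[A_1,A_2]=0$. For the mixed components, holomorphicity of $A_j$ kills $\bar\partial_k A_j$ while anti-holomorphicity of $A_k^*$ kills $\partial_j A_k^*$, so $F_{j\bar k}$ collapses to $-[A_j,A_k^*]$; the pointwise normality of each $A_j$ then forces $F_{1\bar 1}=F_{2\bar 2}=0$, clearing the diagonal condition. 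The off-diagonal terms $F_{1\bar 2}=-[A_1,A_2^*]$ and $F_{2\bar 1}=-[A_2,A_1^*]$ play no role in the ASD test but carry the nontriviality required for the instanton terminology, since normality of individual $A_j$ does not force these cross-commutators to vanish.

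The computationally delicate step is $F_{\bar 1\bar 2}$. Using the fact that $A_k$ holomorphic implies $\bar\partial_j A_k^*=(\partial_j A_k)^*$, which is immediate from the Taylor expansion of $A_k$ in the $C^*$-algebra, substitution into (\ref{eq:4.7}) yields
\[
F_{\bar 1\bar 2}=(\partial_2 A_1-\partial_1 A_2)^*+[A_1^*,A_2^*].
\]
Invoking the Maurer-Cartan identity turns the first term into $[A_1,A_2]^*$, and the involution identity $[A_1,A_2]^*=-[A_1^*,A_2^*]$ then cancels the remaining commutator, giving $F_{\bar 1\bar 2}=0$. This sign-bookkeeping is the main obstacle; with it completed, Proposition \ref{SHsolution}(b) delivers that $F_A$ is a source-free Yang-Mills solution, and hence an instanton whenever the cross-commutators above are not identically zero.
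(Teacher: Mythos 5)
Your proof is correct and follows essentially the same route as the paper: both reduce to Proposition \ref{SHsolution}(b) by showing that the Maurer--Cartan relation and normality collapse $F_A$ to $-[A_1,A_2^*]\,dz_1\wedge d\bar{z}_2-[A_2,A_1^*]\,dz_2\wedge d\bar{z}_1$, which is ASD under the Euclidean metric. The only cosmetic difference is that the paper carries out the cancellation at the level of forms, using the adjoint Maurer--Cartan identity $d\eta^*=\eta^*\wedge\eta^*$ to obtain $F_A=-\eta\wedge\eta^*-\eta^*\wedge\eta$ in one step, whereas you verify the same cancellations component by component via (\ref{eq:4.7}).
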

\proof
In this case, using the Maurer-Cartan equation (\ref{MCartan}) we have 
\begin{align*}
F_A&=dA+A\wedge A\\
&=d\eta-d\eta^*+\eta\wedge \eta+\eta^*\wedge \eta^*-\eta\wedge \eta^*-\eta^*\wedge \eta\\
&=-\eta\wedge \eta^*-\eta^*\wedge \eta=-\sum_{1\leq i, j\leq 2}[A_j, A_k^*]dz_j\wedge d\bar{z}_k.
\end{align*}
Since $\eta$ is normal, we have 
\[F_A=-([A_1, A_2^*]dz_1\wedge d\bar{z}_2+[A_2, A_1^*]dz_2\wedge d\bar{z}_1),\]
which is ASD under the Euclidean metric by Proposition \ref{esdmsd}. The assertion then follows from Corollary \ref{SHsolution}.
\qed

\subsection{Pluriharmonic Connection Forms}
Recall that a smooth function $f$ on a domain $\Omega\in \C^n$ is said to be {\em pluriharmonic} if $\bar{\partial}\partial f=0$. And it is well-known that $f$ is pluriharmonic if and only if it is the real part of a holomorphic function on $\Omega$ \cite{Kr,range2013holomorphic}. The following definition is a natural extension of this idea to differential forms.
\begin{defn}
A connection form $A\in \Omega^1(\mathcal{A})$ is said to be pluriharmonic if $\bar{\partial}\partial A=0$.
\end{defn}
In the case $A$ is of the type $\eta-\eta^*$, where $\eta(z)=A_1(z)dz_1+A_2(z)dz_2\in \Omega^1(\mathcal{A})$, the components of the curvature form $F_A$ can be expressed as
\begin{align}
F_{{1}\overline{1}} &=-\partial_{1} A^*_{1} -\bar{\partial}_{1} A_{{1}} -\left[A_{{1}}, A^*_{1}\right],
\ F_{12} = \partial_{1} A_{2}- \partial_{2} A_{1} +\left[A_{1}, A_{2}\right], \hspace{3mm} \ F_{\overline{1}\overline{2}}=-(F_{12})^*, \label{eq:4.22}\\
F_{{2}\overline{2}} &=-\partial_{2} A^*_{2} -\bar{\partial}_{2} A_{{2}} -\left[A_{{2}}, A^*_{2}\right], \ F_{{1}\overline{2}} = -{\partial}_{1} A^*_{{2}} -\bar{\partial}_{2} A_{{1}} -\left[A_{{1}}, A^*_{{2}}\right], \  F_{2\overline{1}}=(F_{{1}\overline{2}})^*. \label{eq:4.23}
\end{align}
Clearly, if $\eta$ is holomorphic, the potential form $A=\eta-\eta^*$ is pluriharmonic and the above equations reduce to 
\begin{align}
F_{{1}\overline{1}} &=-\left[A_{{1}}, A^*_{1}\right],
\ F_{12} = \partial_{1} A_{2}- \partial_{2} A_{1} +\left[A_{1}, A_{2}\right], \ F_{\overline{1}\overline{2}}=-(F_{12})^*, \label{eq:4.24}\\
F_{{2}\overline{2}} &=-\left[A_{{2}}, A^*_{2}\right], \ F_{{1}\overline{2}} = -\left[A_{{1}}, A^*_{{2}}\right], \  F_{2\overline{1}}=(F_{{1}\overline{2}})^*.  \label{eq:4.25}
\end{align}

The simplest case is when $A_1$ and $A_2$ are constants. Then, the above equations are further reduced to:
\begin{align*}
F_{{1}\overline{1}} &=-\left[A_{{1}}, A^*_{1}\right],
\ F_{12} = \left[A_{1}, A_{2}\right], \hspace{4mm} \ F_{\overline{1}\overline{2}}=-(F_{12})^*, \\
F_{{2}\overline{2}} &=-\left[A_{{2}}, A^*_{2}\right], \ F_{{1}\overline{2}} = -\left[A_{{1}}, A^*_{{2}}\right], \  F_{2\overline{1}}=(F_{{1}\overline{2}})^*. 
\end{align*}
\noindent The following is an immediate consequence. 
\begin{prop}\label{const}
Assume $\eta=A_1dz_1+A_2dz_2$ is a constant $1$-form and $A=\eta-\eta^*$.

\begin{enumerate}[label=\textup{(\alph*)}, leftmargin=*]
\item With respect to the Euclidean metric,
\begin{align*}
F_{A} \textup{ SD} & \Leftrightarrow [A_1, A_2^*]=0,  [A_1, A_1^*]= [A_2, A_2^*], \\
F_{A} \textup{ ASD} & \Leftrightarrow  [A_1, A_2]=0, [A_1, A_1^*]+ [A_2, A_2^*]=0.
\end{align*}

\item With respect to the Minkowski metric, 
\begin{align*}
F_{A} \textup{ SD} & \Leftrightarrow \textup{$A_1, A_2$: normal and}\ [A_1^*-iA_1, A_2]=0,\\
F_{A} \textup{ ASD} & \Leftrightarrow \textup{$A_1, A_2$: normal and}\ [A_1^*+iA_1, A_2]=0.
\end{align*}
\end{enumerate}
\end{prop}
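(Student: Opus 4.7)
The proof is a direct computation: since the partial derivatives of $A_1, A_2, A_1^*, A_2^*$ all vanish, the six curvature components from (\ref{eq:4.22})–(\ref{eq:4.23}) collapse to pure commutators:
\begin{equation*}
F_{12}=[A_1,A_2],\quad F_{\bar 1\bar 2}=[A_1^*,A_2^*]=-[A_1,A_2]^*,\quad F_{j\bar k}=-[A_j,A_k^*].
\end{equation*}
From here the plan is simply to read off what Proposition \ref{esdmsd} says about these six quantities in each of the two metrics and rewrite everything as commutator identities.

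For the Euclidean case the equivalences in (\ref{eq:4.18}) and (\ref{eq:4.19}) are immediate. In the SD direction, $F_{1\bar 2}=0$ gives $[A_1,A_2^*]=0$, while $F_{2\bar 1}=0$ gives $[A_2,A_1^*]=0$; these two are adjoints of each other, so a single commutator condition remains. The equality $F_{1\bar 1}=F_{2\bar 2}$ becomes $[A_1,A_1^*]=[A_2,A_2^*]$. For the ASD direction, $F_{12}=0$ already forces $F_{\bar 1\bar 2}=-[A_1,A_2]^*=0$, so one is left with $[A_1,A_2]=0$ together with $[A_1,A_1^*]+[A_2,A_2^*]=0$.

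For the Minkowski case the key observation, and what I expect to be the main step, is the following Hermitian/skew-Hermitian trick applied to the condition $F_{2\bar 2}=\pm iF_{1\bar 1}$ from (\ref{eq:4.20})–(\ref{eq:4.21}). Since $[A_j,A_j^*]^*=[A_j,A_j^*]$, both $F_{1\bar 1}=-[A_1,A_1^*]$ and $F_{2\bar 2}=-[A_2,A_2^*]$ are Hermitian, whereas $\pm iF_{1\bar 1}$ is skew-Hermitian; a Hermitian element that also equals a skew-Hermitian element must vanish. Thus $[A_1,A_1^*]=[A_2,A_2^*]=0$, which is exactly the normality of $A_1$ and $A_2$.

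With normality in hand, the remaining SD equation $F_{2\bar 1}=iF_{12}$ rewrites as $-[A_2,A_1^*]=i[A_1,A_2]=-i[A_2,A_1]$, which rearranges to $[A_2,A_1^*-iA_1]=0$, i.e.\ $[A_1^*-iA_1,A_2]=0$. The ASD case is identical with the opposite sign, giving $[A_1^*+iA_1,A_2]=0$. The third relation in (\ref{eq:4.20}) (resp.\ (\ref{eq:4.21})) is then seen to follow by taking adjoints and using Lemma \ref{lem:comm}, so it contributes no new information. This completes the proof plan; no estimates or PDE arguments are needed, only careful bookkeeping of adjoints and the Hermitian-vs-skew-Hermitian dichotomy for self-commutators.
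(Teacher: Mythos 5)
Your computation of the constant-case curvature components, your treatment of the Euclidean case, and your handling of the diagonal Minkowski relation $F_{2\overline{2}}=\pm iF_{1\overline{1}}$ (a Hermitian element equal to a skew-Hermitian one must vanish, hence $A_1,A_2$ are normal) reproduce the paper's argument exactly, as does your rewriting of $F_{2\overline{1}}=\pm iF_{12}$ as $[A_1^*\mp iA_1,A_2]=0$. The gap is your final sentence: the remaining relation $F_{\overline{1}\overline{2}}=\pm iF_{1\overline{2}}$ does \emph{not} follow from the others by taking adjoints. Since $F_A$ is skew-Hermitian, one has $F_{2\overline{1}}=(F_{1\overline{2}})^*$ and $F_{\overline{1}\overline{2}}=-(F_{12})^*$, so adjointing the SD relation $F_{2\overline{1}}=iF_{12}$ yields $F_{1\overline{2}}=iF_{\overline{1}\overline{2}}$, i.e.\ $F_{\overline{1}\overline{2}}=-iF_{1\overline{2}}$ --- the \emph{opposite} sign from what self-duality demands. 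Concretely, with $F_{12}=[A_1,A_2]$, $F_{1\overline{2}}=-[A_1,A_2^*]$, $F_{\overline{1}\overline{2}}=[A_1^*,A_2^*]$, the second SD relation reads $[A_1^*+iA_1,A_2^*]=0$, whereas the first relation (pushed through the Fuglede theorem using normality of $A_2$) gives only $[A_1^*-iA_1,A_2^*]=0$; the difference of these two is $2i[A_1,A_2^*]$, so the second relation is an independent constraint, equivalent (given the first) to $[A_1,A_2^*]=0$, which then forces $F_A=0$ altogether.

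You should know that the paper's own proof has the same blind spot: it checks only the first and third relations of (\ref{eq:4.20}) and (\ref{eq:4.21}) and never mentions the second, so your plan follows the published route faithfully, and the one step you added to close the loop is precisely where the bookkeeping of adjoints breaks down. It cannot be repaired as written: one must either impose $[A_1,A_2^*]=0$ as an additional condition in part (b) or accept that the constant skew-Hermitian SD/ASD forms in the Minkowski metric are all flat. Also, Lemma \ref{lem:comm} is not the right tool here --- it concerns holomorphic operator-valued functions and says nothing useful about a single pair of constants; the relevant commutativity transfer is the Fuglede theorem, and even that only produces the wrong-signed relation.
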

\proof Part (a) follows directly from Proposition \ref{esdmsd}. For part (b), since $[A_1, A_1^*]$ and $ [A_2, A_2^*]$ are both self-adjoint, the third equations in (\ref{eq:4.20}) and (\ref{eq:4.21}) imply that $[A_1, A_1^*]= [A_2, A_2^*]=0$, i.e., $A_1$ and $A_2$ are normal for both the SD and ASD case. Rewriting the first equation in (\ref{eq:4.20}) and resp. (\ref{eq:4.21}), we obtain the commutativity of $A_2$ with $A_1^*-iA_1$ and resp. $A_1^*+iA_1$.\qed

The case (b) above indicates that normal matrices play a natural role in the study of Yang-Mills equations. Indeed, this case has many solutions. For instance, matrices satisfying $A_1^*=iA_1$ (resp. $A_1^*=-iA_1$) and $A_2$ normal provide a particular class of SD (resp. ASD) nontrivial solutions. For earlier study on constant Yang-Mills curvature fields, we refer the reader to \cite{Sch88}. The following theorem extends Proposition \ref{const}.

\begin{thm}\label{thm:normal}
Assume $\eta(z)=A_1(z)dz_1+A_2(z)dz_2$ is holomorphic and normal and set $A=\eta-\eta^*$. 
\begin{enumerate}[label=\textup{(\alph*)}, leftmargin=*]
\item Under the Euclidean metric, the curvature $F_A$ is SD if and only if $A_1(z)$ and $A_2(z)$ commute for each $z\in \C^2$. In this case, 
\begin{equation}\label{SDE}
F_A=(\partial_1 A_2-\partial_2 A_1)dz_1\wedge dz_2-(\overline{\partial}_1 A^*_2-\overline{\partial}_2 A^*_1)d\bar{z_1}\wedge d\bar{z_2},\end{equation} which is constant $0$ on $\C^2$ if and only if $\eta$ is closed.

The curvature $F_A$ is ASD if and only if $F_{12}=0$. In this case, 
\begin{equation}\label{ASDE}
F_A=-[A_1, A^*_2]dz_1\wedge d\bar{z_2}-[A_2, A^*_1]dz_2\wedge d\bar{z_1}.\end{equation}
\item Under the Minkowski metric, $F_A$ is SD if and only if 
\begin{equation}\label{SDM}
[A_1^*-iA_1, A_2]=i(\partial_1A_2-\partial_2A_1).
\end{equation} It is ASD if and only if 
\begin{equation}\label{ASDM}
[A_1^*+iA_1, A_2]=-i(\partial_1A_2-\partial_2A_1).\end{equation}
\end{enumerate}
\end{thm}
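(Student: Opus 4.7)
The plan is to substitute $A = \eta - \eta^*$ into the explicit component formulas (\ref{eq:4.24})--(\ref{eq:4.25}) for $F_A$, simplify using the normality of $\eta$ and Fuglede's theorem, and then match against the SD/ASD characterizations of Proposition \ref{esdmsd}. First I would record what normality buys: $F_{1\bar{1}} = -[A_1, A_1^*] = 0$ and $F_{2\bar{2}} = -[A_2, A_2^*] = 0$ hold throughout $\mathbb{C}^2$, so the only potentially nonzero components of $F_A$ are $F_{12}$, $F_{\bar{1}\bar{2}}$, $F_{1\bar{2}}$, $F_{2\bar{1}}$, linked by the adjoint identities $F_{\bar{1}\bar{2}} = -(F_{12})^*$ and $F_{2\bar{1}} = (F_{1\bar{2}})^*$.

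For part (a), the Euclidean SD characterization requires $F_{1\bar{2}} = F_{2\bar{1}} = 0$ together with $F_{1\bar{1}} = F_{2\bar{2}}$. The diagonal equality is already automatic, and since $A_2(z)$ is normal, Fuglede's theorem converts the off-diagonal condition $[A_1(z), A_2^*(z)] = 0$ into $[A_1(z), A_2(z)] = 0$ pointwise in $z$. Under this commutation the bracket drops out of $F_{12}$, and (\ref{SDE}) follows directly from the formulas above. For the closedness assertion, I would note that $\bar{\partial}\eta = 0$ by holomorphy, so $d\eta = \partial\eta = (\partial_1 A_2 - \partial_2 A_1)\,dz_1\wedge dz_2$, and $F_A = 0$ is equivalent to the vanishing of this quantity. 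The ASD direction is shorter: the diagonal equality is again automatic, the identity $F_{\bar{1}\bar{2}} = -(F_{12})^*$ collapses the two off-diagonal ASD conditions into the single equation $F_{12} = 0$, and substituting into the surviving components yields (\ref{ASDE}).

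For part (b), the Minkowski SD conditions of Proposition \ref{esdmsd} also contain a trivial piece ($F_{2\bar{2}} = iF_{1\bar{1}}$ becomes $0 = 0$). The substantive condition $F_{2\bar{1}} = iF_{12}$, via $F_{2\bar{1}} = [A_1^*, A_2]$ and the explicit formula for $F_{12}$ in (\ref{eq:4.24}), rewrites as
\[
[A_1^*, A_2] = i(\partial_1 A_2 - \partial_2 A_1) + i[A_1, A_2],
\]
which rearranges to (\ref{SDM}); the ASD case runs in parallel and gives (\ref{ASDM}). The main step I expect to handle with care is the Minkowski analysis: I need to check that under the standing assumptions (holomorphy, normality, and $A = \eta - \eta^*$), the remaining SD condition $F_{\bar{1}\bar{2}} = iF_{1\bar{2}}$ is equivalent to, or subsumed by, (\ref{SDM}), so that the biconditional in the theorem genuinely holds. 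Apart from that bookkeeping, the argument is a direct specialization of (\ref{eq:4.24})--(\ref{eq:4.25}) together with a single application of Fuglede's theorem.
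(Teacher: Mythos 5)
Your treatment of part (a) is correct and is essentially the paper's own argument: normality plus holomorphy give $F_{1\bar 1}=F_{2\bar 2}=0$, Fuglede's theorem converts $[A_1,A_2^*]=0$ into pointwise commutativity of $A_1$ and $A_2$, and the adjoint relations $F_{\bar 1\bar 2}=-(F_{12})^*$, $F_{2\bar 1}=(F_{1\bar 2})^*$ collapse the Euclidean SD and ASD conditions of Proposition \ref{esdmsd} exactly as you describe, yielding (\ref{SDE}), the closedness criterion, and (\ref{ASDE}).

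The step you flagged in the Minkowski case, however, is a genuine gap, and the check you postponed in fact fails: the condition $F_{\bar 1\bar 2}=iF_{1\bar 2}$ is \emph{not} subsumed by (\ref{SDM}). Taking adjoints and using $F_{\bar 1\bar 2}=-(F_{12})^*$ and $F_{2\bar 1}=(F_{1\bar 2})^*$, the condition $F_{\bar 1\bar 2}=iF_{1\bar 2}$ is equivalent to $F_{12}=iF_{2\bar 1}=i[A_1^*,A_2]$, while the first condition $F_{2\bar 1}=iF_{12}$ reads $[A_1^*,A_2]=iF_{12}$; substituting one into the other gives $F_{12}=-F_{12}$, hence $F_{12}=[A_1^*,A_2]=0$ and therefore $F_A=0$. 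One can also see this abstractly: $A=\eta-\eta^*$ is skew-Hermitian, so $F_A^*=-F_A$, and since $(\star F_A)^*=\star(F_A^*)$ by Proposition \ref{prop3.5}, the equation $\star F_A=iF_A$ forces $iF_A=(\star F_A)^*=-\star F_A=-iF_A$, i.e., $F_A=0$. Thus (\ref{SDM}) is equivalent only to the first of the three equations in (\ref{eq:4.20}) and is necessary but not sufficient for self-duality; the same applies to (\ref{ASDM}). The paper's own proof of part (b) has the identical omission (it invokes only ``the first equation in (\ref{eq:4.20})''), so your instinct to verify the remaining condition was exactly right --- carrying it out shows the biconditional in part (b) cannot be established by this route, and in fact no nontrivial skew-Hermitian Minkowski SD or ASD curvature exists.
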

\proof Since $\eta$ is holomorphic and normal, we have $F_{1\bar{1}}=F_{2\bar{2}}=0$ and $F_{1\bar{2}}=-[A_1, A_2^*]$. Hence, $F_A$ is SD if and only if $[A_1(z), A^*_2(z)]=0$ for each $z$. Since $A^*_2(z)$ is normal, the Fuglede theorem \cite{Fu} implies that $[A_1(z), A^*_2(z)]=0$ if and only if $A_1(z)$ commutes with $A_2(z)$. Further, since $d\eta=(\partial_1A_2-\partial_2A_1)dz_1\wedge dz_2$, we see that $F_A=0$ if and only if $d\eta=0$.

For part (b), $F_A$ is ASD if and only if $F_{12}=0$. In this case,
\[F_A=F_{1\bar{2}}dz_1\wedge d\bar{z_2}+F_{2\bar{1}}dz_2\wedge d\bar{z_1}
=-[A_1, A^*_2]dz_1\wedge d\bar{z_2}-[A_2, A^*_1]dz_2\wedge d\bar{z_1}.\]
In the Minkowski metric case, the first equation in (\ref{eq:4.20}) combined with (\ref{eq:4.24}) and (\ref{eq:4.25}) gives $[A_1^*, A_2]=i(\partial_1A_2-\partial_2A_1+[A_1, A_2])$, and the claim follows.
\qed

Applying $\bar{\partial_1}$ and $\bar{\partial_2}$ to equation (\ref{SDM}), we arrive at the following fact.
\begin{cor}
Suppose $A=\eta-\eta^*$ is given as in \textup{Theorem \ref{thm:normal}}. If $F_A$ is SD or ASD under the Minkowski metric, then $[\bar{\partial_j}A_1^*, A_2]=0, j=1, 2$.
\end{cor}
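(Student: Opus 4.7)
The plan is to differentiate the SD/ASD equations from Theorem \ref{thm:normal}(b) in the antiholomorphic directions and exploit the holomorphicity of $\eta$ to kill all terms except the desired commutator.

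Concretely, since $\eta$ is holomorphic, we have $\bar{\partial}_j A_1 = \bar{\partial}_j A_2 = 0$ for $j = 1, 2$. Consequently $\partial_1 A_2 - \partial_2 A_1$ is again holomorphic in each variable, so applying $\bar{\partial}_j$ to the right-hand side of (\ref{SDM}), namely $i(\partial_1 A_2 - \partial_2 A_1)$, yields zero. The same holds for the right-hand side of (\ref{ASDM}).

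Next I apply $\bar{\partial}_j$ to the left-hand side $[A_1^* \mp iA_1, A_2]$ using the Leibniz rule for commutators:
\begin{equation*}
\bar{\partial}_j [A_1^* \mp iA_1, A_2] = [\bar{\partial}_j A_1^* \mp i\bar{\partial}_j A_1, A_2] + [A_1^* \mp iA_1, \bar{\partial}_j A_2].
\end{equation*}
The term $\bar{\partial}_j A_1$ vanishes by holomorphicity of $\eta$, and likewise $\bar{\partial}_j A_2 = 0$. Thus the expression collapses to $[\bar{\partial}_j A_1^*, A_2]$, and equating it to the differentiated right-hand side gives $[\bar{\partial}_j A_1^*, A_2] = 0$ in both the SD and ASD cases simultaneously.

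There is essentially no obstacle here; the argument is a one-line consequence of the Leibniz rule once one observes that holomorphicity of $\eta$ makes five of the six summands vanish identically. The only mild care needed is to verify that $A_1^*$ itself is not required to be antiholomorphic in any particular pattern beyond what the $C^*$-algebra adjoint automatically provides, and that the derivative $\bar{\partial}_j$ commutes with the adjoint operation in the sense $\bar{\partial}_j(A_1^*) = (\partial_j A_1)^*$, which is immediate from the definition of the Dolbeault operators on $\mathcal{A}$-valued smooth functions.
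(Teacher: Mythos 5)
Your proof is correct and follows exactly the paper's route: the paper likewise obtains the corollary by applying $\bar{\partial}_1$ and $\bar{\partial}_2$ to equations (\ref{SDM}) and (\ref{ASDM}), with holomorphicity of $\eta$ annihilating the right-hand side and all but the $[\bar{\partial}_j A_1^*, A_2]$ term on the left. Your added remark that $\bar{\partial}_j(A_1^*)=(\partial_j A_1)^*$ is a harmless and correct clarification.
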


It is also worth describing the curvature form $F_A$ for Theorem \ref{thm:normal} (b) under the Minkowski metric. For the SD case, based on (\ref{eq:4.24}), (\ref{eq:4.25}), and (\ref{SDM}) we have 
\begin{align*}
F_A=F_{12}dz_1\wedge dz_2+F_{\bar{12}}d\bar{z}_1\wedge d\bar{z}_2+F_{1\bar{2}}dz_1\wedge d\bar{z}_2+F_{2\bar{1}}dz_2\wedge d\bar{z}_1=\omega_1-\omega_1^*,
\end{align*}
where $\omega_1=[A_2, A_1^*](idz_1\wedge dz_2-dz_2\wedge d\bar{z}_1)$. For the ASD case, similar computation based on (\ref{eq:4.24}), (\ref{eq:4.25}), and (\ref{ASDM}) gives $F_A=\omega_2-\omega_2^*$, where $\omega_2=-[A_2, A_1^*](idz_1\wedge dz_2+dz_2\wedge d\bar{z}_1)$.

\section{Pluriharmonic Solutions}

In the Euclidean ASD case of Theorem \ref{thm:normal}, the condition
$$
F_{12}=\partial_{1} A_{2}- \partial_{2} A_{1} +\left[A_{1}, A_{2}\right]=0
$$
in fact completely characterizes the solutions: the Yang-Mills field $F_A$ vanishes, and all components of $A$ commute, which means there is no nontrivial solutions in this case. Interestingly, in the Minkowski metric case there exist nontrivial pluriharmonic solutions. In the course of the proof we shall use the following lemma. To proceed, we let $\B(H)$ denote the set of bounded linear operators on a separable Hilbert space $H$. A subset $M\subset \B(H)$ is said to be commuting if $[A, B]=0$ for all $A, B\in M$. The {\em commutant} of $M$ is defined as \[M':=\{B\in \B(H)\mid [A, B]=0\ \forall A\in M\}.\]
\begin{lem}\lb{com} Consider $A_1, A_2 \in \B(H)$ such that $A_2$ is normal and $[A_1, A_2]$ commutes with $A_2$. Then $[A_1, A_2]=0$.
\end{lem}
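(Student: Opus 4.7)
Set $C := [A_1, A_2]$; the goal is to show $C = 0$. By hypothesis $[C, A_2] = 0$, and since $A_2$ is normal, Fuglede's theorem yields $[C, A_2^*] = 0$. Hence $C$ commutes with every element of the abelian von Neumann algebra generated by $A_2$ and $A_2^*$; in particular, $C$ commutes with every spectral projection $E(\cdot)$ of $A_2$.

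Next, extract structural information on $A_1$. For Borel sets $S_1, S_2 \subseteq \sigma(A_2)$ with $\overline{S_1} \cap \overline{S_2} = \emptyset$, the commutation with $E(\cdot)$ gives $E(S_1)\, C\, E(S_2) = C\, E(S_1) E(S_2) = 0$. On the other hand, using $E(S_i) A_2 = A_2 E(S_i)$ in the expansion $C = A_1 A_2 - A_2 A_1$, one checks
\[
E(S_1)\, C\, E(S_2) \;=\; B\, N_2 \;-\; N_1\, B, \qquad B := E(S_1)\, A_1\, E(S_2),\quad N_i := A_2|_{E(S_i) H}.
\]
Each $N_i$ is normal with spectrum contained in $\overline{S_i}$, so $\sigma(N_1) \cap \sigma(N_2) = \emptyset$. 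The Sylvester--Rosenblum theorem then forces $B = 0$; equivalently, $E(S_1)\, A_1\, E(S_2) = 0$ whenever $\overline{S_1} \cap \overline{S_2} = \emptyset$.

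To deduce $C = 0$, I approximate $A_2$ by step operators. Partition $\sigma(A_2) \subset \mathbb{C}$ by a planar grid of half-open squares $\{S_i^{(n)}\}$ of side $1/n$, choose $\lambda_i^{(n)} \in S_i^{(n)}$, and set $A_2^{(n)} := \sum_i \lambda_i^{(n)} E(S_i^{(n)})$; then $\|A_2 - A_2^{(n)}\| \leq \sqrt{2}/n$. A direct block calculation gives
\[
E(S_j^{(n)})\, [A_1, A_2^{(n)}]\, E(S_i^{(n)}) \;=\; \bigl(\lambda_i^{(n)} - \lambda_j^{(n)}\bigr)\, E(S_j^{(n)})\, A_1\, E(S_i^{(n)}).
\]
By the previous paragraph, only nearest-neighbor blocks (those with $\overline{S_i^{(n)}} \cap \overline{S_j^{(n)}} \neq \emptyset$) can be nonzero, and for these $|\lambda_i^{(n)} - \lambda_j^{(n)}| = O(1/n)$. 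Since each cell has at most eight such neighbors, the Schur test yields $\|[A_1, A_2^{(n)}]\| = O(1/n)$. Combined with $\|[A_1, A_2 - A_2^{(n)}]\| \leq 2\sqrt{2}\,\|A_1\|/n$, this forces $\|C\| = 0$.

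The main obstacle is the last step: a naive bound on $\|[A_1, A_2^{(n)}]\|$ obtained by summing block norms fails, because the number of grid cells grows like $n^2$. The crucial saving feature is that the surviving entries form a banded block matrix with a uniformly bounded number of nonzero entries per row and per column, so the Schur test delivers the required $O(1/n)$ norm estimate.
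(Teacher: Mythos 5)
Your proof is correct, but it takes a genuinely different (and longer) route than the paper's. Both arguments start from the same observation: since $C=[A_1,A_2]$ commutes with the normal operator $A_2$, Fuglede gives $[C,A_2^*]=0$, so $C$ commutes with every spectral projection of $A_2$. The paper then finishes in one stroke: covering $\sigma(A_2)$ by balls of radius $\epsilon$ and disjointifying, the resulting projections $P_k$ satisfy $C=\bigoplus_k P_kCP_k$ (off-diagonal blocks of $C$ vanish automatically because $C$ commutes with each $P_k$), and each diagonal block is bounded by $\|P_kA_1(A_2-z_k)P_k\|+\|P_k(A_2-z_k)A_1P_k\|\leq 2\epsilon\|A_1\|$, whence $\|C\|\leq 2\epsilon\|A_1\|$. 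You instead push the block analysis onto $A_1$ rather than onto $C$: from $E(S_1)CE(S_2)=BN_2-N_1B=0$ and the Sylvester--Rosenblum theorem you deduce that $A_1$ is ``banded'' with respect to the spectral measure of $A_2$ (blocks between spectrally separated sets vanish), and you then need a step-operator approximation $A_2^{(n)}$ of $A_2$ together with a Schur test on the resulting banded block matrix to convert this into $\|C\|=O(1/n)$. Your extra machinery is not wasted --- the band structure of $A_1$ is a genuinely stronger structural conclusion than the lemma requires --- but for the statement at hand the paper's direct estimate on the diagonal blocks of $C$ avoids both Rosenblum's theorem and the Schur test, and in particular sidesteps the $n^2$-cells issue you correctly identify as the delicate point of your approach.
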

\begin{proof} Let $E$ be the spectral measure on $\sigma(A_2)$. For any $z_0 \in \sigma(A_2)$ and $\epsilon>0$, we let $B(z_0,\epsilon):=\{z\in \C^2\mid \|z-z_0\|<\epsilon\}$ and set $S:=B(z_0, \epsilon) \cap \sigma(A_2)$, $P_{z_0}:=E(S)$. Then functional calculus gives
\begin{align*}
\|P_{z_0}(A_2-z_0)\| &= \left\|\int_{\sigma(A_2)}\chi_S(\xi)(\xi-z_0)dE(\xi)\right\|\\
&\leq \left\|\int_S |\xi-z_0|dE(\xi)\right\|\leq \epsilon.
\end{align*}

Since $[A_1, A_2]$ commutes with $A_2$, it also commutes with all its spectral projections $P$. Therefore,
$[A_1, A_2] P = P [A_1, A_2] P$. It follows that
$$\begin{aligned}
\|P_{z_0} [A_1, A_2] P_{z_0}\| &= \|P_{z_0} A_1 A_2 P_{z_0} -z_0P_{z_0} A_1 P_{z_0} +z_0 P_{z_0} A_1P_{z_0}-P_{z_0} A_2 A_1 P_{z_0}\| \\
&\leq \|P_{z_0} A_1 (A_2-z_0) P_{z_0}\| + \|P_{z_0} (A_2-z_0) A_1 P_{z_0}\| \leq 2\epsilon \|A_1\|.
\end{aligned}$$

Since $\sigma(A_2)$ is compact, the open cover $\{B(z_0, \epsilon) \mid z_0 \in \sigma(A_2)\}$ contains a finite subcover 
$\{B_k\mid 1 \leq k \leq N\}$, which we further refine by taking $\tilde B_{1}=B_1$ and $\tilde B_{k+1} = B_{k+1} \setminus (\cup_{j=1}^k B_j)$. Then the projections $P_k:=E(\tilde B_{k})$ are orthogonal and add up to the identity. Therefore,
\[[A_1, A_2]=\sum_{j=1}^N[A_1, A_2] P_k=\bigoplus_{k=1}^NP_k[A_1, A_2] P_k,\]
and hence 
\[\|[A_1, A_2]\| \leq \max_k \|P_k[A_1, A_2] P_k\| \leq 2\epsilon \|A_1\|.\]
Since $\epsilon$ can be arbitrarily small, we obtain $[A_1, A_2] = 0$.\qed
\end{proof}

\subsection{The Euclidean Metric Case}

In the Euclidean metric, Theorem 5.11 produces nontrivial abelian solutions in the SD case. It is thus a bit surprising to find out that the ASD case only has trivial solutions. This subsection presents a proof to this fact. 
\begin{prop}\lb{euclid}
Let $A_1, A_2: \C^2 \to \B(H)$ be holomorphic and normal that fulfill the identity
\be\lb{euclidean}
\partial_2 A_1 - \partial_1 A_2 = [A_1, A_2].
\ee
Then both sides of the identity vanish and $\ran A_1\cup \ran A_2$ is commuting.
\end{prop}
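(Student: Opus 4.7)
The plan is to apply Lemma~\ref{com} to conclude $[A_1,A_2]=0$ pointwise on $\mathbb{C}^2$, and then to bootstrap this to the stated commutativity of the ranges using the consequent relation $\partial_2 A_1=\partial_1 A_2$.

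First, since each $A_j$ is holomorphic and normal, Lemma~\ref{lem:comm} gives that $\ran(A_j)$ is commuting; differentiating $A_j(z)A_j(w)=A_j(w)A_j(z)$ in $w_k$ and setting $w=z$ yields $[A_j,\partial_k A_j]=0$, and Fuglede's theorem upgrades this to $[A_j^*,\partial_k A_j]=0$ for $j,k\in\{1,2\}$. Applying $[A_2,\cdot]$ to the identity and using $[A_2,\partial_1 A_2]=0$ gives
\[
[A_2,[A_1,A_2]]=[A_2,\partial_2 A_1].
\]
Provided this double commutator vanishes, Lemma~\ref{com} yields $[A_1,A_2]=0$, and the identity then forces $\partial_2 A_1=\partial_1 A_2$.

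The heart of the argument, and the step I expect to be the main obstacle, is showing $[A_2,\partial_2 A_1]=0$. Algebraic rewriting alone is circular: substituting $\partial_2 A_1=[A_1,A_2]+\partial_1 A_2$ reproduces $[A_2,[A_1,A_2]]$. I expect the resolution to require differentiating the identity once more, then using Fuglede's theorem to exchange $A_2$ with $A_2^*$ in the resulting commutators so that normality of the derivative terms $\partial_k A_j$ can be leveraged; alternatively, one may iterate Lemma~\ref{com} along the descending chain $\mathrm{ad}_{A_2}^n A_1$, relying on the $\mathcal{B}(H)$-spectral theorem (as in the proof of Lemma~\ref{com} itself) and the global holomorphic structure on $\mathbb{C}^2$ to terminate the recursion.

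Granted $[A_1,A_2]=0$ and $\partial_2 A_1=\partial_1 A_2$, commutativity of the ranges follows by induction. Indeed, any mixed derivative $\partial_1^j\partial_2^k A_1$ with $k\geq 1$ equals $\partial_1^{j+1}\partial_2^{k-1}A_2$ and so lies in the abelian von Neumann algebra generated by $A_2$ and $A_2^*$; combined with Lemma~\ref{lem:comm} and Fuglede, iterated differentiation of $[A_1,A_2]=0$ yields $[\partial^\alpha A_1,\partial^\beta A_2]=0$ for all multi-indices $\alpha,\beta$. Taylor-expanding $A_2$ about $z$ then gives $[A_1(z),A_2(w)]=0$ for every $w\in\mathbb{C}^2$, and by symmetry $\ran(A_1)\cup\ran(A_2)$ is commuting.
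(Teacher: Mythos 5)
Your reduction is sound as far as it goes: since $\ran A_2$ is a commuting set (Lemma \ref{lem:comm}), bracketing (\ref{euclidean}) with $A_2$ gives $[A_2,[A_1,A_2]]=[A_2,\partial_2 A_1]$, so Lemma \ref{com} would indeed force $[A_1,A_2]=0$ once one knows $[A_2,\partial_2 A_1]=0$. But that step---which you yourself flag as ``the heart of the argument''---is exactly where the proposal stops: what you offer in its place (``differentiating the identity once more\dots'' or ``iterating Lemma \ref{com} along the descending chain'') is a guess at a strategy, not an argument, and there is no indication that either route closes. This is the essential content of the proposition, and the paper's proof of it is substantial and different from anything you sketch. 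The paper first shows, via the Jacobi identity and normality of $A_2$, that $[A_1,A_2^*]$ satisfies the linear ODE $\partial_2[A_1,A_2^*]=[[A_1,A_2^*],A_2]$ with holomorphic coefficients, hence is holomorphic in $z_2$; applying $\bar\partial_2$ to its derivatives then yields $[\partial_1^k\partial_2^\ell A_1,(\partial_2^m A_2)^*]=0$, and Fuglede plus Taylor expansion give $[A_1(w),\partial_2 A_2(z)]=0$ for all $z,w$. Note that this produces $A_1$ commuting with $\partial_2 A_2$---\emph{not} $A_2$ commuting with $\partial_2 A_1$, which is what your reduction needs; the latter fact emerges only at the very end. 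The paper then differentiates (\ref{euclidean}) in $z_1$, observes that $\partial_1[A_1,A_2]=\partial_2\partial_1 A_1-\partial_1^2 A_2$ lies in $(\ran A_2)'$ while simultaneously equaling the commutator $[A_1(w_1,z_2),\partial_1 A_2(z_1,w_2)]$, and applies Lemma \ref{com} twice more: once to conclude that $[A_1,A_2]$ is a constant $C$, and once to conclude $C=0$.

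A secondary, smaller issue: in your final paragraph, iterated differentiation of the pointwise identity $[A_1(z),A_2(z)]=0$ does not by itself yield $[\partial^\alpha A_1,\partial^\beta A_2]=0$ termwise---Leibniz only gives the vanishing of a sum of such commutators---so that bootstrap also needs an argument. The paper avoids this by writing $A_1(z_1,z_2)-A_1(z_1,w_2)=\int_{w_2}^{z_2}\partial_1 A_2(z_1,\xi)\,d\xi$, which lies in $(\ran A_1)'\cap(\ran A_2)'$ once the earlier commutation relations are in hand, and then transporting the pointwise commutator to a common base point. Your overall architecture (Lemma \ref{com} first, then propagate commutativity using $\partial_2 A_1=\partial_1 A_2$) matches the paper's in outline, but as written the proof of the central claim is missing.
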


\begin{proof}
Since $A_1$ and $A_2$ are holomorphic, their adjoints $A_1^*$ and $A_2^*$ are conjugate holomorphic. In particular, $\ov \partial_k A_j = 0$ and $\partial_k A_j^* = 0$ for $1 \leq j, k \leq 2$, and hence
$$
\partial_2 [A_1, A_2^*] = [\partial_2 A_1, A_2^*] + [A_1, \partial_2 A_1^*] = [\partial_2 A_1, A_2^*].
$$
Using (\ref{euclidean}) and the fact that $\ran A_2$ is commuting (Lemma \ref{lem:comm}), we obtain
$$
\partial_2 [A_1, A_2^*] = [[A_1, A_2], A_2^*].
$$
By Jacobi's identity
$$
[[A_1, A_2], A_2^*] + [[A_2, A_2^*], A_1] + [[A_2^*, A_1], A_2] = 0,
$$
hence $[[A_1, A_2], A_2^*] = [[A_1, A_2^*], A_2]$. Thus
$$
\partial_2 [A_1, A_2^*] = [[A_1, A_2^*], A_2].
$$
\noindent This ordinary differential equation fulfilled by $[A_1, A_2^*]$ has holomorphic coefficients, so its solution must also be holomorphic as a function of $z_2$. The solution can be computed explicitly: setting the initial condition at $w_2$, then it can be verified that
$$
[A_1(z_1, z_2), A^*_2(z_1, z_2)] = \exp(-I_2(z_1, z_2)) [A_1(z_1, w_2), A^*_2(z_1, w_2)] \exp(I_2(z_1, z_2)),
$$
where $I_2(z_1, z_2) := \int_{w_2}^{z_2} A_2(z_1, \xi) \dd \xi$ is an antiderivative of $A_2$.

Since $[A_1, A_2^*]$ is holomorphic as a function of $z_2$, so are its higher derivatives $\partial_1^k \partial_2^\ell [A_1, A_2^*]$. It follows that 
$$
0=\ov\partial_2 \partial_1^k \partial_2^\ell [A_1, A_2^*] = [\partial_1^k \partial_2^\ell A_1, \ov\partial_2A_2^*]=[\partial_1^k \partial_2^\ell A_1, (\partial_2A_2)^*], \ \ k, \ell\geq 0,
$$
and consequently $[\partial_1^k \partial_2^\ell A_1, (\partial_2^m A_2)^*] = 0,\ m \geq 1.$ The Fuglede theorem gives
\be\lb{comm}
[\partial_1^k \partial_2^\ell A_1, \partial_2^m A_2] = 0.
\ee
Up to here, all commutators were computed point-wise. However, expanding $A_1$ and $A_2$ into Taylor series, we now obtain that
$$
[A_1(w), \partial_2A_2(z)] = 0,\ \ z, w\in \C^2.
$$
Setting $\K = (\ran A_1)' \cap (\ran A_2)'$, this shows $\ran \partial_2A_2\subset \K$. Since equation (\ref{euclidean}) is symmetric in $A_1$ and $A_2$, we likewise get $\ran \partial_1A_1\subset \K$. For all $z, w\in \C^2$, expanding $A_2$ as a power series in $z_2$, equation (\ref{comm}) gives 
\be\lb{expr'}
[A_1(z), A_2(z_1, w_2)-A_2(z)] = 0.
\ee
Due to the symmetry of (\ref{euclidean}), we also have $[A_1(z_1, z_2) - A_1(w_1, z_2), A_2(w)] = 0$. It follows that
\be\lb{expr}
[A_1(z), A_2(z)] = [A_1(z_1, z_2), A_2(z_1, w_2)] = [A_1(w_1, z_2), A_2(z_1, w_2)].
\ee
Differentiating (\ref{euclidean}) with respect to $z_1$, we obtain 
\be\lb{statement}
\partial_1 [A_1, A_2] = \partial_2 (\partial_1 A_1) - \partial_1^2 A_2 \in \mc (\ran A_2)'.
\ee
Moreover, by (\ref{expr}) we have $\partial_1 [A_1, A_2] = [A_1(w_1,z_2), \partial_1 A_2(z_1,w_2)]$. An application of Lemma \ref{com} gives $\partial_1 [A_1, A_2] = 0$. Symmetrically, we obtain $\partial_2 [A_1, A_2] = 0$. This indicates that the commutator $[A_1, A_2] = C$ is constant. Furthermore, the fact
$$
[A_1(w_1, z_2), \partial_1 A_2(z_1, w_2)] = 0, \ \ z, w\in \C^2,
$$
indicates that $\ran \partial_1 A_2 \subset \K$. By symmetry, $\partial_2 A_1 \in \K$, and it follows from (\ref{euclidean}) that $C\in \K$. Another application of Lemma \ref{com} gives $C=0$. Hence both sides of (\ref{euclidean}) vanish.

Integrating $\partial_2 A_1 = \partial_1 A_2$ in $z_2$, we obtain an expression for $A_1$ in terms of initial data $A_1(z_1, w_2)$ and the derivatives and integrals of $A_2$. This shows that $A_1(z_1, z_2)-A_1(z_1, w_2) \in \K$. Consequently, also using (\ref{expr'}), we obtain
$$\begin{aligned}
[A_1(z_1, z_2), A_2(w_1, w_2)] &= [A_1(z_1, w_2), A_2(w_1, w_2)] \\
&= [A_1(w_1, w_2), A_2(w_1, w_2)] = C = 0.
\end{aligned}$$
Thus all values of $A_1$ commute with all values of $A_2$, as claimed.\qed
\end{proof}

We return to the setting of a $C^*$-algebra $\A$ with a faithful tracial state $\Tr$ and let $\mathcal{H}$ stand for the Hilbert space completion of $\A$ with respect to the inner product $\langle a, b\rangle=\Tr (ab^*)$. The map $\phi: \A\to \B(\mathcal{H})$ defined by \[\phi(a)h=ah,\ \ a\in \A, h\in \mathcal{H}\] is known to be a faithful representation of $\A$. This process of defining $\phi$ is often referred to as the Gelfand-Naimark-Segal (GNS) construction \cite{davidson1996calgebras}. In the context of normal operators, it makes sense to consider unitary gauge transformations, which preserve normality.
\begin{cor} Let $\A$ be a $C^*$-algebra with a faithful tracial state and assume $A=\eta - \eta^*$ is an ASD solution to the vacuum Yang-Mills equations on a Euclidean background such that $\eta$ is holomorphic and normal. Then $F_A=0$, and $A$ is gauge equivalent to $0$ by means of a commuting unitary gauge transformation.
\end{cor}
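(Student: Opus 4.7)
My plan is to reduce to Proposition \ref{euclid} via a faithful GNS representation, use it together with the Fuglede theorem to obtain the strong commutativity needed, and then produce an explicit commuting unitary gauge from a holomorphic primitive of $\eta$. Because $\Tr$ is faithful, the associated GNS representation $\phi : \A \to \B(\mathcal{H})$ is an injective $*$-homomorphism; under $\phi$ the coefficients $A_1, A_2$ of $\eta$ become holomorphic, normal operator-valued functions on $\C^2$. By Theorem \ref{thm:normal}(a) the ASD hypothesis is precisely $\partial_1 A_2 - \partial_2 A_1 + [A_1, A_2] = 0$, which is identity \eqref{euclidean} after rearrangement. Proposition \ref{euclid} then yields $\partial_2 A_1 = \partial_1 A_2$, $[A_1, A_2] = 0$, and $\ran A_1 \cup \ran A_2$ commuting in $\B(\mathcal{H})$; injectivity of $\phi$ pulls these back to $\A$. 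Fuglede's theorem, applied to the normal elements $A_j(z)$, upgrades the cross-point commutativity to include adjoints, so the $C^*$-subalgebra $\mathcal{C} \subset \A$ generated by $\{A_j(z),\, A_j^*(z) : z \in \C^2,\ j = 1,2\}$ is abelian.

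Vanishing of $F_A$ is then immediate from the pluriharmonic formulas \eqref{eq:4.24}--\eqref{eq:4.25}: normality kills $F_{1\bar 1} = -[A_1, A_1^*]$ and $F_{2\bar 2} = -[A_2, A_2^*]$; commutativity inside $\mathcal{C}$ kills $F_{1\bar 2} = -[A_1, A_2^*]$ and $F_{2\bar 1} = F_{1\bar 2}^*$; the ASD hypothesis is $F_{12} = 0$, whence $F_{\bar 1 \bar 2} = -F_{12}^* = 0$. For the gauge equivalence, define a holomorphic primitive of $\eta$ by
\[
f(z_1, z_2) := \int_0^{z_1} A_1(\zeta, 0) \dd \zeta + \int_0^{z_2} A_2(z_1, \zeta) \dd \zeta.
\]
The compatibility $\partial_2 A_1 = \partial_1 A_2$ gives $\partial f = \eta$, and holomorphicity of $A_1, A_2$ gives $df = \partial f = \eta$; all values of $f$ (and hence of $f^*$) lie in $\mathcal{C}$. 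Put $g := \exp(f^* - f)$. Since $f^* - f$ is skew-Hermitian and takes values in the abelian algebra $\mathcal{C}$, $g$ is smooth and unitary-valued; because $\mathcal{C}$ is commutative, the exponential obeys the scalar formula $dg = g \cdot d(f^* - f) = g \cdot (\eta^* - \eta) = -gA$. Therefore $g^{-1} dg = -A$, and as $g \in \mathcal{C}$ commutes with $A$,
\[
A_g = g^{-1} A g + g^{-1} dg = A - A = 0,
\]
so $g$ implements the required commuting unitary gauge equivalence between $A$ and $0$.

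The main technical obstacle is ensuring that $\mathcal{C}$ is abelian, since this is what makes both the exponential calculus $dg = g \cdot d(f^* - f)$ and the triviality of the conjugation $g^{-1} A g = A$ valid. Mere pointwise commutativity $[A_1(z), A_2(z)] = 0$ would not suffice: the cross-point statement $[A_j(z), A_k(w)] = 0$ from Proposition \ref{euclid}, combined with Fuglede's theorem to bring in the adjoints, is precisely what collapses the non-commutative construction to the classical scalar antiderivative argument and closes the proof.
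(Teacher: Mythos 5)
Your proposal is correct and follows essentially the same route as the paper: pass to a faithful (GNS) representation, invoke Proposition \ref{euclid} to get $\partial_2 A_1=\partial_1 A_2$ and full commutativity of the ranges (with adjoints via Fuglede), then exponentiate a holomorphic primitive of $\eta$ to produce a commuting unitary $g$ realizing the gauge equivalence with $0$. The only cosmetic differences are that you write the primitive as an explicit iterated integral rather than a Poincar\'e-lemma path integral, and you exhibit $g$ with $A_g=0$ whereas the paper exhibits the inverse transformation sending $0$ to $A$; these are equivalent.
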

\begin{proof} Due to representation $\phi$, without loss of generality we may assume $\A\subset \B(\mathcal{H})$. By Proposition \ref{euclid}, all values of $A_1$, $A_1^*$, $A_2$, and $A_2^*$ are normal and commute. In addition, the fact $\partial_2 A_1 = \partial_1 A_2$ indicates that the holomorphic $1$-form
$$
\eta = A_1 dz_1 + A_2 dz_2
$$
is closed. Since the domain $\C^2$ is simply connected, Poincar\'e lemma implies that $\eta$ has an antiderivative, unique up to a constant, given by the path integral
$$
h(z) = \int_{w}^{z} \eta.
$$
Then $g:=e^{h-h^*}$ commutes with $A$, and $A$ is the gauge transformation of $0$ because
$$
g^{-1}dg = dh - dh^* = \eta-\eta^*=A.
$$
Finally, $g$ is unitary because $g^* = e^{h^*-h} = g^{-1}$. Note that $\log g$ is pluriharmonic, but $g$ itself need not be.\qed
\end{proof}

\subsection{The Minkowski Metric Case}

On the other hand, the vacuum Yang-Mills equations have nontrivial pluriharmonic solutions in the Minkowski metric, in both the SD and the ASD cases. Since the SD and ASD conditions in Theorem \ref{thm:normal} (b) are similar, the proof is parallel. In the sequel, we shall only study the SD case. In fact, we will show that the SD condition
$$
[A_1- iA_1^*, A_2]= i(\partial_1A_2-\partial_2A_1).
$$
leads to a complete characterization of the solutions. 
\begin{prop}\label{QQ}
Assume $A_j: \C^2 \to \B(H), j =1, 2$, are holomorphic and normal, and they fulfill the identity
\be\lb{minkowski}
i(\partial_2 A_1 - \partial_1 A_2) = [A_1-iA_1^*, A_2].
\ee
Then both sides of the identity vanish. Moreover, $\ran(A_1-iA_1^*)\cup \ran A_2$ is a commuting set, and the commutator $[A_1(z), A_2(w)]$ only depends on $w_2$.
\end{prop}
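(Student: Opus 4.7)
The plan mirrors Proposition \ref{euclid} but exploits a favorable feature of the Minkowski identity: its right-hand side is a linear combination of derivatives of $A_1, A_2$, not a nested commutator, which lets me apply Lemma \ref{com} directly. Set $B := A_1 - iA_1^*$, so the hypothesis reads $[B, A_2] = i(\partial_2 A_1 - \partial_1 A_2)$. The right-hand side is manifestly holomorphic; since $[A_1, A_2]$ is also holomorphic, the identity forces $[A_1^*, A_2]$ to be holomorphic. Applying $\bar{\partial}_j$ and using $\bar{\partial}_j A_2 = 0$ yields $[\bar{\partial}_j A_1^*, A_2] = 0$ pointwise, and the Fuglede theorem (applicable since $A_2$ is normal) converts this to $[\partial_j A_1, A_2] = 0$ pointwise. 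Together with $[\partial_1 A_2, A_2] = 0$ from Lemma \ref{lem:comm}, this shows $i(\partial_2 A_1 - \partial_1 A_2)$ commutes with $A_2$, so $[[B, A_2], A_2] = 0$; Lemma \ref{com} then forces $[B, A_2] = 0$, and the identity gives $\partial_2 A_1 = \partial_1 A_2$, establishing the first claim.

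To lift pointwise vanishing to $[B(z), A_2(w)] = 0$ for all $z, w$, I would Taylor-expand $A_2(w) = \sum_\alpha \tfrac{(w-z)^\alpha}{\alpha!}\partial^\alpha A_2(z)$, reducing matters to $[B, \partial^\alpha A_2] = 0$ pointwise, proved by induction on $|\alpha|$. Differentiating the inductive hypothesis in $\partial_j$, and using $\partial_j B = \partial_j A_1$ (since $\partial_j A_1^* = 0$), the step reduces to $[\partial_j A_1, \partial^\alpha A_2] = 0$ pointwise. For $j = 2$ this follows from $\partial_2 A_1 = \partial_1 A_2$ and $[\partial^\beta A_2, \partial^\alpha A_2] = 0$ (a direct consequence of Lemma \ref{lem:comm}); for $j = 1$, a parallel induction on the order of $z_1$-derivatives of $A_1$ establishes $[\partial_1^m A_1, \partial^\alpha A_2] = 0$ for $m \geq 1$, using $\partial_2 \partial_1^m A_1 = \partial_1^{m+1} A_2$ and $[\partial^\gamma A_1, \partial^\delta A_1] = 0$ (Lemma \ref{lem:comm} applied to $A_1$). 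Since $\ran A_1 \cup \ran A_1^*$ is commuting (Lemma \ref{lem:comm} combined with Fuglede), $\ran B$ commutes with itself, and combined with the global identity $[B(z), A_2(w)] = 0$ and commutativity of $\ran A_2$, the set $\ran B \cup \ran A_2$ is commuting.

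For the last claim, I would compute partial derivatives of $[A_1(z), A_2(w)]$ in each variable: $\partial_{z_2}$ produces $[\partial_1 A_2(z), A_2(w)] = 0$ by $\partial_2 A_1 = \partial_1 A_2$ and commutativity of $\ran A_2$; $\partial_{w_1}$ produces $[A_1(z), \partial_2 A_1(w)] = 0$ by $\partial_1 A_2 = \partial_2 A_1$ and commutativity of $\ran A_1$; and $\partial_{z_1}$ produces $[\partial_1 A_1(z), A_2(w)] = [\partial_1 B(z), A_2(w)] = 0$, obtained by differentiating the second-step identity $[B(z), A_2(w)] \equiv 0$ in $z_1$. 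Only the $w_2$-derivative is unconstrained, so $[A_1(z), A_2(w)]$ depends only on $w_2$.

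The main technical obstacle I expect is the bookkeeping in the nested induction of the second step, where one must carefully track how mixed derivatives of $A_1$ reduce to derivatives of $A_2$ via $\partial_2 A_1 = \partial_1 A_2$. The key algebraic simplification compared to the Euclidean case is that the Minkowski right-hand side is linear in derivatives rather than commutator-shaped, so Lemma \ref{com} applies directly without the Jacobi-plus-ODE maneuver needed in Proposition \ref{euclid}.
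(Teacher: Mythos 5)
Your proposal is correct, and its crux coincides with the paper's: both note that every term of (\ref{minkowski}) except $[A_1^*,A_2]$ is manifestly holomorphic, so $[A_1^*,A_2]$ is holomorphic too; both then apply $\bar{\partial}$ together with the Fuglede theorem to get $[\partial_j A_1,A_2]=0$ pointwise, and invoke Lemma \ref{com} to conclude that both sides of (\ref{minkowski}) vanish. You diverge in the two remaining steps. For globalization you Taylor-expand $A_2(w)$ about $z$, which forces you to control $[\partial_j A_1,\partial^\alpha A_2]$ for all $\alpha$ and hence the nested induction you flag as the main bookkeeping burden; the paper instead applies $\bar{\partial}_1^k\bar{\partial}_2^\ell$ for all orders at once to get $[\partial_1^k\partial_2^\ell A_1,A_2]=0$ pointwise, then expands $A_1$ and $A_1^*$ about $w$, obtaining $[A_1(z)-A_1(w),A_2(w)]=0=[A_1^*(z)-A_1^*(w),A_2(w)]$ and hence $[(A_1-iA_1^*)(z),A_2(w)]=[(A_1-iA_1^*)(w),A_2(w)]=0$ with no induction at all — expanding the factor whose derivatives you already control is the cheaper choice. (Your induction does close: the only delicate case is $[\partial_1A_1,\partial_2^{\ell}A_2]$, whose base case you have and whose step uses $\partial_2\partial_1A_1=\partial_1^2A_2\in(\ran A_2)'$; so this is bookkeeping, not a gap.) For the final claim you differentiate the two-point commutator $[A_1(z),A_2(w)]$ in $z_1$, $z_2$, $w_1$ and check each derivative vanishes, which is arguably more systematic than the paper's route of deducing $z$-independence from the Taylor expansion and $w_1$-independence from the representation $A_2(w_1,w_2)-A_2(\tilde w_1,w_2)=\int_{\tilde w_1}^{w_1}\partial_2A_1(\xi,w_2)\,d\xi\in\K$. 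You are also slightly more careful than the paper in observing that $\ran(A_1-iA_1^*)$ is itself commuting (via Lemma \ref{lem:comm} plus Fuglede applied to $A_1$), which is genuinely needed for the "commuting set" conclusion.
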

\begin{proof}
In equation (\ref{minkowski}), all terms are holomorphic, except possibly for $[A_1^*, A_2]$, which then must be holomorphic as well. Therefore
$$
0 = \ov \partial_1^k\ov\partial_2^\ell [A_1^*, A_2] = [\ov  \partial_1^k\ov\partial_2^\ell  A_1^*, A_2] = [(\partial_1^k\partial_2^\ell A_1)^*, A_2],\ \ k, \ell \geq 0, k+\ell\geq 1.
$$
Then Fuglede theorem implies $[\partial_1^k \partial_2^\ell A_1, A_2] = 0$, which in particular shows that both left-hand side terms $\partial_2 A_1$ and $\partial_1 A_2$ in (\ref{minkowski}) commute pointwise with $A_2$. At the same time, the right-hand side is a commutator involving $A_2$. Lemma \ref{com} then asserts that both sides of (\ref{minkowski}) vanish.

Moreover, expanding $A_1$ into a Taylor series, we have
\be\lb{identity1}
[A_1(z)-A_1(w), A_2(w)] = 0=[A^*_1(z)-A^*_1(w), A_2(w)] ,\ \ z, w\in \C^2,
\ee
which implies that 
\[[A_1(z)-iA_1^*(z), A_2(w)]=[A_1(w)-iA_1^*(w), A_2(w)]=0.\] This shows that $\ran(A_1-iA_1^*)\cup \ran A_2$ is a commuting set.

Furthermore, (\ref{identity1}) also indicates that $[A_1(z), A_2(w)]$ is independent of the choice of $z\in \C^2$.
Fixing any $\tilde{w}_1\in \C$ and using the fact $\partial_2 A_1 = \partial_1 A_2$, we have 
\[A_2(w_1, w_2)-A_2(\tilde{w}_1,w_2)=\int_{\tilde{w}_1}^{w_1}\partial_1 A_2(\xi, w_2)d\xi=\int_{\tilde{w}_1}^{w_1}\partial_2 A_1(\xi, w_2)d\xi,\]which belongs in $\K$. Thus, $[A_1(z), A_2(w)]=[A_1(z), A_2(\tilde{w}_1,w_2)]$, which depends only on $w_2$.
\end{proof}

We return to the case where $\A$ is a $C^*$ algebra with a faithful tracial state. Again, due to the GNS construction, we may assume $\A\subset \B(\mathcal{H})$.
\begin{thm}
The vacuum Yang-Mills equations have nontrivial pluriharmonic normal instanton solutions in the Minkowski metric.
Moreover, up to unitary gauge equivalence, such solutions are of the form $A'=\eta' - {\eta'}^*$, where $\eta'=A'_1dz_1+A'_2dz_2$ with $A'_1=i{A'_1}^*$ being constant and $A'_2$ being holomorphic, normal, and dependent only on $z_2$.
\end{thm}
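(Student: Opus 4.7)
The argument naturally splits into an existence part and a classification part, and I would tackle them in that order.

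For existence, the plan is to write down an explicit SD solution already in the claimed canonical form. Fix a nonzero constant self-adjoint operator $H$ on a Hilbert space $\mathcal{H}$ and set $A'_1 := (1+i)H$, which satisfies $A'_1 = i(A'_1)^*$ and is automatically normal. Pick a holomorphic normal operator-valued function $A'_2(z_2)$ on $\mathbb{C}$. Set $\eta' := A'_1\,dz_1 + A'_2(z_2)\,dz_2$ and $A' := \eta' - (\eta')^*$. I would then verify the Minkowski SD conditions of Proposition~\ref{esdmsd} by computing the six components of $F_{A'}$ directly from formulas (\ref{eq:4.22})--(\ref{eq:4.23}): the diagonal components $F_{1\bar 1}$ and $F_{2\bar 2}$ vanish by normality of $A'_1$ and $A'_2$; the remaining equalities $F_{2\bar 1} = iF_{12}$ and $F_{\bar 1 \bar 2} = iF_{1 \bar 2}$ reduce to purely algebraic identities that follow from $A'_1 = i(A'_1)^*$. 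Choosing $H$ and $A'_2$ so that the relevant commutator is nonzero yields nontriviality, and Proposition~\ref{SHsolution}(b) then upgrades $F_{A'}$ to a genuine Yang-Mills instanton.

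For the classification, start from an arbitrary pluriharmonic normal SD solution $A = \eta - \eta^*$ on Minkowski space with $\eta = A_1\,dz_1 + A_2\,dz_2$ holomorphic and normal. Applying Proposition~\ref{QQ} gives three structural consequences: $\partial_2 A_1 = \partial_1 A_2$, so $\eta$ is a closed holomorphic $1$-form; $\ran(A_1 - iA_1^*) \cup \ran A_2$ is a commuting subset of $\mathcal{B}(\mathcal{H})$ (via the GNS construction); and $[A_1(z), A_2(w)]$ depends only on $w_2$. Since $\mathbb{C}^2$ is simply connected, the operator-valued Poincar\'e lemma produces a holomorphic primitive $f : \mathbb{C}^2 \to \mathcal{B}(\mathcal{H})$ with $\partial f = \eta$. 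The plan is then to construct a unitary gauge transformation $u$ that absorbs the $z_1$-dependence of $\eta$ into $u^{-1}\,du$ and produces a connection in the canonical form. I would take $u = \exp(\psi)$ for a skew-Hermitian operator-valued function $\psi$ built from $f$ and $f^*$ projected onto the commutative subalgebra generated by $\ran(A_1 - iA_1^*)$; the commutativity from Proposition~\ref{QQ} ensures the exponential is well-defined and obeys a classical Leibniz rule, and unitarity follows from skew-Hermiticity of $\psi$. Computing $A_u = u^{-1}Au + u^{-1}\,du$ and applying the closedness identity $\partial_2 A_1 = \partial_1 A_2$ together with the $w_2$-only dependence of $[A_1(z), A_2(w)]$ should leave a connection of the form $\eta' - (\eta')^*$ in which $A'_1$ is a constant element of the eigenspace $\{a : a = ia^*\}$ and $A'_2$ depends only on $z_2$ and is normal.

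The principal obstacle is the construction of $u$: operator exponentials and $d$ interact cleanly only on a commutative subalgebra, so it is exactly the commutativity of $\ran(A_1 - iA_1^*)$, rather than of $\ran A_1$, that dictates which ``half'' of $A_1$ can be trivialized and which survives as the constant $A'_1$ in the $\{a = ia^*\}$ eigenspace. A secondary check is that the residual $A'_2$ indeed loses its $z_1$-dependence; this should emerge from combining the closedness $\partial_2 A_1 = \partial_1 A_2$ (which expresses the $z_1$-variation of $A_2$ as the $z_2$-derivative of something already absorbed into $u^{-1}du$) with the commutator identity $[A_1(z), A_2(w)] = [A_1(z'), A_2(w)]$ from Proposition~\ref{QQ}, which prevents the gauge transform from reintroducing any new $z_1$-dependence into $A'_2$.
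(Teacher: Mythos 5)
Your classification argument follows the same broad strategy as the paper's (invoke Proposition \ref{QQ}, split $A_1$ into the part satisfying $a=ia^*$ and a complementary part, and gauge away the piece of $\eta$ lying in the commutant by exponentiating an antiderivative), but the existence half of your proposal has a genuine gap: the ansatz $A_1'=(1+i)H$ with $H$ self-adjoint constant and $A_2'=A_2'(z_2)$ cannot produce a nontrivial instanton. Since $A_1'=i(A_1')^*$ gives $(A_1')^*-iA_1'=-2iA_1'$, and since $\partial_1A_2'-\partial_2A_1'=0$ for your ansatz, the SD condition of Theorem \ref{thm:normal}(b) reads $[-2iA_1',A_2']=0$, i.e.\ $[H,A_2']=0$; by the Fuglede theorem this forces $[H,(A_2')^*]=0$ as well, so every component of $F_{A'}$ in (\ref{eq:4.24})--(\ref{eq:4.25}) vanishes. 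The identities $F_{2\bar 1}=iF_{12}$ and $F_{\bar 1\bar 2}=iF_{1\bar 2}$ do not ``follow from $A_1'=i(A_1')^*$''; they impose precisely the constraint that kills the curvature, so a nonzero commutator is incompatible with self-duality here. The paper proves existence instead through Example \ref{Masd}, whose mechanism is different: there $A_1=\mathrm{diag}(h,h+1-i)$ is arranged so that $A_1^*-iA_1$ is a \emph{scalar} multiple of the identity (making the SD commutator condition hold for free) while $A_1$ itself does not commute with $A_2^*$, which is what leaves $[A_2,A_1^*]\neq 0$ and hence $F_A\neq 0$. In particular, your computation shows one cannot certify existence by exhibiting a member of the canonical family itself.

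On the classification, your plan is close to the paper's but the decisive step is underspecified. The paper does not ``project onto the commutative subalgebra generated by $\ran(A_1-iA_1^*)$'' (this is not a well-defined operation); it fixes a base point $w$, writes $A_1=B_1+B_2$ with $B_2=\tfrac12(A_1+iA_1^*)$, and subtracts point values: $\tilde A_1(z)=A_1(z)-B_2(w)$, $\tilde A_2(z)=A_2(z)-A_2(w_1,z_2)$. Proposition \ref{QQ} places these differences in $\K=(\ran A_1)'\cap(\ran A_2)'$, the identity $\partial_2\tilde A_1=\partial_1\tilde A_2$ makes $\tilde\eta$ exact, and one sets $g=e^{-\tilde h+\tilde h^*}$ for an antiderivative $\tilde h$. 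What makes $g^{-1}Ag=A$ and $g^{-1}dg=d(-\tilde h+\tilde h^*)$ legitimate is membership of the coefficients in $\K$ (so they commute with all values of $A_1,A_2$ and, via Fuglede, with their adjoints), not the internal commutativity of $\ran(A_1-iA_1^*)$ that you emphasize; note $\ran A_1$ is already commuting by Lemma \ref{lem:comm}, so the true dichotomy is which half of $A_1$ lies in the commutant of $\ran A_2$. Your mechanism for removing the $z_1$-dependence of $A_2'$ is essentially the paper's, since $A_2(z_1,z_2)-A_2(w_1,z_2)=\int_{w_1}^{z_1}\partial_2A_1(\xi,z_2)\,d\xi\in\K$, so that part is sound once the explicit subtraction replaces the ``projection.''
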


\begin{proof} The existence of such solutions is shown in Example \ref{Masd} below. Here, we prove the remaining claims of the theorem.

First, we decompose $A_1$ as
$$
A_1 = \frac 1 2 ((A_1-iA_1^*)+(A_1+iA_1^*)) =: B_1+B_2,
$$
where, due to the fact $\ov{i^{1/2}} = i^{7/2} = -i^{3/2}$, 
$$
(i^{1/2} B_1)^* = (i^{1/2} A_1 - i^{3/2} A_1^*)^* = -i^{3/2} A_1^* + i^{1/2} A_1 =i^{1/2} B_1
$$
and
$$
(i^{1/2} B_2)^* = (i^{1/2} A_1 + i^{3/2} A_1^*)^* = -i^{3/2} A_1^* - i^{1/2} A_1 = -i^{1/2} B_2.
$$

Fix a $w\in \C^2$, Proposition \ref{QQ} and its proof indicates that $A_1(z)-iA_1^*(z), A_1(z)-A_1(w)$, and  $A_2(z_1,z_2)-A_2(w_1,z_2)$ are all in $\K$ for each $z\in \C^2$. The only components that may not commute with everything else are $B_{2}(w)$ and $A_2(w_1,z_2)$. We set $\tilde A_1(z)=A_1(z)-B_{2}(w)$, $\tilde A_2(z)=A_2(z)-A_{2}(w_1,z_2)$, and $\tilde \eta=\tilde A_1 dz_1 + \tilde A_2 dz_2$. Then the coefficients of $\tilde \eta$ are all in $\K$. Moreover, since $\partial_2 \tilde A_1 = \partial_2 A_1$, $\partial_1 \tilde A_2 = \partial_1 A_2$, and $\partial_2 A_1 = \partial_1 A_2$ due to Proposition \ref{QQ}, the $1$-form $\tilde \eta$ is exact.
Let $\tilde h$ be an antiderivative of $\tilde \eta$ and define $g=e^{-\tilde h+\tilde h^*}$. Recall that $\eta=A_1dz_1+A_2dz_2$ and $A=\eta-\eta^*$. Then the gauge transformation 
\[ A_g=g^{-1}Ag+g^{-1}dg=A+d(-\tilde h+\tilde h^*)=A-\tilde A= A'=\eta' - {\eta'}^*,\] where \[\eta'=\eta-\tilde \eta=B_{2}(w)dz_1+A_{2}(w_1,z_2)dz_2.\]\qed
\end{proof}

\begin{example}\label{Masd}
Consider any holomorphic function $h(z)$ on $\C^2$ and set 
\[U = \frac{1}{\sqrt{2}}\begin{pmatrix}
1 & 1\\
i & -i
\end{pmatrix}, \hspace{1cm}
H(z_1, z_2):=\int_0^{z_1}\partial_2h(\xi, z_2)d\xi,\ \ z_1, z_2\in \C.\] We define
\[A_1=\begin{pmatrix}
h & 0\\
0 & h+1-i
\end{pmatrix}, \ \ \ \text{and}\ \ \ A_2=U^* \begin{pmatrix}
H+\cos z_2 & 0\\
0 & H+i\sin z_2
\end{pmatrix}U.\]
Then $A_1$ and $A_2$ are holomorphic and normal in $\C^2$. Direct computation verifies that
\[[A_1^*-iA_1, A_2]=0=i(\partial_1 A_2-\partial_2A_1),\] showing that $A_1$ and $A_2$ satisfy \textup{Theorem \ref{thm:normal}} for the SD case with Minkowski background. Furthermore, 
\[[A_2, A_1^*]=\frac{1}{\sqrt{2}}e^{-i(z_2-\frac{\pi}{4})}\begin{pmatrix}
0 & 1\\
-1 & 0
\end{pmatrix},\]
indicating that the curvature field $F_A$ is nonzero according to the comments following \textup{Corollary 5.12}. This example can be modified slightly to suit the ASD case.
\end{example}
Observe that the solution in this example is gauge equivalent to the special case $h=0$. thus the choice of $h$ does not affect the curvature $F_A$. On the other hand, the choice of the unitary matrix $U$ matters. It will be interesting to describe the type of unitary matrices which permit a nontrivial solution in this case. It is also worth noting that the field $F_A$  does not depend on the variable $z_1$ whose real part is the time variable $x_0$, implying that $F_A$ is stationary.

It is important to note that the solutions provided in the preceding discussions are not the only kind of pluriharmonic instantons. The following example, whose curvature field is sometimes referred to as a quantized Dirac monopole, presents a different kind of pluriharmonic instantons.
\begin{example}\label{dirac}
Consider the operator-valued skew-Hermitian form $A(z)=\eta\left(z\right)-\eta^{*}\left(z\right)$,
where $\eta\left(z\right)=\bar{z}_{1}B_{1}dz_{1}+\bar{z}_{2}B_{2}dz_{2}$, and $B_{j}\in\mathcal{A}, j=1, 2$.
The curvature field $F_{A}$ in this case is sometimes referred to as a quantized Dirac monopole. Under the Euclidean metric, direct computation \cite[Proposition 6.1]{munshi2020maxwell} verifies that
\begin{enumerate}[label=\textup{\alph*)}, leftmargin=*]
\item $F_{A}$ \textup{SD} $\Leftrightarrow$ $\left[B_{1},B_{1}^{*}\right]=\left[B_{2},B_{2}^{*}\right]=\left[B_{1},B_{2}^{*}\right]=0$, and $B_{1}+B_{1}^{*}=B_{2}+B_{2}^{*}$;
\item $F_{A}$ \textup{ASD} $\Leftrightarrow$ $\left[B_{1},B_{1}^{*}\right]=\left[B_{2},B_{2}^{*}\right]=0$ and $B_{1}+B_{1}^{*}=-\left(B_{2}+B_{2}^{*}\right)$.
\end{enumerate}
\end{example}
\noindent Evidently, in both cases a) and b) above, the operators $B_1$ and $B_2$ are normal. Moreover, case a) implies that $B_1$ commutes with $B_2$ \cite{Fu}. A simple special case of b) is when
$\mathcal{A}=M_2(\C)$ with
\begin{equation*}
B_{1} = \begin{pmatrix}
-1 & 1\\
-i & -i
\end{pmatrix},\ \ \
B_{2} = \begin{pmatrix}
1 & -i\\
-1 & -i
\end{pmatrix},
\end{equation*}
in which the curvature field $F_{A}$ is nonzero. The next corollary extends this special case.
\begin{cor}
Let $\eta\left(z\right)=\bar{z}_{1}B_{1}dz_{1}+\bar{z}_{2}B_{2}dz_{2}$, where $B_{1}\in {\mathcal A}$ is normal with $B_{1}+B_{1}^{*} \neq 0$ and $B_{2}= -B_{1}^{*}$. Then $F_{A}$ is a Yang-Mills instanton under the Euclidean metric.
\end{cor}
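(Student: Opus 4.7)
The plan is to reduce the corollary to Example~\ref{dirac}(b) by verifying its three ASD criteria, then appeal to the skew-Hermitian machinery (Lemma~\ref{skewH}(b) together with Proposition~\ref{SHsolution}(b)) to conclude that $F_A$ solves the source-free Yang-Mills equations, and finally exhibit non-triviality by computing a single component of $F_A$. No genuine obstacle is expected; each step is a direct unwinding of definitions, and the hypotheses appear to have been engineered precisely to trigger case (b) of the quantized Dirac monopole example.

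First I would verify the three ASD conditions of Example~\ref{dirac}(b). The hypothesis that $B_1$ is normal gives $[B_1,B_1^*]=0$ immediately. From $B_2 = -B_1^*$ one obtains $B_2^* = -B_1$, whence
\[
[B_2,B_2^*] = [-B_1^*,-B_1] = [B_1^*,B_1] = 0,
\]
and the sum condition collapses to the tautology $-(B_2+B_2^*) = -(-B_1^* - B_1) = B_1+B_1^*$. By Example~\ref{dirac}(b), $F_A$ is therefore ASD under the Euclidean metric. One should also note that the SD criteria of Example~\ref{dirac}(a) fail precisely because they would force $B_1+B_1^*=0$, which is excluded by hypothesis; this confirms that the ASD branch is the only one in play.

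Second, to secure non-triviality, I would compute $F_{1\bar 1}$ directly from (\ref{eq:4.7}). With $A_1(z)=\bar z_1 B_1$ and $A_{\bar 1}(z) = -z_1 B_1^*$, the normality of $B_1$ makes the relevant commutator vanish and yields
\[
F_{1\bar 1} = \partial_1 A_{\bar 1} - \bar\partial_1 A_1 + [A_1,A_{\bar 1}] = -B_1^* - B_1 - \bar z_1 z_1[B_1,B_1^*] = -(B_1+B_1^*),
\]
which is nonzero by assumption. Since $A = \eta - \eta^*$ is skew-Hermitian by Lemma~\ref{skewH}(b), Proposition~\ref{SHsolution}(b) then ensures that the ASD form $F_A$ solves the source-free Yang-Mills equations. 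Combining ASD, the vacuum solution property, and $F_A\neq 0$ gives exactly the defining conditions of a Yang-Mills instanton, completing the argument.
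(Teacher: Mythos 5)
Your proposal is correct and follows exactly the route the paper intends: the corollary is stated as an immediate extension of Example~\ref{dirac}(b), and your verification of the three ASD criteria, the appeal to Lemma~\ref{skewH}(b) and Proposition~\ref{SHsolution}(b), and the computation $F_{1\bar 1}=-(B_1+B_1^*)\neq 0$ for non-triviality supply precisely the details the paper leaves implicit.
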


\section{On A Classical Yang-Mills Instanton}

Analogous to the study of Maxwell's equations, this section considers the ``electric" and ``magnetic" components of Yang-Mills fields. We will also take a closer look at the classical Belavin-Polyakov-Schwartz-Tyupkin instanton and show that it is in fact skew-Hermitian and almost pluriharmonic.

\subsection{The Nonabelian Electric and Magnetic Components}

For Maxwell's equations (\ref{eq:4.1}), the Faraday 2-form (\ref{eq:4.2}) defines the electric component ${\bf E}$ and the magnetic component ${\bf B}$. For a $C^*$-algebra-valued complex $2$-form $F_A$ as in (\ref{eq:4.6}), comparing (\ref{eq:4.2}) with (\ref{eq:4.6}), we define ${\bf E}$ and ${\bf B}$ with the following corresponding components \cite{munshi2022complex}:
\begin{align}
E_{1} &= 2iF_{1\overline{1}}, \ \ E_{2} = -\left(F_{12}+F_{\overline{1}\overline{2}}+F_{1\overline{2}}-F_{2\overline{1}}\right), \ \ E_{3} &= -i\left(F_{12}-F_{\overline{1}\overline{2}}-F_{1\overline{2}}-F_{2\overline{1}}\right), \nonumber \\
B_{1} &= 2iF_{2\overline{2}}, \ \ B_{2} = -\left(F_{12}+F_{\overline{1}\overline{2}}-F_{1\overline{2}}+F_{2\overline{1}}\right), \ \ B_{3} &= -i\left(F_{12}-F_{\overline{1}\overline{2}}+F_{1\overline{2}}+F_{2\overline{1}}\right). \label{eq:4.23}
\end{align}
${\bf E}$ and respectively ${\bf B}$ are sometimes referred to as {\em chromo-electric} and {\em chromo-magnetic fields}. It is worth mentioning that, due to equations (\ref{gaugeinv}), for any gauge transformation $A\to A_g$, we have 
\[{\bf E}_{A_g}=g^{-1}{\bf E}_{A}g,\hspace{2cm} {\bf B}_{A_g}=g^{-1}{\bf B}_{A}g.\]
The following are direct consequences of Proposition \ref{esdmsd}.

\begin{cor}\label{magelec}
Consider a connection form $A\in \Omega^1(\mathcal{A})$.
\begin{enumerate}[label=\textup{(\alph*)}, leftmargin=*]
\item Under the Euclidean metric, $F_A$ SD $\Leftrightarrow {\bf B}={\bf E}$; $F_A$ ASD $\Leftrightarrow {\bf B}=-{\bf E}$.

\item Under the Minkowski metric, $F_A$ SD $\Leftrightarrow {\bf B}=i{\bf E}$; $F_A$ ASD $\Leftrightarrow {\bf B}=-i{\bf E}$.
\end{enumerate}
\end{cor}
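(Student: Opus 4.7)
The statement is a direct algebraic consequence of Proposition \ref{esdmsd} combined with the componentwise definitions of $\mathbf{E}$ and $\mathbf{B}$. My plan is to prove all four bi-implications in parallel, handling the forward and reverse directions separately.

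For the forward direction, I would substitute the curvature characterizations of Proposition \ref{esdmsd} directly into the formulas for $\mathbf{E}$ and $\mathbf{B}$. For instance, in the Euclidean SD case, the conditions $F_{1\bar{2}}=F_{2\bar{1}}=0$ and $F_{1\bar{1}}=F_{2\bar{2}}$ immediately force $E_1=2iF_{1\bar{1}}=2iF_{2\bar{2}}=B_1$, $E_2=-(F_{12}+F_{\bar{1}\bar{2}})=B_2$, and $E_3=-i(F_{12}-F_{\bar{1}\bar{2}})=B_3$, so $\mathbf{E}=\mathbf{B}$. The Euclidean ASD case is parallel, yielding $\mathbf{B}=-\mathbf{E}$. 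The Minkowski cases pick up factors $\pm i$ from the relations $F_{2\bar{2}}=\pm iF_{1\bar{1}}$, $F_{2\bar{1}}=\pm iF_{12}$, and $F_{\bar{1}\bar{2}}=\pm iF_{1\bar{2}}$; substituting these into the definitions and simplifying using $i^2=-1$ gives $\mathbf{B}=\pm i\mathbf{E}$, respectively.

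For the reverse direction, the key observation is that the definitions of $\mathbf{E}$ and $\mathbf{B}$ realize an invertible $\mathcal{A}$-linear map from the six curvature components $(F_{12},F_{\bar{1}\bar{2}},F_{1\bar{2}},F_{2\bar{1}},F_{1\bar{1}},F_{2\bar{2}})$ to the six components of $(\mathbf{E},\mathbf{B})$. Concretely, one inverts to obtain $F_{1\bar{1}}=-iE_1/2$ and $F_{2\bar{2}}=-iB_1/2$, while the remaining four curvatures are recovered through linear combinations such as $F_{12}+F_{\bar{1}\bar{2}}=-(E_2+B_2)/2$, $F_{12}-F_{\bar{1}\bar{2}}=i(E_3+B_3)/2$, $F_{1\bar{2}}-F_{2\bar{1}}=(B_2-E_2)/2$, and $F_{1\bar{2}}+F_{2\bar{1}}=i(B_3-E_3)/2$. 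Imposing $\mathbf{B}=\mathbf{E}$ in these inversion formulas produces $F_{1\bar{1}}=F_{2\bar{2}}$ and $F_{1\bar{2}}=F_{2\bar{1}}=0$, which is precisely the Euclidean SD condition in Proposition \ref{esdmsd}; the other three cases follow from the same inversion formulas by imposing $\mathbf{B}=-\mathbf{E}$ or $\mathbf{B}=\pm i\mathbf{E}$ and reading off the vanishing or proportionality relations among the curvatures.

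There is no mathematical obstacle beyond the bookkeeping of the six linear combinations and the various factors of $i$; the main point to verify is that the map from the six $F$-components to $(\mathbf{E},\mathbf{B})$ is a bijection, after which each bi-implication reduces to a short substitution. I expect the only point that will require any care is the Minkowski case, where the SD/ASD constraints from Proposition \ref{esdmsd} are not pure vanishing conditions but proportionalities, and one must cleanly regroup $F_{12}, F_{\bar{1}\bar{2}}$ and $F_{1\bar{2}}, F_{2\bar{1}}$ using $i^2=-1$ to see the factor $\pm i$ emerging uniformly across $E_2,E_3,B_2,B_3$.
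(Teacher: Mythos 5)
Your proposal is correct and matches the paper's intent exactly: the paper states this corollary without proof as a ``direct consequence of Proposition \ref{esdmsd},'' and your argument—substituting the SD/ASD characterizations into the componentwise definitions of $\mathbf{E}$ and $\mathbf{B}$ for the forward direction, and inverting the linear map from the six curvature components to $(\mathbf{E},\mathbf{B})$ for the converse—is precisely the verification being left to the reader. Your inversion formulas and the handling of the factors of $i$ in the Minkowski case all check out.
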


A tuple ${\bf T}=(T_1, T_2, T_3)$ of operators in $\mathcal{A}$ is said to be {\em Hermitian} if $T_j$ is Hermitian for each $j$. If ${\bf T}:\C^2\to \mathcal{A}^3$ is an operator-valued vector function, then we say ${\bf T}$ is Hermitian if ${\bf T}(z)$ is Hermitian for every $z\in \C^2$. The following observation is an immediate consequence of Corollary \ref{magelec}.
\begin{cor}
There exists no instanton solution to Yang-Mills equations over $\C^2$ under the Minkowski metric for which both ${\bf E}$ and ${\bf B}$ are Hermitian.
\end{cor}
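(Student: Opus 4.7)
The plan is to derive a contradiction by combining the Hermiticity hypothesis on $\mathbf{E}$ and $\mathbf{B}$ with the SD/ASD characterizations under the Minkowski metric from Corollary \ref{magelec}.

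Suppose for contradiction that $F_A$ is an instanton over $\C^2$ under the Minkowski metric with both $\mathbf{E}$ and $\mathbf{B}$ Hermitian, i.e.\ $E_j^* = E_j$ and $B_j^* = B_j$ for $j = 1, 2, 3$. Since $F_A$ is an instanton, by definition it is a nontrivial SD or ASD solution to the vacuum Yang-Mills equations. By Corollary \ref{magelec}(b), this means $\mathbf{B} = i\mathbf{E}$ (SD case) or $\mathbf{B} = -i\mathbf{E}$ (ASD case).

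In either case, take the adjoint componentwise. In the SD case, $B_j = i E_j$ gives $B_j^* = -i E_j^*$, which by Hermiticity forces $B_j = -i E_j$. Combined with the original $B_j = i E_j$, we get $2 i E_j = 0$, hence $E_j = 0$ and thus also $B_j = 0$. The ASD case is symmetric: $B_j = -i E_j$ and its adjoint $B_j = i E_j$ together yield $E_j = B_j = 0$. In both cases every component of $\mathbf{E}$ and $\mathbf{B}$ vanishes identically on $\C^2$.

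Finally I would translate $\mathbf{E} = \mathbf{B} = 0$ back to $F_A = 0$ via the defining relations \eqref{eq:4.23}: that linear system expresses the six coefficients $F_{12}, F_{\bar 1 \bar 2}, F_{1\bar 1}, F_{2\bar 2}, F_{1\bar 2}, F_{2\bar 1}$ of $F_A$ invertibly in terms of $E_1, E_2, E_3, B_1, B_2, B_3$ (indeed $F_{1\bar 1}$ and $F_{2\bar 2}$ are recovered directly from $E_1$ and $B_1$, while the four remaining components are recovered from the nonsingular $4 \times 4$ system determined by $E_2, E_3, B_2, B_3$). Hence $F_A \equiv 0$, contradicting the nontriviality demanded by the definition of an instanton. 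The argument is essentially a one-line adjoint computation once Corollary \ref{magelec} is in place; the only step requiring any care is the final invertibility remark confirming that $\mathbf{E} = \mathbf{B} = 0$ genuinely forces $F_A = 0$ rather than merely constraining it.
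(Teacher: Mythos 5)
Your proof is correct and follows exactly the route the paper intends: the paper presents this corollary as an ``immediate consequence'' of Corollary \ref{magelec}, and your adjoint computation (Hermiticity forces $\mathbf{E}=\mathbf{B}=0$ from $\mathbf{B}=\pm i\mathbf{E}$) together with the invertibility of the linear relations defining $\mathbf{E},\mathbf{B}$ in terms of the components of $F_A$ is precisely the omitted argument.
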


Then the inner product $\langle \mathbf{E}, \mathbf{B} \rangle$ can be calculated as
\begin{align}
\langle \mathbf{E}, \mathbf{B} \rangle&= \text{Tr}\left(E_1{B}^{*}_1+E_2{B}^{*}_2+E_3{B}^{*}_3\right) \nonumber\\
& = \text{Tr}\left(4F_{1\overline{1}}F_{2\overline{2}}^{*}+ 2\left(F_{12}-F_{2\overline{1}}\right)\left(F_{12}+F_{2\overline{1}}\right)^{*} +2\left(F_{\overline{1}\overline{2}}+F_{1\overline{2}}\right)\left(F_{\overline{1}\overline{2}}-F_{1\overline{2}}\right)^{*}\right).
\label{eq:4.31}
\end{align}
Then Corollary \ref{magelec} leads to the following facts.
\begin{cor}\label{EB}
Let $F_A$ be an instanton solution to Yang-Mills equations. 
\begin{enumerate}[label=\textup{(\alph*)}, leftmargin=*]
\item Under the Euclidean metric, $F_A$ is SD $\Leftrightarrow \langle \mathbf{E}, \mathbf{B} \rangle>0$; $F_A$ is ASD $\Leftrightarrow \langle \mathbf{E}, \mathbf{B} \rangle<0$.

\item Under the Minkowski metric, $F_A$ is SD $\Leftrightarrow$ $\langle \mathbf{E}, \mathbf{B} \rangle\in i{\mathbb R}_{\geq 0}$; $F_A$ is ASD $\Leftrightarrow \langle \mathbf{E}, \mathbf{B} \rangle \in -i{\mathbb R}_{\geq 0}$.
\end{enumerate}
\end{cor}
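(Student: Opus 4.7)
The approach is to take the formula~(\ref{eq:4.31}) for $\langle \mathbf{E}, \mathbf{B}\rangle$ and plug in the SD/ASD relationships between $\mathbf{B}$ and $\mathbf{E}$ supplied by Corollary~\ref{magelec}. Recall that $\langle \mathbf{E}, \mathbf{B}\rangle = \sum_{j=1}^{3}\mathrm{Tr}(E_{j}B_{j}^{\ast})$ is linear in the first argument and conjugate-linear in the second, and that $\langle \mathbf{E}, \mathbf{E}\rangle = \|\mathbf{E}\|^{2}\geq 0$ with equality exactly when $\mathbf{E}\equiv 0$. So the substitution reduces the claim to an algebraic sign check.

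For part~(a) under the Euclidean metric, $F_{A}$ SD yields $\mathbf{B}=\mathbf{E}$, hence $\langle \mathbf{E}, \mathbf{B}\rangle = \|\mathbf{E}\|^{2}$; the ASD case $\mathbf{B}=-\mathbf{E}$ gives $\langle \mathbf{E}, \mathbf{B}\rangle = -\|\mathbf{E}\|^{2}$. For part~(b), $\mathbf{B}=\pm i\mathbf{E}$ yields, by conjugate-linearity, $\langle \mathbf{E}, \mathbf{B}\rangle = \overline{(\pm i)}\,\|\mathbf{E}\|^{2} = \mp i\,\|\mathbf{E}\|^{2}$, which sits on one of the two purely imaginary half-lines. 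In each of the four cases, this establishes the forward direction ``SD or ASD $\Rightarrow$ sign condition on $\langle \mathbf{E}, \mathbf{B}\rangle$.''

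For the reverse implications, one uses that $F_{A}$ is given as an instanton, hence it is either SD or ASD, and the four possible ranges of $\langle \mathbf{E}, \mathbf{B}\rangle$ obtained above are disjoint (away from $0$). To promote the inequalities from $\geq 0$ to $> 0$ in the Euclidean case, one needs $\mathbf{E}\not\equiv 0$. This follows from nontriviality of $F_{A}$ together with the observation that the linear map sending $(F_{12},F_{\bar{1}\bar{2}},F_{1\bar{1}},F_{2\bar{2}},F_{1\bar{2}},F_{2\bar{1}})$ to $(\mathbf{E},\mathbf{B})$ defined in~(\ref{eq:4.23}) is invertible; hence $F_{A}\neq 0$ forces $(\mathbf{E},\mathbf{B})\neq 0$, and the SD/ASD relation $\mathbf{B}=c\,\mathbf{E}$ with $c\neq 0$ then forces $\mathbf{E}\neq 0$.

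I do not foresee a deep obstacle: the computation is little more than a substitution. The one item that needs careful bookkeeping is the conjugation of $\pm i$ in the Minkowski case, so that the imaginary half-lines in (b) are recorded consistently with the convention $\langle a,b\rangle = \mathrm{Tr}(ab^{\ast})$ used throughout and with the SD/ASD signs recorded in Corollary~\ref{magelec}.
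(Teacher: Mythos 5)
Your method---substituting the relations $\mathbf{B}=\pm\mathbf{E}$ (Euclidean) and $\mathbf{B}=\pm i\mathbf{E}$ (Minkowski) from Corollary \ref{magelec} into $\langle \mathbf{E},\mathbf{B}\rangle=\sum_j\Tr(E_jB_j^*)$---is exactly the paper's route; the paper offers no explicit proof and presents the corollary as an immediate consequence of Corollary \ref{magelec} and (\ref{eq:4.31}). Part (a) of your argument is fine, including the observation that nontriviality of the instanton plus invertibility of the map $F_A\mapsto(\mathbf{E},\mathbf{B})$ forces $\mathbf{E}\neq 0$ (you should add that faithfulness of $\Tr$ is needed to pass from $\mathbf{E}\neq 0$ to $\|\mathbf{E}\|^2>0$, and that nontriviality only guarantees this at points where $F_A(z)\neq 0$).

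However, the ``careful bookkeeping'' you defer in the Minkowski case is precisely where you should commit, because it changes the conclusion. With the convention $\langle a,b\rangle=\Tr(ab^*)$ fixed in (\ref{eq:3.2}) and used in (\ref{eq:4.31}), conjugate-linearity in the second slot gives, for the SD case $\mathbf{B}=i\mathbf{E}$,
\[
\langle \mathbf{E},\mathbf{B}\rangle=\overline{i}\,\|\mathbf{E}\|^2=-i\|\mathbf{E}\|^2\in -i\mathbb{R}_{\geq 0},
\]
and $+i\|\mathbf{E}\|^2$ for the ASD case; one can confirm this directly from (\ref{eq:4.31}) using $F_{2\overline{1}}=iF_{12}$, $F_{\overline{1}\overline{2}}=iF_{1\overline{2}}$, $F_{2\overline{2}}=iF_{1\overline{1}}$, which yields $\langle\mathbf{E},\mathbf{B}\rangle=-4i\Tr\bigl(F_{1\overline{1}}F_{1\overline{1}}^*+F_{12}F_{12}^*+F_{1\overline{2}}F_{1\overline{2}}^*\bigr)$. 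This is the opposite assignment of half-lines from the one printed in part (b), so your computation, carried out honestly, does not prove the statement as written; you should either flag the discrepancy or identify a different convention that reverses it. Separately, your reverse implication in (b) has a gap you half-acknowledge: the two half-lines $\pm i\mathbb{R}_{\geq 0}$ intersect at $0$, and $0$ is attainable (the paper's own example preceding Example \ref{dirac} exhibits an ASD Minkowski form with $\langle\mathbf{E},\mathbf{B}\rangle=0$), so membership in one half-line cannot by itself rule out the other case there; the ``$\Leftarrow$'' in (b) only holds where the inner product is nonzero.
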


\begin{example}
If $F_A$ is the ASD curvature form given in \textup{(\ref{ASDE})} under the Euclidean metric, then direct computation gives
\begin{align*}
\langle \mathbf{E}, \mathbf{B} \rangle&=-2\Tr([A_1, A_2^*][A_2, A_1^*])+[A_2, A_1^*][A_1, A_2^*])=-4\Tr([A_1, A_2^*][A_2, A_1^*]).
\end{align*} Furthermore, if $F_A=\omega_2-\omega_2^*$ is the ASD curvature form given preceding \textup{Example \ref{dirac}} under the Minkowski metric, then direct computation gives
\[\langle \mathbf{E}, \mathbf{B} \rangle=4i\Tr([A_2, A_1^*][A_1, A_2^*]-[A_1, A_2^*][A_2, A_1^*])=0.\]
\end{example}

\subsection{The Belavin-Polyakov-Schwartz-Tyupkin Solution}

A classical instanton solution to Yang-Mills equations on $\mathbb{R}^{4} \simeq \C^2$ is the so-called {\em Belavin-Polyakov-Schwartz-Tyupkin} (BPST) solution \cite{belavin1979quantum, belavin1975pseudoparticle, ritter2003gauge}. We let $\mathbf{SU}\left(2\right)$ be the {\em special unitary group}, i.e.,  group of $2\times 2$ unitary matrices with determinant $1$. The BPST solution is given by the Maurer-Cartan form associated with the map
 $\gamma: \mathbb{R}^{4}\rightarrow\mathbf{SU}\left(2\right)$ defined by
\begin{equation}
\gamma\left(x\right)= \frac{1}{|x|}\left(x_{0}-i\sum_{j=1}^3x_{j}\sigma_{j}\right),
\label{eq:5.1}
\end{equation}
where $|x|=(x_0^2+x_1^2+x_2^2+x_3^2)^{1/2}$ and 
\[
\sigma_{1}= \begin{pmatrix}
0 & 1\\
1 & 0
\end{pmatrix},
\sigma_{2}= \begin{pmatrix}
0 & -i\\
i & 0
\end{pmatrix},
\sigma_{3}= \begin{pmatrix}
1 & 0\\
0 & -1
\end{pmatrix},
\] are called the {\em Pauli matrices}.
\begin{thm}\label{BPST}
The connection form $A=\frac{|x|^2}{|x|^{2}+\mu}\gamma^{-1} d\gamma$, for some constant $\mu>0$, gives a Yang-Mills instanton solution $F_A$ under the Euclidean metric. 
\end{thm}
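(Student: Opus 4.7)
The plan is to exploit the fact that $B := \gamma^{-1}d\gamma$ is a Maurer-Cartan form, so the curvature of $A = \phi B$ with $\phi(x) := |x|^2/(|x|^2+\mu)$ is generated entirely by the scalar factor $\phi$. Since $\phi$ is real-valued and $\gamma$ is unitary-valued, $A = \phi\,\gamma^{-1}d\gamma$ falls under Lemma \ref{skew}, so $A$ is skew-Hermitian. Therefore, by Proposition \ref{SHsolution}(b), once $F_A$ is shown to be anti-self-dual with respect to the Euclidean metric, both source-free Yang-Mills equations follow automatically (the Bianchi identity $D_A F_A = 0$ coming for free).

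To compute $F_A$, I would use the Maurer-Cartan equation $dB = -B\wedge B$ to get
\[
F_A = d(\phi B) + \phi^2 B\wedge B = d\phi\wedge B + \phi\, dB + \phi^2 B\wedge B = d\phi\wedge B - \phi(1-\phi)\, B\wedge B.
\]
Direct differentiation yields $d\phi = \mu\, d|x|^2/(|x|^2+\mu)^2$ and $\phi(1-\phi) = \mu |x|^2/(|x|^2+\mu)^2$, so
\[
F_A = \frac{\mu}{(|x|^2+\mu)^2}\Bigl(d|x|^2 \wedge B - |x|^2\, B\wedge B\Bigr).
\]

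The substantive step is to verify the ASD property of the bracketed 2-form. Writing $\gamma_0 := x_0 I - i\sum_j x_j\sigma_j$, the identity $\gamma_0\gamma_0^* = |x|^2 I$ gives $\gamma^{-1} = \gamma_0^*/|x|$. Expanding $\gamma_0^* d\gamma_0$ with the Pauli identity $\sigma_j\sigma_k = \delta_{jk} I + i\epsilon_{jkl}\sigma_l$, the scalar trace parts telescope against the contribution of $d(1/|x|)$, leaving
\[
B = \frac{i}{|x|^2}\sum_{l=1}^3 \sigma_l\, \omega_l, \qquad \omega_l := x_l\, dx_0 - x_0\, dx_l + \epsilon_{ljk}\, x_j\, dx_k.
\]
Substituting into the formula above and using $\omega_l\wedge\omega_l = 0$, each matrix component of $F_A$ reduces to a real 2-form built from $d|x|^2\wedge \omega_l$ and $\epsilon_{jkl}\,\omega_j\wedge\omega_k$. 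These are, up to normalization, the anti-self-dual 't Hooft tensors $\bar{\eta}^l_{\mu\nu}$, so after translating into the complex coordinates $z_1 = x_0 + ix_1$, $z_2 = x_2 + ix_3$, the ASD conditions $F_{12} = F_{\bar{1}\bar{2}} = 0$ and $F_{1\bar{1}} + F_{2\bar{2}} = 0$ of Proposition \ref{esdmsd} are verified. Since the $\omega_l$ are linearly independent at each $x\neq 0$, the curvature $F_A$ is genuinely nontrivial, hence a bona fide instanton.

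The main obstacle is this final ASD verification: the Pauli/index bookkeeping for $\epsilon_{jkl}\,\omega_j\wedge\omega_k$ in the standard basis of $\Lambda^2(\mathbb{R}^4)$ is direct but tedious. A conceptually cleaner route is to identify $\mathbb{R}^4$ with the quaternions $\mathbb{H}$, under which $\gamma(x) = \bar{x}/|x|$ becomes the inverse of the inclusion $S^3 \hookrightarrow \mathbb{H}$; the pullback of the Maurer-Cartan form on $S^3$ then lands in the span of ASD 2-forms by structural reasons, avoiding any component computation. Either route, together with the skew-Hermiticity of $A$ and Proposition \ref{SHsolution}, closes the argument.
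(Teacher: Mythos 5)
Your proposal is correct and follows essentially the same route as the paper: both use Lemma \ref{skew} and Proposition \ref{SHsolution} to reduce the problem to showing that $F_A$ is ASD, and both exploit the Maurer-Cartan structure of $B=\gamma^{-1}d\gamma$ to split $F_A$ into a scalar function times a fixed $2$-form (your bracket $d|x|^2\wedge B-|x|^2\,B\wedge B$ is exactly the paper's $-d\eta$), whose anti-self-duality is then left to a direct componentwise check. The only cosmetic differences are that the paper derives $f=|x|^2/(|x|^2+\mu)$ by imposing the ODE that kills the non-ASD term proportional to $\eta-\eta^*$ rather than plugging the formula in, and it carries out the final ASD verification in the complex coordinates of Proposition \ref{esdmsd} rather than via 't Hooft tensors or quaternions.
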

This solution fits well in the complex variable setting developed here. To give a proof to the theorem, we start with a connection of the form $A\left(x\right):= f\left(x\right)\gamma^{-1}d\gamma$, where $f$ is a real-valued function. In light of Lemma \ref{skew}, the connection $A$ is skew-Hermitian. Thus, Proposition \ref{SHsolution} indicates that the associated curvature form $F_A$ is a solution to Yang-Mills equations if it is SD or ASD. In terms of the complex variables $z=\left(z_{1}, z_{2}\right)\in\mathbb{C}^{2}$ with $z_{1}=x_{0}+ix_{1}, z_{2}=x_{2}+ix_{3}$, we write
\begin{equation}
\gamma^{-1}d\gamma= \frac{-1}{2|z|^{2}}\left(\eta\left(z\right)-\eta^{*}\left(z\right)\right),
\label{eq:5.4}
\end{equation}
where \[\eta\left(z\right)=\left(\bar{z}_{1}d{z}_{1}-\bar{z}_{2}d{z}_{2}\right)\sigma_{1} -i\left(\bar{z}_{2}d{z}_{1}-\bar{z}_{1}d{z}_{2}\right)\sigma_{2} +\left(\bar{z}_{2}d{z}_{1}+\bar{z}_{1}d{z}_{2}\right)\sigma_{3}=:\sum_{j=1}^3\omega_j\sigma_j.\] The following observations are notable.
\begin{enumerate}[label=\textup{\arabic*)}, leftmargin=*]
\item $\omega_j, j=1, 2, 3,$ are solutions to Maxwell's equations \cite[Theorems 4.4, 4.15]{munshi2022complex}.

\item $\eta$ is pluriharmonic.

\item $d\eta$ is skew-Hermitian, and it is ASD under the Euclidean metric.
 \end{enumerate}
\noindent Claim 3) is not obvious, but it can be verified by direct computation. From this claim and (\ref{eq:5.4}), it follows that
\begin{align*}
d\left(\gamma^{-1}d\gamma\right) &= d\gamma^{-1}\wedge d\gamma \\
 &= d\left(\frac{-1}{2|z|^{2}}\right)\wedge \left(\eta -\eta^{*}\right) +\frac{-1}{2|z|^{2}}\left(d\eta -d\eta^{*}\right) \\
 &= d\left(\frac{-1}{2|z|^{2}}\right)\wedge \left(\eta -\eta^{*}\right) +\frac{-1}{|z|^{2}}d\eta.
\end{align*}
The associated curvature form is then
\begin{align}
F_{A} &= dA +A\wedge A \label{eq:5.5} \\
 &= df\wedge \gamma^{-1}d\gamma +f\left(d\gamma^{-1}\wedge d\gamma\right) +f^{2}\gamma^{-1}d\gamma\wedge \gamma^{-1}d\gamma. \nonumber
\end{align}
Now recall that $d\gamma^{-1}=-\gamma^{-1}d\gamma \cdot \gamma^{-1}$. This allows us to rewrite and simplify (\ref{eq:5.5}) as
\begin{equation}\label{bpst}
F_{A}= \left(\frac{-1}{2|z|^{2}}df +\left(f-f^{2}\right)d\left(\frac{-1}{2|z|^{2}}\right) \right)\wedge \left(\eta -\eta^{*}\right)+\frac{f-f^{2}}{|z|^{2}}d\eta.
\end{equation}
Since $d\eta$ is ASD, in order for $F_{A}$ to be ASD, we may set
\begin{equation}
\frac{1}{2|z|^{2}}df +\left(f-f^{2}\right)d\left(\frac{1}{2|z|^{2}}\right) =0.
\label{eq:5.6}
\end{equation}
Setting $\lambda=\frac{1}{2|z|^{2}}$, the equation becomes
\begin{equation}
\lambda df=(f^2-f)d\lambda.
\label{eq:5.2}
\end{equation}
Taking into account the convergence of the connection form $A$ as $|z|\to \infty$, we prefer that $f$ satisfies $|f(z)|\leq 1$ when $|z|$ becomes large. Then (\ref{eq:5.2}) gives $f\left(z\right)=\frac{|z|^2}{|z|^{2}+\mu}$ for some positive constant $\mu$, completing the proof of Theorem \ref{BPST}.

Equations (\ref{bpst}) and (\ref{eq:5.6}) indicate that the solution in Theorem \ref{BPST} is 
\begin{equation}\label{BPST2}
F_A= \frac{f-f^{2}}{|z|^{2}}d\eta=\frac{\mu}{(|z|^2+\mu)^2}d\eta=: p(z)d\eta.
\end{equation} Two more facts are worth noting:

\noindent 4) $F_A$ is skew-Hermitian due to fact 3), conforming with Proposition \ref{SHsolution}.

\noindent 5) $F_A$ converges to $0$ as $z\to \infty$, showing that it is well-defined on $\C^2\cup \{\infty\}$ which can be identified with the $4$-dimensional unit sphere $S^4$. This fact, in part, motivated the interest of studying Yang-Mills equations on $4$-dimensional manifolds.

Finally, we take a look at the ``electric'' and ``magnetic'' components of the BPST instanton solution. Using the function $p$ defined in (\ref{BPST2}), we write
\begin{equation*}
F_{A} = p\big(\sigma_{1}\left(dz_{1}\wedge d\bar{z}_{1}- dz_{2}\wedge d\bar{z}_{2}\right)+i\sigma_{2}\left(dz_{2}\wedge d\bar{z}_{1}- dz_{1}\wedge d\bar{z}_{2}\right)+\sigma_ {3}\left(dz_{2}\wedge d\bar{z}_{1}+ dz_{1}\wedge d\bar{z}_{2}\right)\big).
\end{equation*}
It follows that $F_{12}= F_{\overline{1}\overline{2}}=0$ and 
\begin{align}\label{eq:5.7}
F_{1 \overline{1}} = p\sigma_{1}, \ \ F_{1 \overline{2}} = p\left(-i\sigma_{2}+ \sigma_{3}\right),\ \ 
F_{2 \overline{1}} = p\left(i\sigma_{2}+ \sigma_{3}\right),\ \ F_{2 \overline{2}} &= -p\sigma_ {1}.
\end{align}
Then, in view of (\ref{eq:4.23}), the components of $\mathbf{E}$ and $\mathbf{B}$ are given by
\begin{equation}
E_{j}= 2ip\sigma_{j}, \text{  } B_{j} =-2ip\sigma_{j},\ \ \ j= 1, 2, 3.
\label{eq:5.8}
\end{equation}
Since $E_{j}B_{j}^{*}=-4p^{2}\sigma_{j}\sigma_{j}^{*}=-4p^2$, it follows that 
\begin{align} \label{eq:5.9} 
\langle \mathbf{E}, \mathbf{B} \rangle &= -4p^{2}\text{Tr}\left(\sum_{j=1}^{3}\sigma_{j}^{2}\right) \nonumber \\
&= -4p^{2}\text{Tr}\left(3I_{2}\right)= -24p^{2}. \nonumber
\end{align}
Note that this inner product is negative for nonzero $z\in\mathbb{C}^{2}$, conforming with Corollary \ref{EB}.

\section{Concluding Remarks}

Complex analysis is one of the most useful tools in mathematics, and it is playing an increasingly important role in modern physics. It is thus meaningful to reinterpret some fundamental theories in physics from a complex variable perspective, for instance special relativity, Maxwell's equations, and Yang-Mills theory whose original formulations were in real variables. 
 This paper is a non-commuting generalization of the approach in \cite{munshi2022complex} on Maxwell's equations. It is a part of the authors' ongoing project to explore greater applications of complex analysis to physics theories.\\

{\bf Acknowledgments.} This paper is in part based on the second-named author's doctoral dissertation submitted to the University at Albany. He is grateful to the Department of Mathematics and Statistics for providing him an opportunity to pursue his research interests.


\begin{thebibliography}{99}

\bibitem{ablowitz1993self}M. J. Ablowitz, S. Chakravarty, and L. A. Takhtajan: {\em A self-dual Yang-Mills hierarchy and its reductions to integrable systems in 1+1 and 2+1 dimensions}, Commun. Math. Phys. {\bf 158} (2) (1993), 289-314.

\bibitem{ADHM} M. F. Atiyah, V. G. Drinfeld, N. J. Hitchin and Yu. I. Manin: {\em Construction of Instantons}, Phys. Lett. A {\bf 65} (3) (1978), 185-187.

\bibitem{baez1994gauge}J. Baez and J. P. Muniain: \emph{Gauge Fields, Knots, and Gravity}, vol. 4, World Scientific, London, 1994.

\bibitem{belavin1975pseudoparticle}A. A. Belavin, A. M. Polyakov, A. S. Schwartz, and Y. S. Tyupkin: {\em Pseudoparticle solutions of the Yang-Mills equations}, Phys. Lett. B {\bf 59} (1) (1975), 85-87.

\bibitem{belavin1979quantum}A. Belavin, V. Fateev, A. Schwarz, and Y. S. Tyupkin: {\em Quantum fluctuations of multi-instanton solutions}, Phys. Lett. B {\bf 83} (3-4) (1979), 317-320.

\bibitem{CL} T. P. Cheng and L. F. Li: {\em Gauge Theory of Elementary Particle Physics}, Oxford University Press, 1983.

\bibitem{darling1994differential}R. W. R. Darling: \emph{Differential Forms and Connections}, 1st ed., Cambridge University Press, New York, 1994.

\bibitem{davidson1996calgebras}K. R. Davidson: \emph{$C^{*}$-algebras by Example}, Fields Institute Monographs, vol. 6, American Mathematical Society, Providence, RI, 1996.

\bibitem{Don83} S. K. Donaldson: {\em An application of gauge theory to four-dimensional topology}, J. Diff. Geom. {\bf 18} (2) (1983), 279-315.

\bibitem{Don90} S. K. Donaldson and P. B. Kronheimer: {\em The geometry of four-manifolds}, Oxford Math. Mono., Oxford University Press, 1990.

\bibitem{douglas2018hermitian}R. G. Douglas and R. Yang: {\em Hermitian geometry on resolvent set}, Operator Theory, Operator Algebras, and Matrix Theory, 167-183, Springer, 2018.

\bibitem{flanders1963differential} H. Flanders: \emph{Differential forms with applications to the physical sciences}, Academic Press, Inc., New York, 1963.

\bibitem{FU84} D.S. Freed and K. Uhlenbeck: {\em Instantons and four-manifolds}, Springer-Verlag, New York, 1984.

\bibitem{Fu} B. Fuglede: {\em A commutativity theorem for normal operators}, Proc. Natl. Acad. Sci. (USA) {\bf 369} (1) (1950), 35-40.

\bibitem{huybrechts2006complex}D. Huybrechts: \emph{Complex Geometry: An Introduction}, Springer Science \& Business Media, Heidelberg, Germany, 2006.

\bibitem{JW} A. Jaffe and E. Witten: {\em Quantum Yang-Mills theory}, The Millennium Prize Problems (edited by J. Carlson, A. Jaffe, and A. Wiles), 129-152, American Mathematical Society, 2006.

\bibitem{Kr} S. G. Krantz, {\em Function theory of several complex variables (2nd ed.)}, AMS Chelsea Publishing, Amer. Math. Soc., Providence, Rhode Island, 1992.

\bibitem{lindell2004differential}I. V. Lindell: \emph{Differential Forms in Electromagnetics}, 1st ed., Wiley-IEEE Press, Piscataway, New Jersey, 2004.

\bibitem{munshi2020maxwell}S. Munshi: {\em Maxwell's equations and Yang-Mills equations in complex variables: New perspectives}, ProQuest Dissertations Publishing, 1-69 (2020).

\bibitem{munshi2022complex}S. Munshi and R. Yang: {\em Complex solutions to Maxwell's equations}, Comp. Anal. Synerg. {\bf 8} (2) (2022), 1-14.

\bibitem{murphy2014calgebras}G. J. Murphy: \emph{$C^{*}$-algebras and Operator Theory}, Elsevier Science, USA, 2014.

\bibitem{range2013holomorphic}R. M. Range: \emph{Holomorphic functions and integral representations in several complex variables}, vol. 108, Springer Science \& Business Media, New York, 2013.

\bibitem{ritter2003gauge}W. G. Ritter: {\em Gauge theory: Instantons, monopoles, and moduli spaces}, arXiv:math-ph/0304026v1, 2003.

\bibitem{Sch88} R. Schimming: {\em On constant solutions of the Yang-Mills equations}, Arch. Math. {\bf 24} (2) (1988), 65-73.

\bibitem{tu2017differential}L. W. Tu: \emph{Differential Geometry, Connections, Curvature, and Characteristic Classes}, Springer, Cham, Switzerland, 2017.

\bibitem{westenholz1981differential}C. Von Westenholz, \emph{Differential Forms in Mathematical Physics}, Elsevier Science, Amsterdam, 1981.

\bibitem{Wi94} E. Witten: {\em Monopoles and four-manifolds}, Math. Res. Lett. {\bf 1} (6) (1994), 769-796.

\bibitem{WY68} T. T. Wu and C. N. Yang: {\em Properties of matter under unusual conditions}, Interscience (edited by H. Mark and S. Fernbach), New York, 1968.

\bibitem{YM54} C. N. Yang and R. L. Mills: {\em Conservation of isotopic spin and isotopic gauge invariance}, Phys. Rev. {\bf 96} (1954), 191-195.

\end{thebibliography}
\end{document}